\newcommand{\argmax}[1]{{\operatorname{arg}\,\max_{#1}}\,}
\newtheorem{lemma}{Lemma}
\newtheorem{theorem}{Theorem}
\newtheorem{corollary}{Corollary}
\newtheorem{remark}{Remark}
\begin{document}

\title{Joint Power Control and Rate Allocation enabling Ultra-Reliability and Energy Efficiency in SIMO Wireless Networks}

\author{
	\IEEEauthorblockN{Onel L. Alcaraz L\'{o}pez, \IEEEmembership{Student Member, IEEE},
	Hirley Alves, \IEEEmembership{Member, IEEE}, \\ 
	Matti Latva-aho, \IEEEmembership{Senior Member, IEEE}
}
	\thanks{Authors are with the Centre for Wireless Communications (CWC), University of Oulu, Finland. \{onel.alcarazlopez, hirley.alves, matti.latva-aho\}@oulu.fi.}	
	\thanks{This research has been financially supported by Academy of Finland, 6Genesis Flagship (Grant n.318937) and ee-IoT (Grant n.319008), and Academy Professor (Grant n.307492), and the Finnish Funding Agency for Technology and Innovation (Tekes), Bittium Wireless, Keysight Technologies Finland, Kyynel, MediaTek Wireless, Nokia Solutions and Networks.}
}

\maketitle

\begin{abstract}
	Coming cellular systems are envisioned to open up to new   services with stringent reliability and energy efficiency requirements. In this paper we focus on the joint power control and rate allocation problem in Single-Input Multiple-Output (SIMO) wireless systems with Rayleigh fading and stringent reliability constraints. We propose an allocation scheme that maximizes the energy efficiency of the system while making use only of average statistics of the signal and interference, and the number of antennas $M$ that are available at the receiver side.  
	We show the superiority of the Maximum Ratio Combining (MRC) scheme over Selection Combining (SC) in terms of energy efficiency, and prove that the gap between the optimum allocated resources converges to $(M!)^{1/(2M)}$ as the reliability constraint becomes more stringent. Meanwhile, in most cases MRC was also shown to be more energy efficient than Switch and Stay Combining (SSC) scheme, although this does not hold only when operating with extremely large $M$, extremely high/small average signal/interference power and/or highly power consuming receiving circuitry.
	 Numerical  results show the feasibility of the ultra-reliable operation when $M$ increases, while greater the fixed power consumption and/or drain efficiency of the transmit amplifier is, the greater the optimum transmit power and rate.
\end{abstract}
\begin{IEEEkeywords}
	power control, rate allocation, SIMO, energy efficiency, ultra-reliability.
\end{IEEEkeywords}
\section{Introduction}\label{int}
The advent of fifth generation (5G) of wireless systems opens up new possibilities and gives rise to new use cases with stringent reliability requirements, e.g.,  Ultra Reliable Low Latency Communication paradigm (URLLC) \cite{Popovski.2014}. Some examples are \cite{Schulz.2017}:
 factory automation, where the maximum error probability should be around
$10^{-9}$; smart grids ($10^{-6}$), professional audio ($10^{-6}$), etc.
Meeting such requirements is not an easy task and usually  various  diversity sources are necessary in order to attain the ultra-reliability region \cite{Popovski.2017}. The problem becomes even more complicated if stringent delay constraints have to be satisfied\footnote{In general, there is a fundamental trade-off between delay and reliability metrics due to the fact that by relaxing one of them, we can enhance the performance of the other. In fact, Long-Term Evolution (LTE) already offers guaranteed bit rate that can support packet error rates down to $10^{-6}$, however, the delay budget goes up to $300$ms including radio, transport and core network latencies \cite{Singh.2017}, which is impractical for many real time applications.}, and/or if power consumption is somewhat limited as is the case in systems of low-power devices such as sensors or tiny actuators. The interplay between the diverse requirements makes physical layer design of such systems very complicated \cite{Hyoungju.2017}.

The principles for supporting URLLC are discussed in \cite{Popovski.2017} by considering, for instance, the design of packets and access protocols. In \cite{Simsek.2016,Aijaz.2017} authors outline the key technical requirements and architectural approaches pertaining to wireless access protocols, radio resource management aspects, next generation core networking capabilities, edge-cloud, and edge artificial intelligence capabilities, and propose first avenues for
specific solutions to enable the Tactile Internet revolution. 
The trade-off between reliability, throughput, and latency, when transmitting short packets in a multi-antenna setup, is identified in  \cite{Durisi.2016}. Moreover, authors present some bounds that allow to determine the optimal number of transmit antennas and the optimal number of time-frequency diversity branches that maximize the rate. 
Shared diversity resources are explored deeply in \cite{Kotaba.2018}  when multiple  connections are only intermittently active, while cooperative communications are also considered in literature, e.g., \cite{Nouri.2017}, and \cite{Lopez.2017,Lopez2.2017} for wireless powered communications, as a viable alternative to direct communication setups  \cite{Lopez3.2017}.

Intelligent resource allocation strategies are of paramount importance to provide efficient ultra-reliable communications. In \cite{She.2017}, the network availability for supporting the quality of service of users is investigated, while some tools for resource optimization addressing the  delay and packet loss components in URLLC are presented. 
Energy-efficient design of fog computing networks supporting  Tactile Internet applications is the focus of the research in \cite{Xiao.2018} where the workload is allocated such that it minimizes the response time under the given power efficiency constraints of fog nodes; while in \cite{Hou.2018} authors propose a resource management protocol to meet the stringent delay and reliability requirements while minimizing the bandwidth usage. 
A power control protocol is presented in \cite{Lopez3.2018} for a single-hop ultra-reliable wireless powered system and the results show the enormous impact on improving the system performance, in terms of error probability and energy consumption. The minimum energy required to transmit $k$ information bits over a Rayleigh block-fading channel in a multi-antenna setup with no interference and with a given reliability is investigated in \cite{Yang.2017}.
On the other hand, link adaptation
optimization through an adaptive modulation and coding scheme, considering errors in both data and feedback channels, is proposed in \cite{Shariatmadari.2016}, and authors show that the performance of their proposed scheme approximates to the optimal.
An energy efficient power allocation strategy for the Chase Combining (CC) Hybrid Automatic Repeat Request (HARQ) and Incremental Redundancy (IR) HARQ  setup is suggested in \cite{Dosti.2017} and \cite{Avranas.2018}, respectively; while allowing to reach any outage probability target in the finite block-length regime. 
In \cite{Shehab.2017}, a detailed analysis of the effective energy efficiency for delay constrained networks in the finite blocklength regime is presented, and the optimum power allocation strategy is found. Results reveal that Shannon's model underestimates the optimum power when compared to the exact finite blocklength model. Authors in \cite{Farayev.2015} formulate a joint power control and discrete rate adaptation problem with the objective of minimizing the time required for the concurrent transmission of a set of sensor nodes while satisfying their delay, reliability and energy consumption requirements. In \cite{Lopez.2018} we focused on the rate allocation problem in downlink cellular networks with Rayleigh fading and stringent reliability constraints. The allocated rate depends on the target reliability, and on average statistics of the signal and interference and the number of antennas that are available at the receiver side. We have shown the feasibility of the ultra-reliable operation when the number of antennas increases, and also that the results remain valid even when operating with stringent delay constraints as far as the amount of information to be transmitted is not too small. The rate allocation strategy is extended to downlink Non-orthogonal multiple access (NOMA) scenarios in \cite{Lopez2.2018}, while we attain the necessary conditions so that NOMA overcomes the orthogonal multiple access (OMA) alternative. Additionally, we discuss the optimum strategies for the 2-user NOMA setup when operating with equal rate or maximum sum-rate goals. 

In this paper we develop further \cite{Lopez.2018} by generalizing some of its main results to the case where the transmit power is another degree of freedom that is exploited to meet the reliability requirements while maximizing the energy efficiency of the system. Therefore, we focus on joint power control and rate allocation strategies that maximize the system energy efficiency in ultra-reliable system with multiple antennas at receiver side, thus, a Single-Input Multiple-Output (SIMO) system.
There is no distinction between uplink and downlink, but notice that SIMO setups match much better uplink scenarios where the receiver is usually equipped with better hardware capabilities, e.g., data aggregators/gateways or base stations in cellular communications\footnote{Notice that some URLLC applications, e.g., tactile Internet, may require the joint design of downlink and uplink communications (check for instance \cite{She.2018}). Such analysis is out of the scope of this paper; however, as future work we intend to extend our results for the Multiple-Input Multiple-Output (MIMO) scenario, while considering the mentioned joint downlink and uplink design.}. The system is composed of an ultra-reliable link under Rayleigh fading, being interfered by multiple transmitters operating in the neighborhood, thus, differently from the setups analyzed in \cite{She.2017,Lopez3.2018,Yang.2017,Shariatmadari.2016,Dosti.2017,Farayev.2015,Lopez2.2018}. The main contributions of this work can be listed as follows:
 \begin{itemize}
 	\item we propose a joint power control and rate allocation scheme that meets the stringent reliability constraints of the system while maximizing the energy efficiency. The allocated resources depend only on the target reliability, and on average statistics of the signal and interference and the number of antennas that are available at the receiver side. In addition to Selection Combining (SC) and Maximum Ratio Combining (MRC) schemes, and different from \cite{Lopez.2018}, we also consider the Switch and Stay Combining (SSC) technique; while we do not make distinction between uplink and downlink and our goal is to maximize the energy efficiency of the system by adjusting both the transmit power and rate;	
 	\item  we attain accurate closed-form approximations for the resources, optimum transmit power and rate, to be allocated when the receiver operates using the SC, SSC and MRC schemes;
 	\item we show that the optimum transmit rate and power are smaller when operating with SSC than with SC, and the ratio gap tends to be inversely proportional to the square root of a linear function of the number of antennas $M$ at the receiver; however, such allocation provides always positive gains in the energy efficiency performance;
 	\item we show the superiority of MRC over SC in terms of energy efficiency, since it allows operating with greater/smaller transmit rate/power. We proved that the performance gap between the optimum allocated resources for these schemes in the asymptotic ultra-reliable regime, where the outage probability tends to 0, converges to $(M!)^{1/(2M)}$. Meanwhile, in most cases MRC was also shown to be more energy efficient than SSC, although this does not hold only when operating with extremely large $M$, extremely high/small average signal/interference power and/or highly power consuming receiving circuitry;
 	\item we show that the greater the fixed power consumption and/or drain efficiency of the transmit amplifier, the greater the optimum transmit power and rate. However, the energy efficiency decreases/increases with the power consumption/drain efficiency. Numerical  results also show the feasibility of the ultra-reliable operation when the number of antennas increases.
 \end{itemize}

Next, Section \ref{system} overviews the system model and assumptions. Section \ref{sec2} introduces the performance metrics and the optimization problem, while in Section \ref{sec3} we characterize the Signal-to-Interference Ratio (SIR) distribution for each of the receive combining schemes. In Section~\ref{joint} we find the resource allocation strategy that maximizes the system energy efficiency subject to stringent reliability constraints. Finally, Section~\ref{results} presents the numerical results and  Section \ref{conclusions} concludes the paper.

\textit{Notation:} Boldface lowercase letters denote vectors,  for instance, $\mathbf{x}=\{x_i\}$, where $x_i$ is the $i$-th element of $\mathbf{x}$.
$X\!\sim\!\mathrm{Exp}(1)$ is a normalized exponential distributed random variable with Cumulative Distribution Function (CDF)  $F_X(x)=1-e^{-x}$, $x\ge 0$, while $Y\!\sim\! L(p,q)$ is a Lomax random variable with Probability Density Function (PDF) $f_Y(y|p,q)\!=\!q\big(1\!+\!\frac{q}{p}y\big)^{-1\!-\!p}\!$ and  CDF $F_Y(y|p,q)\!=\!1\!-\!\big(1\!+\!\frac{q}{p}y\big)^{-p}$, $y\ge 0$, and $Z\sim \mathcal{P}(z|p,q)$ is a Pareto I random variable with PDF $f_Z(z|p,q)=p q^p z^{-p-1},\ z\ge q$. $\mathbb{P}[A]$ is the probability of event A, $\mathbb{E}[\!\ \cdot\ \!]$ denotes expectation, while $\lfloor x \rfloor$ denotes the largest integer that does not exceed $x$.
Also, $Q^{-1}(\cdot)$ is the inverse Q-function, 
$\Gamma(p,x)=\int_x^{\infty}t^{p-1}e^{-t}\mathrm{d}t$ is the incomplete gamma function, while $\mathcal{W}(x)$ is the main branch of the Lambert W function \cite{Corless.1996}, which satisfies $\mathcal{W}(x)\ge -1$ for $x\in \mathbb{R}$ and it is defined in $-1/e\le x<0$.
\section{System Model}\label{system}
\begin{figure}[t!]
	\centering
	\subfigure{\includegraphics[trim={2.4cm 0 0 0},clip,width=0.45\textwidth]{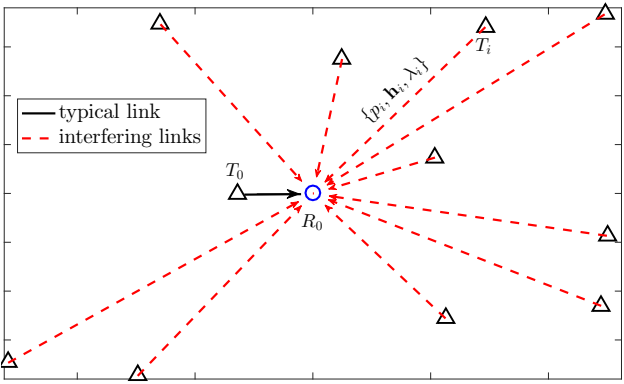}}
	\vspace*{-2mm}
	\caption{Illustration of the system model with $\kappa=10$ interfering nodes, where $p_i,\ \mathbf{h}_i$ and $\lambda_i$ are the transmit power of $T_i$, the power channel gain vector and path-loss from $T_i$ to $R_0$, respectively.}		
	\vspace*{-2mm}
	\label{Fig1}
\end{figure}
Consider the scenario in Fig.~\ref{Fig1}, where a collection of $\kappa+1$ nodes,  $T_i,\ i=0,1,\cdots,\kappa$, are spatially distributed  in a given area and using the same spectrum resources, e.g., time and frequency, when transmitting to their corresponding receivers. We focus on the performance of link $0$, which we refer to as the typical link, and denote $T_0$ and $R_0$ as its transmitter and receiver node, respectively; while the transmit rate is denoted as $r_0$. 
Meanwhile $T_i\rightarrow R_0$ with $i=1,\cdots,\kappa$ denotes each of the interfering links. 
We assume a SIMO setup where $R_0$ is equipped with $M$ antennas sufficiently separated such that the fading affecting the received signal in each antenna can be assumed independent and  Channel State Information (CSI) is available at $R_0$,\footnote{$T_0$ may send some pilot symbols as overhead when transmitting to $R_0$ for the latter be able to estimate the CSI. be able to estimate the CSI. Notice that this overhead can be accounted as part of the constraint $r_{\min}$ (check Section~\ref{sec2}); while although we assume perfect CSI, the imperfectness may be modeled as a loss in the SIR as in \cite{She.2018,Lopez2.2017}.} hence full gain from spatial diversity can be attained\footnote{Diversity is an important building block for supporting URLLC \cite{Popovski.2017}, and herein we focus simply on spatial diversity taking advantage of the multiple receive antennas. Notice that other diversity sources such as frequency, time and/or polarization could also be available \cite{Goldsmith.2005}, and our results and methodology can be easily re-utilized/extended to cover such scenario.}. Particularly, one of the following combining schemes is utilized at $R_0$:
	\begin{itemize}
		\item \textit{SC:} The combiner outputs the signal on the antenna branch with the highest SIR.  Since only one branch is used at a time, SC could require just one receiver that is switched into the active antenna branch. However, a dedicated receiver on each branch may be needed for systems that transmit continuously in order to simultaneously and continuously monitor SIR on each branch. In this work we refer specifically to the latter SC implementation.
		Notice that with SC the resulting SIR is equal to the maximum SIR of all branches \cite{Goldsmith.2005}; 
		\item \textit{SSC:} This scheme strictly avoids the need for a dedicated receiver on each branch, thus reducing the power consumption, by scanning each of the branches in sequential order and outputting  the first signal with SIR above a threshold. Once a branch is chosen, as long as the SIR on that branch remains above the desired threshold, the combiner outputs that signal; while when the SIR on the selected branch falls below the threshold, the combiner switches to another branch \cite{Goldsmith.2005};
		\item \textit{MRC:} The combiner outputs a weighted sum of the signals coming from all branches. We assume that $R_0$ can perfectly estimate also the interference power level in every branch, thus, the optimum combining weight for each branch using such information is obtained by correcting 	the phase-mismatch of the received signal and scaling it by the interference level. In this case the resulting SIR is equal to the sum of SIRs on each branch \cite{Tanbourgi.2014}.
	\end{itemize}
We focus our attention to above combining schemes, while other possibilities include the Equal Gain Combining (EGC), which co-phases the signals on each branch and then combines them with equal weighting; and several hybrid schemes \cite{Simon.2005}. In general, these schemes are easier to implement compared to MRC but also perform slightly worse in terms of reliability.\footnote{For instance, the error performance of EGC typically exhibits less than 1 dB of power penalty compared to MRC \cite{Goldsmith.2005}.} In any case, such schemes lead to cumbersome analytical analysis, which we leave for future work.

Additionally, each link is characterized by a triplet $\{p_i,\mathbf{h}_i,\lambda_i\},\ i=0,1,\cdots,\kappa$, where $p_i\in[p_{\min},p_{\max}]$ is the transmit power of $T_i$ which is constrained to be not smaller and not greater than $p_{\min}$ and $p_{\max}$, respectively; $\mathbf{h}_i=[h_{i,1}, h_{i,2},\cdots, h_{i,M}]$ is the power channel gain vector with normalized and exponentially distributed entries such that $h_{i,j}\sim\mathrm{Exp}(1)$, e.g., Rayleigh fading; while $\lambda_i$ is the path-loss of the link. Meanwhile, we consider an interference-limited wireless system given a dense spatial deployment where the impact of noise is neglected\footnote{However, the impact of the noise could easily be incorporated without substantial changes.}; 
thus, the SIR perceived in the $j-$th antenna of $R_0$ is 
\begin{align}
\mathrm{SIR}_j&=\frac{p_0\lambda_0h_{0,j}}{\sum_{i=1}^\kappa p_i \lambda_i h_{i,j}}. \label{snri}
\end{align}
\section{System Performance Targets}\label{sec2}
Our goal in this work is to allocate power and rate at $T_0$ in order to maximize the system energy efficiency while meeting the URLLC requirements. Therefore, let us define these performance metrics.
\subsection{Reliability \& Latency}\label{A}
Reliability is defined as the probability that a data of given size is successfully transferred within a given time period \cite{Bennis.2018}. Hence, reliability and latency are intrinsically connected concepts. In fact, the typical URLLC use case demands transmitting a layer 2 protocol data unit of 32 bytes within 1 ms with $1-10^{-5}$ success probability \cite{3GPP.2017}.

During the last years, significant progress has been made within the information theory community to address the problem of quantifying the achievable rate while accounting for stringent reliability and latency constraints in a satisfactory way. In that sense, works in \cite{Polyanskiy.2010,Yang.2013} have identified these trade-offs for both Additive White Gaussian Noise (AWGN) and fading channels, respectively. Specifically, authors in \cite{Polyanskiy.2010} show that to sustain a desired error probability $\varepsilon$ at a finite blocklength $n$, one pays a penalty on the rate (compared to the Shannon's channel capacity) that is proportional to $Q^{-1}(\varepsilon)/\sqrt{n}$; while under quasi-static fading impairments authors in \cite{Yang.2013} show through numerical evaluation that the convergence to the outage capacity is much faster as $n$ increases than in the AWGN case. In fact, it has been shown in \cite{Mary.2016} for Nakagami-m and Rice channels that quasi-static fading makes disappear the effect of the finite blocklength when i) the rate is not extremely small
and ii) line of sight parameter is not extremely large.	For the scenario under discussion in the current work we have already corroborated in \cite{Lopez.2018} that by using the asymptotic outage probability instead of the finite blocklength error probability as the reliability performance metric, the results remain valid as far as the transmission rate is not too small. Therefore, in this work we leave aside the finite blocklength formulation (although the same methodology as in \cite{Lopez.2018} can be utilized) and just consider the outage probability. Notice that by limiting $r_0$ to be above some $r_{\min}$, the latency constraint is implicitly considered.

Considering the receive diversity schemes discussed in previous section, an outage event as a function of $r_0$ and $p_0$ is defined as $\mathcal{O}(r_0,p_0)\stackrel{\triangle}{=}\{f(\mathbf{SIR})<2^{r_0}-1\}$, where
\begin{align}\label{fSIR}
f(\mathbf{SIR})=\left\{ \begin{array}{ll}
\max\limits_{j=1,\cdots,M} \mathrm{SIR}_j, &   \mathrm{for\ SC\ and\ SSC} \\
\sum\limits_{j=1}^{M}\mathrm{SIR}_j, & \mathrm{for\ MRC}
\end{array}
\right..
\end{align}
Notice that in delay-limited systems with fixed transmit rate $r_0$ as in our case both SC, and SSC with threshold $2^{r_0}-1$, share the same outage performance. This is because iff the maximum SIR exceeds the threshold $2^{r_0}-1$, SSC will find at least one antenna branch with SIR above it, hence, no outage.
Finally, the outage probability can not exceed a given reliability constraint specified by the maximum outage probability $\varepsilon\ll 1$. This is $\mathbb{P}[\mathcal{O}(r_0,p_0)]\le\varepsilon$.
\subsection{Energy Efficiency}\label{B}
The energy efficiency is defined as the ratio between the throughput and the power consumption and it tells us the number of bits that can be transmitted per Hertz while consuming a joule unit. Considering a linear power consumption model as in \cite{Shuguang.2004,Shehab.2017}, we can write the energy efficiency of the system as  
\begin{align}\label{EE}
\mathrm{EE}(r_0,p_0,\beta\!=\!\frac{r_0 \big(1-\mathbb{P}[\mathcal{O}(r_0,p_0)]\big)}{p_0/\eta\! +\! p_t\!+\!M^{\beta}p_r\!+\!p_{syn}}\ (bits/J/\mathrm{Hz}),
\end{align}
where $\eta$ is the drain efficiency of the amplifier at $T_0$, $p_{syn}$ is the power consumption value for the frequency synthesizers at $T_0$ and $R_0$,\footnote{For the case of $R_0$ we assume that the frequency synthesizer is shared among all the antenna paths, thus, the consumption of this block does not depend on $M$ \cite{Shuguang.2004}.} while $p_t$ and $p_r$ are  the power consumed by the remaining internal circuitry for transmitting and receiving, respectively. Additionally, 
\begin{align}\label{beta}
\beta=\left\{ \begin{array}{ll}
0, &   \mathrm{for\ SSC} \\
1, & \mathrm{for\ SC\ and\ MRC}
\end{array}
\right.
\end{align}
since for SC and MRC the consumption of the internal circuitry grows linearly with $M$ because all the antenna branches are active, while for SSC only one is active\footnote{For SSC we do not take into account the sleep-mode power consumption of the circuitry in the inactive antenna branches, neither the power consumption when scanning the antennas trying to find one that provides a SIR value above the threshold $2^{r_0}-1$. Hence, the real power consumption of SSC may exhibit a weak dependence on $M$ but we ignore it here for simplicity, then, the energy efficiency performance of the SSC discussed here can be seen as an upper bound for the performance of a practical SSC implementation.}. 
\subsection{Problem Formulation}\label{C}
According to the performance metrics specified in Subsection~\ref{A} and \ref{B} we present in \eqref{eq4} the joint power control and rate allocation problem that maximizes the energy efficiency subject to an ultra-reliability constraint.
\begin{subequations}\label{eq4}
	\begin{alignat}{2}
	\mathbf{P1:}\qquad &\argmax{p_0,r_0}       &\qquad& 
	\mathrm{EE}(r_0,p_0,\beta) \label{eq4:a}\\
	&\text{s.t.} &      & \mathbb{P}[\mathcal{O}(r_0,p_0)]\le \varepsilon,\label{eq4:b}\\
	&                  &      & r_0\ge r_{\min},\label{eq4:c}\\
	& & & p_{\min}\le p_0 \le p_{\max}. \label{eq4:d}
	\end{alignat}
\end{subequations}
We would like to point out that the constraints on $p_0$ may be given by hardware limitations but also/alternatively $p_{\max}$ could be chosen to guarantee that certain interference thresholds on neighboring networks are not overpassed. Additionally, and as commented before in Subsection~\ref{A}, a delay constraint $t_{\max}$ can be implicitly considered within $r_{\min}$ by setting $r_{\min}=D/(B\times t_{\max})$ where $B$ (Hz) is the  bandwidth and $D$ (bits) is the data to be transmitted.

Fig.~\ref{Fig2} shows the feasible region when solving \textbf{P1}. As $p_0$ increases $T_0$ is capable of transmitting with a larger bit rate for the same reliability target, thus, the curve $r_0$ vs $p_0$ with $\mathbb{P}[\mathcal{O}(r_0,p_0)]=\varepsilon$ is increasing on $p_0$ as shown in the figure. Let us focus the attention on the red point on the curve $\mathbb{P}[\mathcal{O}(r_0,p_0)]=\varepsilon$, and notice that for any positive $\alpha_1$ and $\alpha_2$, $\mathbb{P}[\mathcal{O}(r_0-\alpha_1,p_0+\alpha_2)]<\varepsilon$ holds, but according to \eqref{EE} and based on the fact that $\varepsilon\ll 1$ we have that  $\mathrm{EE}(r_0-\alpha_1,p_0+\alpha_2,\beta)<\mathrm{EE}(r_0,p_0,\beta)$, thus, the solution of \textbf{P1} lies on the curve $\mathbb{P}[\mathcal{O}(r_0,p_0)]=\varepsilon$. Additionally, \textbf{P1} has a non-empty solution when $\mathbb{P}[\mathcal{O}(r_{\min},p_{\max})]\le\varepsilon$.
\begin{figure}[t!]
	\centering
	\subfigure{\includegraphics[trim={0.2cm 0 0 0},clip,width=0.45\textwidth]{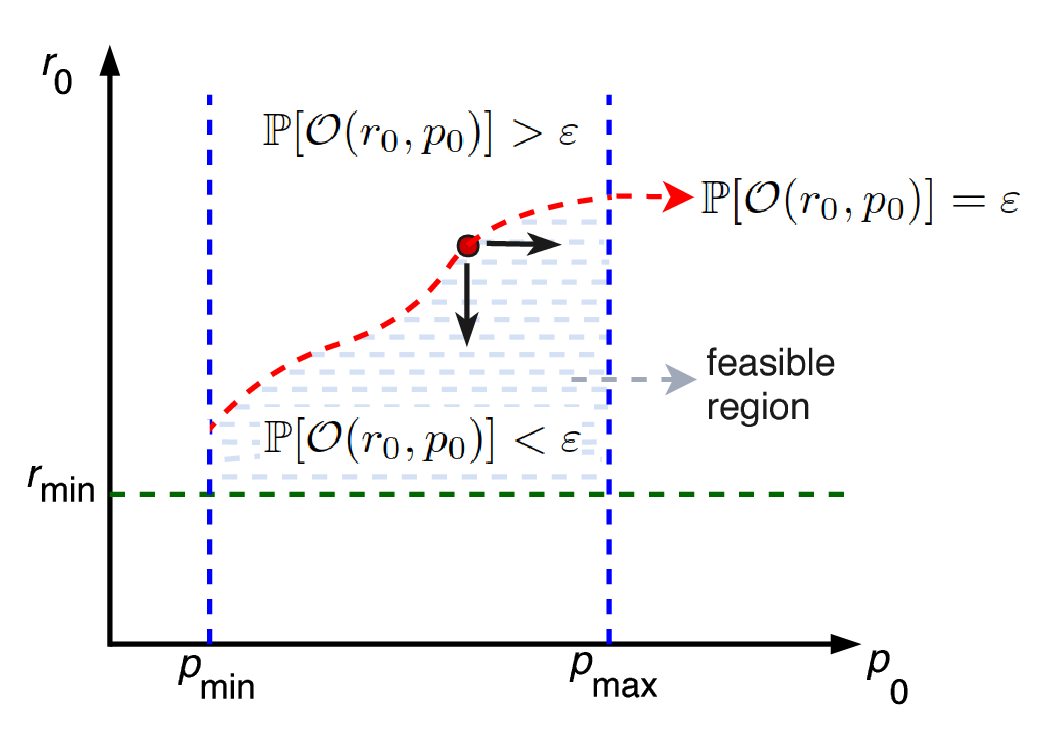}}
	\vspace*{-2mm}
	\caption{Feasible region for $\mathbf{P1}$.}			
	\label{Fig2}
	\vspace*{-2mm}
\end{figure}
Notice that the solutions of \textbf{P1}, named $p_0^*$ and $r_0^*$, must depend on information easy to obtain for $T_0$. For instance, it is not practical if $r_0$ and/or $p_0$ are set according to the interference contribution of each interfering node separately.
\section{SIR Distribution}\label{sec3}
Instantaneous channel fluctuations are unknown at $T_0$, thus, $r_0^*$ and $p_0^*$ are chosen fixed. Notice that in order to solve \textbf{P1} we first need to characterize the SIR distribution under each of the diversity schemes since
\begin{align}
\mathbb{P}[\mathcal{O}(r_0,p_0)]&=F_{f(\mathbf{SIR})}(2^{r_0}-1)=\varepsilon,\label{F}\\
r_0&=\log_2\big(1+F_{f(\mathbf{SIR})}^{-1}(\varepsilon)\big).\label{Finv}
\end{align} 
We proceed by finding the distribution of the $\mathrm{SIR}$ at each antenna and then we extend the results for multiple antennas at the receiver and under the SC, SSC and MRC schemes.
\begin{theorem}\label{the1}
	The CDF of the SIR at each antenna $j=1,...,M$ is given by
	\begin{align}
	F_{\mathrm{SIR}_j}(\gamma)=1-\prod_{i=1}^\kappa\frac{1}{1+ \frac{p_i\lambda_i}{p_0\lambda_0}\gamma},\label{Fsnr}
	\end{align}	
	which is upper-bounded by
	\begin{align}
	F_{\mathrm{SIR}_j}(\gamma)\le 1-\Big(1+\frac{\gamma}{\kappa\delta p_0}\Big)^{-\kappa}  \label{Fsnrapp} 
	\end{align}
	with  $\delta=\frac{\lambda_0}{\sum_{i=1}^{\kappa}p_i\lambda_i}$. 
\end{theorem}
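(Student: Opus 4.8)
The plan is to obtain the exact CDF in \eqref{Fsnr} by conditioning on the aggregate interference and exploiting the independence of the Rayleigh fading across interferers, and then to derive the bound in \eqref{Fsnrapp} via a single application of the arithmetic--geometric-mean inequality. First I would write the useful signal as $S=p_0\lambda_0 h_{0,j}$ and the aggregate interference as $I=\sum_{i=1}^{\kappa}p_i\lambda_i h_{i,j}$, so that from \eqref{snri}, $F_{\mathrm{SIR}_j}(\gamma)=\mathbb{P}[S\le\gamma I]$ (using $I>0$). Conditioning on $I$ and using that $h_{0,j}\sim\mathrm{Exp}(1)$ makes $S$ exponential with $\mathbb{P}[S\le s\mid I]=1-e^{-s/(p_0\lambda_0)}$, I arrive at
\begin{align}
F_{\mathrm{SIR}_j}(\gamma)=1-\mathbb{E}_I\!\big[e^{-\gamma I/(p_0\lambda_0)}\big].\nonumber
\end{align}
Since the $h_{i,j}$ are mutually independent, this expectation factorizes into a product of one-dimensional Laplace transforms, and for a unit-mean exponential variable $\mathbb{E}[e^{-ah}]=(1+a)^{-1}$ for $a>-1$. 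Evaluating each factor at $a=\gamma p_i\lambda_i/(p_0\lambda_0)$ reproduces \eqref{Fsnr} directly.

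For the bound, note that \eqref{Fsnrapp} is equivalent, after cancelling the leading $1-$ and inverting the (positive) products, to
\begin{align}
\prod_{i=1}^{\kappa}\Big(1+\frac{p_i\lambda_i}{p_0\lambda_0}\gamma\Big)\le\Big(1+\frac{\gamma}{\kappa\delta p_0}\Big)^{\kappa}.\nonumber
\end{align}
Writing $a_i=\gamma p_i\lambda_i/(p_0\lambda_0)$ and substituting $\delta=\lambda_0/\sum_{i=1}^{\kappa}p_i\lambda_i$, the base on the right-hand side becomes $1+\frac1\kappa\sum_{i=1}^{\kappa}a_i$, i.e.\ one plus the arithmetic mean of the $a_i$. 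The claim is therefore precisely the statement that the geometric mean of the $\kappa$ positive numbers $1+a_i$ does not exceed their arithmetic mean; raising the AM--GM inequality to the $\kappa$-th power closes the argument.

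Neither step is computationally heavy, so I expect the write-up to be short. The only genuine choices are recognizing that conditioning on $I$ turns the ratio distribution into a Laplace-transform evaluation, and noticing that $\delta$ has been defined so that the bound is \emph{exactly} AM--GM rather than a looser estimate. The one point I would state carefully is that AM--GM is tight iff all the products $p_i\lambda_i$ coincide, which pins down precisely when the approximation \eqref{Fsnrapp} is accurate and when it is most conservative.
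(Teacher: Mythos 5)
Your proposal is correct and follows essentially the same route as the paper: the exact CDF via the exponential complementary CDF of $h_{0,j}$ followed by factorization of the Laplace transform over independent interferers, and the bound via a single application of the AM--GM inequality to the factors $1+\frac{p_i\lambda_i}{p_0\lambda_0}\gamma$, with $\delta$ defined exactly so the arithmetic mean collapses to $1+\frac{\gamma}{\kappa\delta p_0}$. Your closing observation on when AM--GM is tight is a nice complement to the paper's Remark~\ref{Re1}, which instead emphasizes tightness in the left tail where all factors approach unity.
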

\begin{proof}
	We proceed as follows \cite{Lopez.2018}
	\begin{align}
	F_{\mathrm{SIR}_j}(\gamma)&=\mathbb{P}\big[\mathrm{SIR}_j<\gamma\big]=1-\mathbb{P}\big[\mathrm{SIR}_j>\gamma\big]\nonumber\\
	&\stackrel{(a)}{=}1-\mathbb{P}\Big[\frac{p_0 \lambda_0 h_{0,j}}{\sum_{i=1}^{\kappa}p_i\lambda_ih_{i.j}}>\gamma \Big]\nonumber\\
	&=1-\mathbb{P}\Big[h_{0,j}>\frac{\gamma \sum_{i=1}^{\kappa}p_i\lambda_ih_{i.j}}{p_0\lambda_0}\Big]\nonumber\\
	&\stackrel{(b)}{=}1-\mathbb{E} \Big[ e^{-\frac{\gamma \sum_{i=1}^{\kappa}p_i\lambda_ih_{i.j}}{p_0\lambda_0}} \Big]\nonumber\\
	&=1-\prod_{i=1}^\kappa\mathbb{E} \big[e^{-\frac{\gamma p_i\lambda_ih_{i.j}}{p_0\lambda_0}}\big], \label{FsnrP}
	\end{align}
	where $(a)$ follows from using \eqref{snri}, $(b)$ comes from the complementary CDF of exponential random variable $h_{0,j}$, and \eqref{Fsnr} comes directly after \eqref{FsnrP}. Now we focus on the upper bound.
	\begin{align}
	\prod_{i=1}^\kappa\Big(1+\frac{p_i\lambda_i}{p_0\lambda_0}\gamma\Big)&\stackrel{(a)}{\le}\left[\frac{1}{\kappa}\sum\limits_{i=1}^{\kappa}\big(1+\frac{p_i\lambda_i}{p_0\lambda_0}\gamma\big)\right]^{\kappa}\nonumber\\
	&\stackrel{(b)}{=}\bigg[1+\frac{\gamma \sum_{j=1}^{\kappa}p_i\lambda_i}{\kappa p_0\lambda_0}\bigg]^{\kappa}\nonumber\\
	&\stackrel{(c)}{=}\bigg[1+\frac{\gamma}{\kappa\delta p_0}\bigg]^{\kappa},\label{ap2}
	\end{align}
	where $(a)$ comes from  using the  relation between the geometric and the arithmetic mean, $(b)$ follows from simple algebraic transformations, and $(c)$ by adopting $\delta=\frac{\lambda_0}{\sum_{i=1}^{\kappa}p_i\lambda_i}$. Substituting \eqref{ap2} into \eqref{Fsnr} we attain \eqref{Fsnrapp}.	
\end{proof}
\begin{remark}\label{Re1}
	Both, \eqref{Fsnr} and \eqref{Fsnrapp}, converge in the left tail. This becomes evident from the proof of Theorem~\ref{the1}. Therein notice that when operating in the left tail $\prod_{i=1}^{\kappa}\big(1+\frac{p_i\lambda_i}{p_0\lambda_0}\gamma \big)$ should be close to 1, therefore each of the terms $\big(1+\frac{p_i\lambda_i}{p_0\lambda_0}\gamma \big)\ge 1$ is expected to approximate to the unity. Hence, all of these terms are very similar among one another, and geometric mean approximates heavily  to arithmetic mean  in such scenarios. 
	\begin{figure}[t!]
		\centering
		\subfigure{\includegraphics[width=0.47\textwidth]{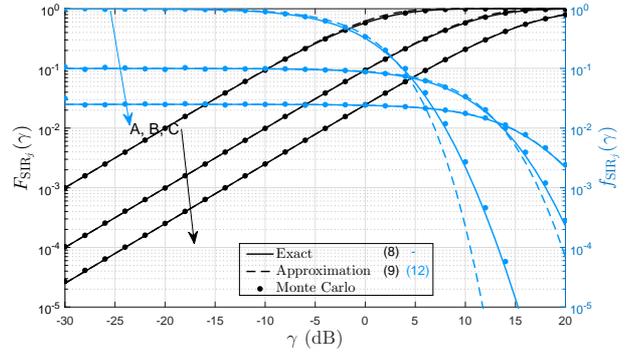}}
		\vspace*{-2mm}
		\caption{Comparison between the exact and approximate expressions of $F_{\mathrm{SIR}_j}(\gamma)$ and $f_{\mathrm{SIR}_j}(\gamma)$ for $p_i\lambda_i=2^{-i}\ \mu$W, $i=1,\cdots,\kappa$ and three different setups: $A:\ p_0\lambda_0=1\ \mu$W, $\kappa=18$; $B:\ p_0\lambda_0=10\ \mu$W, $\kappa=8$; $C:\ p_0\lambda_0=30\ \mu$W, $\kappa=2$. Monte Carlo simulation with $10^7$ samples is also shown. The exact PDF was evaluated taking the derivative of the exact CDF \eqref{Fsnr} for each setup.}	
		\vspace*{-2mm}	
		\label{Fig3}
	\end{figure}	
\end{remark}
\begin{corollary}\label{cor1}
	As a consequence of \eqref{Fsnrapp} and Remark~\ref{Re1}, the SIR at each antenna $i=1,...,M$ is approximately a Lomax random variable with PDF given by
	\begin{align}
	f_{\mathrm{SIR}_i}(\gamma)&\approx \frac{1}{\delta p_0}\Big(1+\frac{\gamma}{\kappa\delta p_0}\Big)^{-\kappa-1}.\label{pdf0}
	\end{align}		
	This can be represented as a scaled Lomax distribution such that $\mathrm{SIR}_j\!\approx\! \kappa\delta p_0 \varphi_j$ with $\varphi_j\!\sim\! L(\kappa,\!1)$.
\end{corollary}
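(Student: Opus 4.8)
The plan is to obtain the density in \eqref{pdf0} simply by differentiating the tail approximation furnished by \eqref{Fsnrapp}, and then to read off the Lomax parameters by matching the result against the template $f_Y(y|p,q)=q(1+\tfrac{q}{p}y)^{-1-p}$. The argument itself is short; the real work lies in justifying why the \emph{upper bound} of Theorem~\ref{the1} may legitimately be used here as an \emph{equality}, and in tracking the scale constant cleanly through the concluding change of variables.

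First I would invoke Remark~\ref{Re1}, which asserts that the bound \eqref{Fsnrapp} and the exact CDF \eqref{Fsnr} coincide in the left tail — precisely the regime that matters under the ultra-reliability constraint $\varepsilon\ll 1$, since the relevant quantile $F^{-1}_{\mathrm{SIR}_j}(\varepsilon)$ lives in that tail. This licenses replacing the inequality in \eqref{Fsnrapp} by $F_{\mathrm{SIR}_j}(\gamma)\approx 1-\big(1+\tfrac{\gamma}{\kappa\delta p_0}\big)^{-\kappa}$ and treating it as exact for the purpose of extracting a density. Differentiating in $\gamma$ via the chain rule gives $\tfrac{\kappa}{\kappa\delta p_0}\big(1+\tfrac{\gamma}{\kappa\delta p_0}\big)^{-\kappa-1}=\tfrac{1}{\delta p_0}\big(1+\tfrac{\gamma}{\kappa\delta p_0}\big)^{-\kappa-1}$, which is exactly \eqref{pdf0}. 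Matching this term by term against the Lomax density then fixes the shape from the exponent, $p=\kappa$, and the rate from the prefactor, $q=\tfrac{1}{\delta p_0}$, with the inner coefficient $\tfrac{q}{p}=\tfrac{1}{\kappa\delta p_0}$ automatically consistent; hence $\mathrm{SIR}_j$ is approximately $L\big(\kappa,\tfrac{1}{\delta p_0}\big)$.

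For the scaled representation I would use the elementary scaling rule of the Lomax family: if $\varphi\sim L(p,q)$ and $c>0$ then $F_{c\varphi}(\gamma)=F_\varphi(\gamma/c)=1-\big(1+\tfrac{q}{cp}\gamma\big)^{-p}$, so $c\varphi\sim L(p,q/c)$ — the shape is invariant and only the rate rescales. Taking $\varphi_j\sim L(\kappa,1)$ and choosing $c$ so that $q/c$ reproduces the rate $\tfrac{1}{\delta p_0}$ found above delivers the claimed identity $\mathrm{SIR}_j\approx c\,\varphi_j$. I expect the genuinely delicate points to be twofold. The first is conceptual: \eqref{Fsnrapp} is an inequality, so the corollary is an approximation valid in the left tail rather than an exact law, and its legitimacy rests entirely on the geometric-versus-arithmetic-mean step behind Remark~\ref{Re1}. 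The second is a bookkeeping trap in the last step — since in this parametrization the scale equals $p/q$ rather than $q$ itself, one must be careful to factor out the \emph{scale} of $\varphi_j$ (namely $p/q=\kappa$), not the rate, when reading off the multiplier $c$.
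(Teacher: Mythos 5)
Your derivation is correct and is exactly the route the paper takes implicitly (the corollary is stated without a separate proof): replace the left-tail upper bound \eqref{Fsnrapp} by an equality via Remark~\ref{Re1}, differentiate in $\gamma$, and match the result against the Lomax template. The ``bookkeeping trap'' you flag at the end is real but is the paper's own notational slip rather than a gap in your argument: under the literal definition of $L(p,q)$ in the Notation section, your (correct) scaling rule yields $\mathrm{SIR}_j\approx \delta p_0\,\varphi_j$ with $\varphi_j\sim L(\kappa,1)$, and the stated factor $\kappa\delta p_0$ is consistent only if $\varphi_j$ instead denotes the unit-scale Lomax with density $f_{\varphi_j}(x)=\kappa(1+x)^{-\kappa-1}$, which is precisely the convention the paper adopts later in \eqref{pdfvarphi}.
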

	The convergence of the approximations in the left tail is clearly illustrated in Fig.~\ref{Fig3} for three different setups, thus, validating our findings. Additionally, notice that the exact CDF of $\mathrm{SIR}_j$ is upper-bounded by the approximation in the entire region, but this does not hold for the PDF in the right tail, for which the approximation lies under the exact curve and diverging fast.
\begin{remark}\label{Re0}
	Obtaining the PDF of the SIR directly from \eqref{Fsnr} seems intractable for large $\kappa$, which is the case in dense network deployments. Also, since the upper bound is extremely tight in the left tail of the distribution, its utility is enormous because it is in that region where typical reliability constraints are, e.g., $\varepsilon<10^{-1}$.
\end{remark}
\begin{remark}\label{Re2}
	Notice that \eqref{Fsnrapp} and \eqref{pdf0} depend only on the number of interfering nodes and $\delta p_0$, which is the ratio between the average signal and the average interference powers. These parameters can be easily obtained, specially for static or quasi-static deployments.
\end{remark}
\subsection{Selection Combining -- SC \& Switch and Stay Combining -- SSC}
Under the SC and SSC schemes, \eqref{F} transforms into
\begin{align}
\mathbb{P}\big[\mathcal{O}(r_0,p_0)\big]&=\mathbb{P}\big[\max\limits_{j=1,\cdots,M}\mathrm{SIR}_j<2^{r_0}-1\big]\nonumber\\
&=\mathbb{P}\Big(\mathrm{SIR}_1<2^{r_0}-1,\cdots,  \mathrm{SIR}_M<2^{r_0}-1\Big)\nonumber\\
&\stackrel{(a)}{=}\mathbb{P}\Big(\mathrm{SIR}_j<2^{r_0}-1\Big)^M\nonumber\\
&\stackrel{(b)}{=}F_{\mathrm{SIR}_j}(2^{r_0}-1)^M,\label{errSC}
\end{align}
where $(a)$ follows from the fact that $\mathrm{SIR}_j$ is distributed independently on each antenna, and $(b)$ comes from using the definition of the CDF of $\mathrm{SIR}_j$.
\subsection{Maximal Radio Combining -- MRC}
Under the MRC scheme, \eqref{F} transforms into
\begin{align}
\mathbb{P}[\mathcal{O}(r_0,p_0)]&=\mathbb{P}\Big[\sum_{j=1}^{M}\mathrm{SIR}_j<2^{r_0}-1\Big]\nonumber\\
&=\mathbb{P}\big(\Psi<2^{r_0}-1\big)=F_{\Psi}(2^{r_0}-1),\label{Fpsi}
\end{align}
where $\Psi=\sum_{j=1}^{M}\mathrm{SIR}_j$. From Remark~\ref{Re1}, $\Psi$ can be represented as $\kappa\delta p_0 \sum_{j=1}^{M}\varphi_j$ where
\begin{align}
f_{\varphi_j}(x)=\kappa(1+x)^{-\kappa-1}. \label{pdfvarphi}
\end{align}
Thus,
\begin{align}
F_{\Psi}(2^{r_0}\!-\!1)\!=\!\mathbb{P}\Bigg[\sum_{j=1}^{M}\varphi_j\!<\!\frac{2^{r_0}\!-\!1}{\kappa\delta p_0}\Bigg]\!=\!F_\upsilon\Big(\frac{2^{r_0}\!-\!1}{\kappa\delta p_0}\Big),
\label{epmrc}
\end{align}
where $\upsilon=\sum_{i=1}^{M}\varphi_i$, while its CDF is given by \cite[Eq.(4.13)]{Zhang2.2017} 
\begin{align}
F_{\upsilon}(x)&=\int_{0}^{\infty}\frac{1}{u}(1-e^{-xu})e^{-Mu}\vartheta(M,\kappa,u)\mathrm{d}u,\label{cdfS}
\end{align}
with
\begin{align}
\vartheta(M,\kappa,u)&\!=\nonumber\\
\!\kappa^M\!\!\!\sum_{m=0}^{\lfloor\frac{ M\!-\!1}{2}\rfloor}&\!\!\binom{M}{2m\!+\!1}(-\pi^2)^m\xi(\kappa,u)^{M\!-\!2m\!-\!1}\Big(\frac{u^{\kappa}}{\kappa!}\Big)^{2m\!+\!1}\!,\\
\xi(\kappa,u)&\!=\!\frac{u^{\kappa}}{\kappa!}\!\Big(\sum_{m\!=\!1}^{\kappa}\!\frac{1}{m}\!-\!\varsigma\!-\!\ln(u)\Big)\!-\!\sum_{\substack{m\!=\!0\\ m\!\ne\!\kappa}}^{\infty}\!\frac{u^m}{(m\!-\!\kappa)m!},
\end{align}
and $\varsigma=0.5772156649...$ is the Eulers's constant. 

Unfortunately $F_{\upsilon}(x)$  is very difficult to evaluate, therefore, very time-consuming. In fact, it is also impossible to be evaluated for many combinations of parameter values $(M,\kappa,x)$, e.g, relatively small $x$ and relatively large $M$ and/or $\kappa$, for which calculation does not converge due to software/hardware limitations. Additionally, since $\mathbb{P}[\mathcal{O}(r_0,p_0)]=\varepsilon$ requires to be solved,  the inversion of $F_{\upsilon}(x)$ is needed, which is an even more cumbersome task. For those reasons, we provide next an accurate approximation for $F_{\upsilon}(x)$ in the left tail, and then we dedicate our attention to find $F_{\upsilon}^{-1}(\varepsilon)$.
\begin{theorem}\label{th3}
	The PDF and CDF of $\upsilon$ are approximated by
	\begin{align}
	f_{\upsilon}(x)&\!\approx\!\frac{\kappa^M M^{M\!-\!1}}{(M-1)!}\Big(1\!+\!\frac{x}{M}\Big)^{-1\!-\!M\kappa}\!\ln^{M-1}\Big(1\!+\!\frac{x}{M}\Big),\label{pdfG}\\
	F_{\upsilon}(x)&\!\approx 1-\frac{\Gamma\big(M,\kappa M\ln(1+x/M)\big)}{(M-1)!},\label{cdfG}
	\end{align}
	where \eqref{cdfG} converges to \eqref{cdfS} in the left tail.
\end{theorem}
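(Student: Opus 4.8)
The plan is to linearise each Lomax summand through a logarithmic change of variables, which turns the intractable sum $\upsilon=\sum_{i=1}^M\varphi_i$ into a Gamma (Erlang) variable up to a left-tail approximation of exactly the arithmetic-versus-geometric-mean type already used in Remark~\ref{Re1}.

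First I would note that if $\varphi_j\sim L(\kappa,1)$ with density \eqref{pdfvarphi}, then $Y_j=\ln(1+\varphi_j)$ is exponential with rate $\kappa$, since $\mathbb{P}[Y_j\le y]=\mathbb{P}[\varphi_j\le e^y-1]=1-e^{-\kappa y}$. Inverting, $\varphi_j=e^{Y_j}-1$, so that $\upsilon=\sum_{j=1}^M e^{Y_j}-M$. The decisive step is then to approximate the sum of exponentials by $M$ times the exponential of the mean: in the left tail $\upsilon$ is small, which forces every $\varphi_j$ and hence every $Y_j$ to be small and mutually comparable, so the arithmetic--geometric mean inequality $\frac1M\sum_{j=1}^M e^{Y_j}\ge\big(\prod_{j=1}^M e^{Y_j}\big)^{1/M}=e^{S/M}$, with $S=\sum_{j=1}^M Y_j$, becomes tight. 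This gives $\upsilon\approx M\big(e^{S/M}-1\big)$, the same collapse of the geometric onto the arithmetic mean that underlies Remark~\ref{Re1}.

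Since $S$ is a sum of $M$ i.i.d.\ rate-$\kappa$ exponentials, it is Erlang distributed with density $\kappa^M s^{M-1}e^{-\kappa s}/(M-1)!$ and CDF $1-\Gamma(M,\kappa s)/(M-1)!$. I would then push this law through the monotone map $\upsilon=M(e^{S/M}-1)$, i.e.\ $S=M\ln(1+\upsilon/M)$ with Jacobian $\mathrm{d}S/\mathrm{d}\upsilon=(1+\upsilon/M)^{-1}$. Substituting $s=M\ln(1+x/M)$ and collecting the factors $[M\ln(1+x/M)]^{M-1}=M^{M-1}\ln^{M-1}(1+x/M)$, $e^{-\kappa M\ln(1+x/M)}=(1+x/M)^{-\kappa M}$ and the Jacobian $(1+x/M)^{-1}$ reproduces \eqref{pdfG} verbatim, while evaluating the Erlang CDF at the same argument yields \eqref{cdfG} with no further integration. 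A reassuring check is $M=1$: the single-term sum makes the AM--GM step an exact identity, and since $\Gamma(1,z)=e^{-z}$ both \eqref{pdfG} and \eqref{cdfG} collapse to the exact Lomax density and CDF.

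The hard part will be the claim that \eqref{cdfG} converges to the exact \eqref{cdfS} in the left tail, as opposed to merely reproducing the $M=1$ case. The AM--GM step is only an inequality, so to establish tightness I would quantify the discrepancy $\frac1M\sum_j e^{Y_j}-e^{S/M}$ as $\upsilon\to 0$ and show it is of strictly higher order in the small quantities $Y_j$ than the terms retained, so that the exact and approximate left tails agree to leading order. Because the exact CDF \eqref{cdfS} is wrapped in the awkward auxiliary functions $\vartheta$ and $\xi$, a direct term-by-term series match is unappealing; the cleaner route is to control the error of the exponential-sum approximation itself and invoke continuity of the change of variables, leaving the fine numerical agreement to be confirmed empirically, as was done for the single-antenna bound in Fig.~\ref{Fig3}.
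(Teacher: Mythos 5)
Your proposal is correct and is essentially the paper's own argument: the decisive step in both is replacing the arithmetic mean of the $1+\varphi_j$ by their geometric mean, which is tight in the left tail, and then computing the exact law of that geometric mean. The only cosmetic difference is that you obtain this law by passing to $Y_j=\ln(1+\varphi_j)\sim\mathrm{Exp}(\kappa)$ and using the Erlang distribution of $S=\sum_j Y_j$, whereas the paper derives the equivalent density of the product $\prod_j(1+\varphi_j)$ of Pareto I variables by induction; your route sidesteps that induction but lands on identical expressions, and your treatment of the left-tail convergence claim is at the same (heuristic plus numerical) level of rigor as the paper's.
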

\begin{proof}
	We have that 
	\begin{align}
	F_{\upsilon}(x)\!=\!\mathbb{P}\bigg[\sum_{j=1}^{M}\varphi_j\!<\!x\bigg]\!\stackrel{(a)}{=}\!\mathbb{P}\bigg[\sum_{j=1}^{M}\frac{(1\!+\!\varphi_j)}{M}\!<\!1\!+\!\frac{x}{M}\bigg],\label{dist}
	\end{align}
	where $(a)$ follows from adding $M$ and then dividing by $M$ on each side. The left term is the arithmetic mean of $1+\varphi_j,\ j=1,...,M$, thus, we are going to use again the relation between the arithmetic and geometric means. But first notice that according to \eqref{pdfvarphi} the mean of $\varphi_j$, $\bar{\varphi_j}=\frac{1}{\kappa-1},\ \forall\kappa>1$, decreases with $\kappa$ and already for $\kappa>2$ its value is below $1$, thus, $\varphi$ is expected to be smaller than $1$ with high probability when $\kappa$ is not too small.  Therefore, all results that comes next from using the geometric mean in the left term of \eqref{dist} are tighter when $\kappa$ increases and converge to the exact expression. But most importantly, the expressions converge in the left tail  where $x\rightarrow 0$, for which each of the summands $\varphi_j$ is expected to take much smaller values while getting far from $1$. We proceed as follows
	\begin{align}
	F_{\upsilon}(x)&\approx\mathbb{P}\bigg[\prod_{j=1}^{M}(1\!+\!\varphi_j)\!<\Big(\!1\!+\!\frac{x}{M}\Big)^M\bigg]\nonumber\\
	&= F_{\psi}\bigg(\Big(\!1\!+\!\frac{x}{M}\Big)^M\bigg),\label{cdfv}
	\end{align}
	where $\psi(M)=\prod_{j=1}^{M}\chi_j$, and $\chi_j=1+\varphi_j\sim \mathcal{P}(\kappa,1)$ with PDF
	\begin{align}
	f_{\chi_j}(x)=\kappa x^{-\kappa-1},\ x\ge 1. \label{chi}
	\end{align}
	Now we are going to prove by induction that the PDF of $\psi$ is given by
	\begin{align}
	f_{\psi}(x|M)=\frac{\kappa^M}{(M-1)!}\ln^{M-1}(x)x^{-\kappa-1},\  x\ge 1. \label{genM}
	\end{align}
	The proof proceed as follows.
	\begin{itemize}
		\item For $M=2$ we have that $\psi(2)=\chi_1\chi_2$, thus
		\begin{align}
		f_{\psi}(x|2)&\!\stackrel{(a)}{=}\!\int_{1}^{x}\frac{1}{\chi_1}f_{\chi_1}(\chi_1)f_{\chi_2}(x/\chi_1)\mathrm{d}\chi_1\!\nonumber\\
		&\stackrel{(b)}{=}\!\kappa^2x^{-\kappa-1}\!\int_{1}^{x}\frac{1}{\chi_1}\mathrm{d}\chi_1\!\stackrel{(c)}{=}\!\kappa^2x^{-\kappa-1}\ln x,\label{m2}
		\end{align}
		where $(a)$ comes from the distribution of the product of two random variables, $(b)$ comes from substituting \eqref{chi}, and $(c)$ follows from solving the integral. Notice that \eqref{m2} matches \eqref{genM} for $M=2$.
		\item Assume now that \eqref{genM} holds for a given $M\ge 1$ and we are going to check whether it also holds for $M+1$. In this case we have that $\psi(M+1)=\chi_{M+1}\psi(M)$, thus,
		\begin{align}
		f_{\psi}(x|M+1)&=\int_{1}^{x}\frac{1}{\chi}f_{\chi}(\chi)f_{\psi}(x/\chi|M)\mathrm{d}\chi\nonumber\\
		&\stackrel{(a)}{=}\frac{\kappa^{M+1}}{(M-1)!}x^{-\kappa-1}\int_{1}^{x}\frac{1}{\chi}\ln^{M-1}\frac{x}{\chi}\mathrm{d}\chi\nonumber\\
		&\stackrel{(b)}{=}\frac{\kappa^{M+1}\ln^{M}(x)x^{-\kappa-1}}{(M-1)!M},\label{pr}
		\end{align}
		where $(a)$ follows from substituting \eqref{chi}, \eqref{genM} and simple algebraic simplifications, while $(b)$ comes from solving the integral. By using $(M-1)!M=M!$ notice that \eqref{pr} matches \eqref{genM} with $M\leftarrow M+1$. 
	\end{itemize} 
	Therefore, \eqref{genM} holds. Now, the CDF of $\psi$ is given by
	\begin{align}
	F_{\psi}(x|M)&\!=\!\int_{1}^{x}\!\!f_{\psi}(u|M)\mathrm{d}u\!=\!\frac{\kappa^M}{(M\!-\!1)!}\int_{1}^{x}\!\!u^{-\kappa-1}\ln^{M\!-\!1}u\mathrm{d}u\nonumber\\
	&=-\frac{\Gamma(M,\kappa\ln u)}{(M-1)!}\Bigg|_1^x=1-\frac{\Gamma(M,\kappa \ln x)}{(M-1)!}.\label{cdf}
	\end{align}
	Substituting \eqref{cdf} into \eqref{cdfv} we attain \eqref{cdfG}, while \eqref{pdfG} comes from evaluating $f_{\upsilon}(x)\!=\!\frac{d}{d x}F_{\upsilon}(x)$.
\end{proof}
\begin{figure}[t!]
	\centering
	\subfigure{\includegraphics[width=0.45\textwidth]{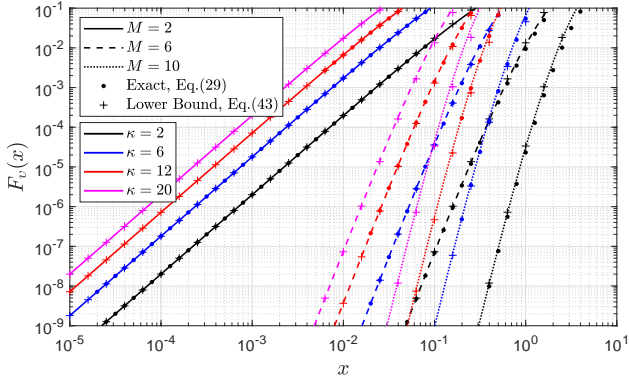}}
	\vspace*{-2mm}
	\caption{Left tail of $F_{\upsilon}(x)$. Comparison between the exact \eqref{cdfS}, approximate \eqref{cdfG} (represented with the lines), and lower bound \eqref{app} with $\ln(1+x/M)=x/M$, expressions.}
	\vspace*{-2mm}		
	\label{Fig4}
\end{figure}
Fig.~\ref{Fig4} shows the incredible accuracy of \eqref{cdfG} in the left tail. Only a slight divergence from the exact expression is observable when $\kappa$ is relatively small, e.g., $\kappa=2$, at the same time that the reliability is not too restrictive, $F_{\upsilon}(x)\ge 10^{-2}$. This is in-line with the arguments we used when proving Theorem~\ref{th3}. Using expressions \eqref{pdfG} and \eqref{cdfG} is twofold advantageous: i) they are  relatively easy to evaluate and ii) they can be evaluated in regions where the exact expressions cannot. \footnote{Regarding this last aspect, notice that \eqref{cdfS} does not converge for $\kappa=20$ and also for $\kappa=12,\ M=10,\ F_{\upsilon}(x)\ge 10^{-4}$, just for mentioning two examples.}

Although an easy-to-evaluate expression for $F_{\upsilon}(x)$ was given in \eqref{cdfG}, it is not analytically invertible, thus, $F_{\upsilon}^{-1}(\varepsilon)$ requires to be computed numerically\footnote{Note that there software packages to evaluate the inverse gamma function, e.g., \texttt{gammaincinv} in MatLab and \texttt{InverseGammaRegularized} in Wolfram Mathematica.}. Following result aims at alleviating this issue.
\begin{lemma}
	$F_{\upsilon}^{-1}(\varepsilon)$ approximates to
	\begin{align}
	\frac{(M!)^{1/M}}{\kappa}\big|\ln\big(1-\varepsilon^{1/M}\big)\big|,\label{eqEth}
	\end{align}
	specially when $\varepsilon$ is very restrictive and $M$ is not too large.
\end{lemma}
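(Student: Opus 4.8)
The plan is to take the closed-form approximation \eqref{cdfG} and replace it by something that can be inverted in closed form. Writing the regularized lower incomplete gamma function as $P(M,\tau)\triangleq 1-\Gamma(M,\tau)/(M-1)!$, Theorem~\ref{th3} gives $F_{\upsilon}(x)\approx P\big(M,\kappa M\ln(1+x/M)\big)$, which is monotone but not analytically invertible. Since only the ultra-reliable left tail matters, where $x\to 0$, the first move is to linearize $\ln(1+x/M)\approx x/M$, collapsing the argument to $\tau\approx\kappa x$ so that $F_{\upsilon}(x)\approx P(M,\kappa x)$. This is consistent with the fact that in the left tail each $\varphi_j$ is, to first order, exponential with rate $\kappa$ (because $1-(1+y)^{-\kappa}\approx 1-e^{-\kappa y}$), making $\upsilon$ effectively Erlang and $P(M,\kappa x)$ its CDF.

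The key step is to approximate the Erlang CDF $P(M,\tau)$ by the invertible surrogate $\big(1-e^{-\tau/(M!)^{1/M}}\big)^M$. I would justify this through the Alzer-type two-sided bound on the incomplete gamma function, which for integer $M$ places $P(M,\tau)$ between $(1-e^{-\tau})^M$ and $\big(1-e^{-\tau/(M!)^{1/M}}\big)^M$, the latter being tight as $\tau\to 0$: indeed both $P(M,\tau)$ and the surrogate behave as $\tau^M/M!$ near the origin (the constant $(M!)^{-1/M}$ is chosen precisely so that $\big(\tau/(M!)^{1/M}\big)^M=\tau^M/M!$), and both tend to $1$ as $\tau\to\infty$. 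This surrogate is exactly the linearized lower bound \eqref{app} plotted in Fig.~\ref{Fig4}; with $\tau=\kappa x$ it yields $F_{\upsilon}(x)\approx\big(1-e^{-\kappa x/(M!)^{1/M}}\big)^M$.

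Inverting is then routine algebra. Setting the surrogate equal to $\varepsilon$, I take the $M$-th root to obtain $1-e^{-\kappa x/(M!)^{1/M}}=\varepsilon^{1/M}$, isolate the exponential, and take logarithms, which gives $x=-\tfrac{(M!)^{1/M}}{\kappa}\ln\big(1-\varepsilon^{1/M}\big)=\tfrac{(M!)^{1/M}}{\kappa}\big|\ln\big(1-\varepsilon^{1/M}\big)\big|$, the absolute value being legitimate because $0<1-\varepsilon^{1/M}<1$ for $0<\varepsilon<1$. This is precisely \eqref{eqEth}.

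I expect the main obstacle to be controlling the error of the surrogate rather than the algebra: $\big(1-e^{-\tau/(M!)^{1/M}}\big)^M$ matches $P(M,\tau)$ only to leading order as $\tau\to 0$, and the gap of the Alzer bound widens with $M$, while the linearization $\ln(1+x/M)\approx x/M$ degrades for the same reason. Making these two error sources explicit is what pins down the stated regime of validity, namely that the approximation is sharp when $\varepsilon$ is very restrictive (so $\tau$ is genuinely small) and $M$ is not too large.
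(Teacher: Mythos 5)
Your proposal is correct and follows essentially the same route as the paper: the paper likewise invokes the Alzer-type bound $P(M,\tau)\ge\big(1-e^{-(M!)^{-1/M}\tau}\big)^M$ (its Eq.~\eqref{app}), linearizes $\ln(1+x/M)\approx x/M$ in the left tail, and inverts. You merely apply the linearization before the incomplete-gamma bound rather than after, which changes nothing of substance.
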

\begin{proof}
	According to \cite[Eq. (8.10.11)]{Thompson.2011} we have that
	\begin{align}
	F_{\upsilon}(x)\ge \Big(1-e^{-(M!)^{-1/M}\kappa M\ln\big(1+x/M\big)}\Big)^M,\label{app}
	\end{align}
	where equality holds for $M=1$ and diverges slowly when $M$ increases. Additionally, this lower bound is very tight in the left tail of the curve, e.g., when $\varepsilon$ is more restrictive. We require to isolate $x$ from $F_{\upsilon}(x)=\varepsilon$, and notice that for $\varepsilon\rightarrow 0$ we have $x\rightarrow 0$, thus, we can take $\ln(1+x/M)\lesssim x/M$, which makes \eqref{app} even more accurate when $\varepsilon$ is not too small. The tightness of the lower bound is clearly shown in Fig.~\ref{Fig5}. Finally we attain \eqref{eqEth} straightforwardly. 
\end{proof}
\section{Optimum Joint Power Control and Rate Allocation}\label{joint}
As highlighted at the end of Subsection~\ref{C}, the optimum resource allocation lies on the curve $\mathbb{P}[\mathcal{O}(r_0,p_0)]=\varepsilon$. Specifically, for SC and SSC and based on \eqref{errSC}, the exact relation between $r_0$ and $p_0$ is given by $\prod_{i=1}^{\kappa}\Big(1+(2^{r_0}-1) \frac{p_i\lambda_i}{p_0\lambda_0}\Big)=\frac{1}{1-\varepsilon^{1/M}}$, while for MRC we were unable to find it. Notice that using such exact intricate relation, even more intricate for large $\kappa$, is additionally not advisable because the solution pairs are expected to depend on $\lambda_0$ and each $p_i\lambda_i$ separately, which is not suitable since such information is difficult to obtain for $T_0$. Following result aims at addressing these issues by providing a relatively simple, yet practically useful, relation between  $r_0$ and $p_0$ for all the diversity schemes.
\begin{lemma}
When $\varepsilon\ll 1$ the curve $\mathbb{P}[\mathcal{O}(r_0,p_0)]=\varepsilon$ is tightly approximated by
\begin{align}
r_0\approx \log_2(\omega p_0+1),\label{eqSCap}
\end{align}
where
\begin{align}\label{omega}
\omega\!=\!\left\{ \begin{array}{ll}
\!\!\kappa\delta \Big(\big(1-\varepsilon^{1/M}\big)^{-\frac{1}{\kappa}}-1\Big), &   \mathrm{for\ SC\ and\ SSC} \\
\!\!\delta  (M!)^{1/M} \big|\ln\big(1\!-\!\varepsilon^{1/M}\big)\big|, & \mathrm{for\ MRC}
\end{array}
\right.\!.
\end{align}
\end{lemma}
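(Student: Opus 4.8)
The plan is to handle the two cases separately, in each case starting from the exact outage expression derived in Section~\ref{sec3}, replacing the intractable SIR statistics by their left-tail approximations, imposing $\mathbb{P}[\mathcal{O}(r_0,p_0)]=\varepsilon$, and then solving the resulting equation for $2^{r_0}-1$ as an explicit linear function of $p_0$. Since the common functional shape $2^{r_0}-1 = \omega p_0$ should emerge in both branches, \eqref{eqSCap} will follow by adding $1$ and taking $\log_2$, and the entire content reduces to identifying the correct coefficient $\omega$.

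For SC and SSC I would begin from \eqref{errSC}, which gives $F_{\mathrm{SIR}_j}(2^{r_0}-1)^M=\varepsilon$, i.e. $F_{\mathrm{SIR}_j}(2^{r_0}-1)=\varepsilon^{1/M}$. Substituting the Lomax approximation of Corollary~\ref{cor1} (equivalently the upper bound \eqref{Fsnrapp}), this becomes $\big(1+\tfrac{2^{r_0}-1}{\kappa\delta p_0}\big)^{-\kappa}=1-\varepsilon^{1/M}$. Isolating the bracket and then $2^{r_0}-1$ yields $2^{r_0}-1=\kappa\delta\big((1-\varepsilon^{1/M})^{-1/\kappa}-1\big)p_0$, which is exactly $\omega p_0$ with $\omega$ as in the first branch of \eqref{omega}, and \eqref{eqSCap} follows.

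For MRC I would start from \eqref{epmrc}, which recasts the outage constraint as $F_\upsilon\big(\tfrac{2^{r_0}-1}{\kappa\delta p_0}\big)=\varepsilon$, i.e. $\tfrac{2^{r_0}-1}{\kappa\delta p_0}=F_\upsilon^{-1}(\varepsilon)$. I then invoke the preceding lemma, replacing $F_\upsilon^{-1}(\varepsilon)$ by its approximation \eqref{eqEth}; the factors of $\kappa$ cancel and solving for $2^{r_0}-1$ gives $2^{r_0}-1=\delta(M!)^{1/M}\big|\ln(1-\varepsilon^{1/M})\big|\,p_0=\omega p_0$, the second branch of \eqref{omega}, and once more \eqref{eqSCap} follows.

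The substantive point, rather than an algebraic obstacle, is justifying why these two substitutions are legitimate precisely in the regime $\varepsilon\ll 1$. For SC/SSC the replacement rests on \eqref{Fsnrapp} being asymptotically exact in the left tail, as argued in Remark~\ref{Re1}: at the outage threshold the per-antenna SIR lies deep in its lower tail, where the geometric--arithmetic mean gap closes. For MRC the corresponding guarantee is the tightness of the lower bound \eqref{app} together with $\ln(1+x/M)\approx x/M$ as $x\to 0$, both of which improve as $\varepsilon$ becomes more restrictive. Hence I expect the main work to be stating the error control for these left-tail approximations and confirming that the single form $r_0\approx\log_2(\omega p_0+1)$ governs both branches, with $\omega$ absorbing all dependence on $\varepsilon$, $M$, $\kappa$ and $\delta$.
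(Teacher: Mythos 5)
Your proposal is correct and follows essentially the same route as the paper's proof: for SC/SSC it combines \eqref{errSC} with the left-tail approximation \eqref{Fsnrapp} and solves for $2^{r_0}-1$, and for MRC it uses the scaled representation $F_{f(\mathbf{SIR})}^{-1}(\varepsilon)=\kappa\delta p_0 F_{\upsilon}^{-1}(\varepsilon)$ together with \eqref{eqEth}, then substitutes into \eqref{Finv}. You merely spell out the intermediate algebra and the justification of the left-tail substitutions more explicitly than the paper does.
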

\begin{proof}
	For SC and SSC we can compute  $F_{f(\mathbf{SIR})}^{-1}(\varepsilon)$ accurately for $\varepsilon\ll 1$ by using 	
	\eqref{errSC} with \eqref{Fsnrapp}. Meanwhile, for MRC an accurate approximation is given by $F_{f(\mathbf{SIR})}^{-1}(\varepsilon)=\kappa\delta p_0 F_v^{-1}(\varepsilon)$, where $F_v^{-1}(\varepsilon)$ is given by \eqref{eqEth}. Substituting $F_{f(\mathbf{SIR})}^{-1}(\varepsilon)$ into \eqref{Finv} yields \eqref{eqSCap}.
\end{proof}
Again, notice that the significance of \eqref{eqSCap} is undeniable since it shows that rather than depending on each $p_i\lambda_i$ separately, $r_0$ ultimately depends on the number of interfering nodes, the number of receive antennas, the reliability constraint and the ratio between the average signal and average interference powers, which are easy/viable to estimate/know. Now we are in condition to make the following proposition.
\begin{lemma}\label{pro1}
	Solving $\mathbf{P1}$ is equivalent to solve
\begin{subequations}\label{P2}
	\begin{alignat}{2}
	\mathbf{P2:}\ &\argmax{x}       &\ & \ln\mathrm{EE}\big(\log_2( e^x\omega+1
	),e^x,\beta\big) \label{P2:a}\\
	&\mathrm{s.t.} &    & \ln\max\!\Big(p_{\min},\frac{2^{r_{\min}}\!-\!1}{\omega}\Big)\!\le\! x\! \le\! \ln p_{\max} \label{P2:b}
	\end{alignat}
\end{subequations}
	for $\varepsilon\ll 1$ and $p_0^*=e^{x^*}$.
\end{lemma}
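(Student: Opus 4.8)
The plan is to collapse the two-dimensional problem $\mathbf{P1}$ onto its active reliability boundary and then reparametrize. First I would invoke the feasibility-region argument that precedes \eqref{eq4}: since $\varepsilon\ll 1$, moving toward the constraint curve (increasing $r_0$ and/or $p_0$) strictly increases $\mathrm{EE}$ in \eqref{EE}, so any optimizer $(r_0^\ast,p_0^\ast)$ must satisfy \eqref{eq4:b} with equality, $\mathbb{P}[\mathcal{O}(r_0^\ast,p_0^\ast)]=\varepsilon$. This reduces the feasible set to the single curve $\mathbb{P}[\mathcal{O}(r_0,p_0)]=\varepsilon$, so that $r_0$ is no longer free but is tied to $p_0$.

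Next I would parametrize this curve by the closed-form approximation \eqref{eqSCap}, valid for $\varepsilon\ll 1$, namely $r_0=\log_2(\omega p_0+1)$ with $\omega$ from \eqref{omega}. Substituting this relation into the objective eliminates $r_0$ and leaves $\mathrm{EE}\big(\log_2(\omega p_0+1),p_0,\beta\big)$ as a function of the single variable $p_0$. The box constraint \eqref{eq4:d} is unchanged, while the rate constraint \eqref{eq4:c} must be translated: because $r_0=\log_2(\omega p_0+1)$ is strictly increasing in $p_0$ (note $\omega>0$), the requirement $r_0\ge r_{\min}$ is equivalent to $\omega p_0+1\ge 2^{r_{\min}}$, i.e. $p_0\ge (2^{r_{\min}}-1)/\omega$. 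Combining this with the lower bound in \eqref{eq4:d} yields the single lower bound $p_0\ge\max\!\big(p_{\min},(2^{r_{\min}}-1)/\omega\big)$; since $\mathbf{P1}$ imposes no upper bound on $r_0$, the value $p_{\max}$ remains the only upper bound on $p_0$.

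Finally I would apply two monotone transformations that preserve the maximizer. The change of variable $p_0=e^{x}$ is a strictly increasing bijection on $p_0>0$, so it maps the feasible interval above to \eqref{P2:b} and satisfies $p_0^\ast=e^{x^\ast}$; under it $r_0=\log_2(e^{x}\omega+1)$. Moreover, $\mathrm{EE}>0$ throughout the feasible region (a positive throughput over a positive consumption), so applying the strictly increasing map $\ln(\cdot)$ leaves the argmax unchanged and turns the objective into $\ln\mathrm{EE}\big(\log_2(e^{x}\omega+1),e^{x},\beta\big)$, which is \eqref{P2:a}. Chaining these steps establishes the equivalence $\mathbf{P1}\Leftrightarrow\mathbf{P2}$.

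The step I expect to be most delicate is the boundary reduction together with the use of the approximate curve \eqref{eqSCap}: the equivalence is therefore asymptotic, holding for $\varepsilon\ll 1$ rather than exactly, and one must keep $\omega>0$ (so that the logarithms and the variable change are well defined) and assume the non-emptiness condition $\mathbb{P}[\mathcal{O}(r_{\min},p_{\max})]\le\varepsilon$ so that the interval in \eqref{P2:b} is non-degenerate. The remaining manipulations — substituting the curve, translating the constraints, and the two monotone transformations — are routine and introduce no loss of optimality.
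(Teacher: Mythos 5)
Your proposal is correct and follows essentially the same route as the paper: reduce to the reliability boundary (the argument the paper makes around Fig.~\ref{Fig2}), substitute the approximate curve \eqref{eqSCap} to eliminate $r_0$, merge the rate constraint with the power bounds into $\max\big(p_{\min},(2^{r_{\min}}-1)/\omega\big)\le p_0\le p_{\max}$, and apply the argmax-preserving transformations $\ln(\cdot)$ and $x=\ln p_0$. Your added remarks on $\omega>0$, positivity of $\mathrm{EE}$, and feasibility are sound and only make explicit what the paper leaves implicit.
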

\begin{proof}
	It is required that $\log_2(\omega p_0+1)\ge r_{\min}\rightarrow p_0\ge \frac{2^{r_{\min}}-1}{\omega}$ and $p_{\min} \le p_0\le p_{\max}$ according to \eqref{eq4:c} and \eqref{eq4:d}, respectively, and combining them yields $\max\big(\frac{2^{r_{\min}}-1}{\omega},p_{\min}\big) \le p_0\le p_{\max}$. The objective function can be written now as a function of $p_0$. Since the resultant objective function is not concave we use the fact that optimizing it conduces to the same result as optimizing $\ln \mathrm{EE}$ and the optimization over $p_0$ is equivalent to optimize over $x=\ln p_0$. Hence, such transformation yields \textbf{P2}.
\end{proof}
\begin{theorem}\label{the2}
	Setting
	\begin{subequations}
		\begin{alignat}{2}
	\rho&=\frac{\frac{\omega\eta(p_t+M^{\beta}p_r+p_{syn})-1}{\mathcal{W}\big(\frac{\omega\eta(p_t+M^{\beta}p_r+p_{syn})-1}{e}\big)}-1}{\omega},\\
    \varrho&=\frac{1}{\ln 2}\Big(\mathcal{W}\Big(\frac{\omega\eta(p_t+M^{\beta}p_r+p_{syn})-1}{e}\Big)+1\Big),
 		\end{alignat}
	\end{subequations}
	the optimum resource allocation is given by
	\begin{itemize}
		\item If $\rho< \ln\max\Big(p_{\min},\frac{2^{r_{\min}}\!-\!1}{\omega}\Big)$ then
			\begin{subequations}
				\begin{alignat}{2}
		p_0^*&=\max\Big(p_{\min},\frac{2^{r_{\min}}\!-\!1}{\omega}\Big),\\ 		r_0^*&=\max\big(\log_2(\omega p_{\min}+1),r_{\min}\big);
			\end{alignat}
			\end{subequations}
		\item If $ \ln\max\Big(p_{\min},\frac{2^{r_{\min}}\!-\!1}{\omega}\Big)\le \rho\le \ln p_{\max} $ then
		\begin{align}
		p_0^*&=\rho, &
		r_0^*&=\varrho;
		\end{align}
		\item If $ \rho> \ln p_{\max} $ then
		\begin{align}
		p_0^*&=p_{\max}, & r_0^*&=\log_2(\omega p_{\max}+1).
		\end{align}
	\end{itemize}	
\end{theorem}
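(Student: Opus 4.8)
The plan is to lean on Lemma~\ref{pro1}: by \eqref{eqSCap} the reliability constraint \eqref{eq4:b} is active and pins $r_0$ to $\log_2(\omega p_0+1)$, so the problem collapses to the single-variable program \textbf{P2} in $x=\ln p_0$. Writing $C=p_t+M^{\beta}p_r+p_{syn}$ for the fixed power term and dropping the additive constants $\ln(1-\varepsilon)$ and $-\ln\ln 2$ (which do not affect the maximizer), the objective of \textbf{P2} is
\begin{align*}
g(x)=\ln\ln\big(\omega e^{x}+1\big)-\ln\big(e^{x}/\eta+C\big)+\mathrm{const}.
\end{align*}
First I would locate the interior stationary point of $g$ on all of $\mathbb{R}$, and only afterwards impose the box \eqref{P2:b}.

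For the stationary point I would differentiate and substitute $v=\omega e^{x}+1>1$ (so $\omega e^{x}=v-1$), which gives
\begin{align*}
g'(x)=\frac{v-1}{v\ln v}-\frac{v-1}{v-1+\omega\eta C}=(v-1)\,\frac{(\omega\eta C-1)-\Phi(v)}{v\ln v\,(v-1+\omega\eta C)},\quad \Phi(v):=v(\ln v-1).
\end{align*}
Since $v>1$ all denominators are positive, so the sign of $g'$ is the sign of $\omega\eta C-1-\Phi(v)$. The key observation is that $\Phi'(v)=\ln v>0$ on $(1,\infty)$, so $\Phi$ is a strictly increasing bijection from $(1,\infty)$ onto $(-1,\infty)$; because $\omega\eta C-1>-1$ always holds, there is exactly one root $v^*$ of $\Phi(v)=\omega\eta C-1$, with $g'>0$ for $v<v^*$ and $g'<0$ for $v>v^*$. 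As $v$ increases with $x$, this shows $g$ is strictly unimodal in $x$ with a single interior maximizer $x^*$. I expect this monotonicity-of-$\Phi$ argument to be the main step: it replaces a messy concavity/second-derivative check by a one-line sign analysis and simultaneously delivers both global optimality and the clamping rule.

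Next I would invert $\Phi(v^*)=\omega\eta C-1$ in closed form. Setting $w=\ln v^*-1$ (so $v^*=e\,e^{w}$) turns $v^*(\ln v^*-1)=\omega\eta C-1$ into $e\,w\,e^{w}=\omega\eta C-1$, i.e. $w e^{w}=\tfrac{\omega\eta C-1}{e}$; since the right side exceeds $-1/e$, the principal branch applies and $w=\mathcal{W}\!\big(\tfrac{\omega\eta C-1}{e}\big)$, consistent with $w=\ln v^*-1>-1$ coming from $v^*>1$. Hence $\ln v^*=\mathcal{W}(\cdot)+1$, which gives $r_0^*=\log_2 v^*=\varrho$ directly, and $p_0^*=(v^*-1)/\omega=\rho$ after using $e^{\mathcal{W}(a)}=a/\mathcal{W}(a)$ with $a=\tfrac{\omega\eta C-1}{e}$ — exactly the interior solution.

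Finally I would impose \eqref{P2:b}. By unimodality the constrained maximizer over $[x_\ell,x_u]$, with $x_\ell=\ln\max(p_{\min},(2^{r_{\min}}-1)/\omega)$ and $x_u=\ln p_{\max}$, equals $x^*=\ln\rho$ when $x_\ell\le x^*\le x_u$, and is clamped to the nearer endpoint otherwise ($g$ being strictly decreasing on the whole interval when $x^*<x_\ell$, strictly increasing when $x^*>x_u$): to the left endpoint when $\rho$ falls below the lower limit, and to $x_u$ when $\rho$ exceeds $p_{\max}$. Back-substituting $r_0^*=\log_2(\omega p_0^*+1)$ yields the three stated cases; in the left-clamped case one uses $\omega\cdot\frac{2^{r_{\min}}-1}{\omega}+1=2^{r_{\min}}$ so that the rate collapses to $\max(\log_2(\omega p_{\min}+1),r_{\min})$. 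The only remaining thing to verify is the admissibility of the principal Lambert branch, which holds automatically as noted above.
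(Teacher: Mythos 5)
Your proposal is correct and follows the same overall skeleton as the paper's proof: reduce to \textbf{P2} via Lemma~\ref{pro1}, locate the interior stationary point in closed form through the Lambert $W$ function, and then clamp to the box \eqref{P2:b}. The one step where you genuinely diverge is the certification of global optimality. The paper computes the second derivative of $g(x)$ explicitly and argues concavity from $e^x\omega\ge\ln(e^x\omega+1)$, whereas you substitute $v=\omega e^x+1$ and reduce the sign of $g'$ to the sign of $(\omega\eta C-1)-\Phi(v)$ with $\Phi(v)=v(\ln v-1)$; since $\Phi'(v)=\ln v>0$ maps $(1,\infty)$ bijectively onto $(-1,\infty)$ and $\omega\eta C-1>-1$, you get existence, uniqueness, and the $+/-$ sign pattern of $g'$ in one stroke. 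This buys you strict unimodality without the second-derivative bookkeeping, and it makes the admissibility of the principal Lambert branch ($we^w=(\omega\eta C-1)/e>-1/e$, $w>-1$) fall out of the same inequality rather than being checked separately; the paper's concavity route is marginally stronger as a structural statement about $g$ but requires the extra computation in its display for $\tfrac{d^2}{dx^2}g(x)$. Your inversion $\ln v^*=\mathcal{W}(\cdot)+1$ and the use of $e^{\mathcal{W}(a)}=a/\mathcal{W}(a)$ to obtain $\rho$ and $\varrho$ match the paper's derivation exactly, as does the endpoint-clamping argument and the identity $\omega\cdot\tfrac{2^{r_{\min}}-1}{\omega}+1=2^{r_{\min}}$ in the left-clamped case. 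One cosmetic remark: you correctly compare $\rho$ (a power) against $\max\big(p_{\min},\tfrac{2^{r_{\min}}-1}{\omega}\big)$ and $p_{\max}$ rather than against their logarithms; the theorem statement as printed mixes $\rho$ with $\ln(\cdot)$ of those quantities, which is an inconsistency of the statement (the comparison should be carried out either entirely in $x=\ln p_0$ or entirely in $p_0$), not of your argument.
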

\begin{proof}
	Let us denote $g(x)$ as the objective function of \textbf{P2} and based on \eqref{P2:a} and \eqref{EE}, it is given by
	\begin{align}
	g(x)&\!=\!\ln\frac{\ln (e^x \omega+1)}{e^x/\eta\!+\!p_t\!+\!M^{\beta}p_r\!+\!p_{syn}}\nonumber\\
	&\!=\!\ln \big(\ln\! (e^x \omega\!+\!1)\big)\!-\!\ln\big(\tfrac{e^x}{\eta}\!+\!p_t\!+\!M^{\beta}p_r\!+\!p_{syn}\big).
	\end{align}
	Notice that we have ignored the term $1-\mathbb{P}[\mathcal{O}(r_0,p_0)]$ since by design it is equal to $1-\varepsilon\approx 1$, and we have used $\ln(\cdot)$ instead of $\log_2(\cdot)$, which does not affect the optimization of $g(x)$. Now, the first and second derivatives of $g(x)$ are
	\begin{align}
	&\frac{d}{d x}g(x)\!=\!\frac{\omega e^x}{(e^x\omega\!+\!1)\ln(e^x\omega\!+\!1)}\!-\!\frac{e^x/\eta}{e^x/\eta\!+\!p_t\!+\!M^{\beta}p_r\!+\!p_{syn}}\nonumber\\
	&\!=\!\frac{\omega(e^x\!+\!\eta(p_t\!+\!M^{\beta}p_r\!+\!p_{syn}))\!-\!(1\!+\!e^x\omega)\ln(e^x\omega\!+\!1)}{(e^x\!+\!\eta(p_t\!+\!M^{\beta}p_r\!+\!p_{syn}))(1\!+\! e^x\omega)\ln(e^x\omega\!+\!1)e^{-x}}\!,\label{d1}\\
	&\frac{d^2}{d x^2}g(x)=\nonumber\\
	&\!-\!\frac{e^x\omega\big(e^x\omega\!-\!\ln(e^x\omega\!+\!1)\big)}{(e^x\omega\!+\!1)^2\ln^2(1\!+\!e^x\omega)}\!-\!\frac{e^x(p_t\!+\!M^{\beta}p_r\!+\!p_{syn})}{\frac{e^x}{\eta}\!+\!(p_t\!+\!M^{\beta}p_r\!+\!p_{syn})^2},\label{d2}
	\end{align}
	where the second derivative comes from taking the derivative of $\frac{d}{dx}g(x)$ in the first line of \eqref{d1}. Notice that $\frac{d^2}{dx^2}g(x)<0,\ \forall x\in \mathcal{R}$ since $e^x\omega\ge \ln(e^x\omega+1)$, thus, $g(x)$ is concave on $x$ and it has a global maximum on the solution of $\frac{d}{d x}g(x)=0$ which is obtained as follows
	\begin{align}	
	\omega(e^x\!+\!\eta(p_t\!+\!M^{\beta}p_r\!+\!p_{syn}))&\!-\!(1\!+\! e^x\omega)\ln(e^x\omega\!+\!1)\!=\!0\nonumber\\
	\omega\eta(p_t\!+\!M^{\beta}p_r\!+\!p_{syn})\!-\!1\!&=\!(1\!+\! e^x\omega)\big(\ln(1\!+\!e^x\omega)\!-\!1\big)\nonumber\\
    \frac{\omega\eta(p_t\!+\!M^{\beta}p_r\!+\!p_{syn})\!-\!1}{e}\!&=\!e^{\ln(1\!+\! e^x\omega)\!-\!1}\!\big(\!\ln(1\!+\!e^x\omega)\!-\!1\!\big)\nonumber\\
	\mathcal{W}\!\Big(\!\frac{\omega\eta(p_t\!+\!M^{\beta}p_r\!+\!p_{syn})\!-\!1}{e}\!\Big)\!&\stackrel{(a)}{=}\!\ln(1\!+\!e^x\omega)\!-\!1\nonumber\\
	\ln\!\frac{e^{\mathcal{W}\big(\frac{\omega\eta(p_t\!+\!M^{\beta}p_r\!+\!p_{syn})\!-\!1}{e}\big)}e\!-\!1}{\omega}\!&\stackrel{(b)}{=}x^*\nonumber\\
	\ln\!\frac{\frac{\omega\eta(p_t\!+\!M^{\beta}p_r\!+\!p_{syn})\!-\!1}{\mathcal{W}\big(\frac{\omega\eta(p_t\!+\!M^{\beta}p_r\!+\!p_{syn})\!-\!1}{e}\big)}\!-\!1}{\omega}&\stackrel{(c)}{=}x^*,\label{optx}
	\end{align}
	where $(a)$ comes from the definition of the Lambert $W$ function, specifically its main branch since $\ln(e^x\omega+1)-1\ge -1$, which guarantees finding the appropriate real solution \cite{Corless.1996}; $(b)$ follows from isolating $x$; and $(c)$ from using the property  $e^{\mathcal{W}(a)}=a/\mathcal{W}(a)$. Notice that $x^*$ in \eqref{optx} is the solution of \textbf{P2} as long as it is in the interval specified by \eqref{P2:b}; otherwise, if it is greater than $\ln p_{\max}$ then $x^*=\ln p_{\max}$, while if it is smaller than $\ln\max\Big(p_{\min},\frac{2^{r_{\min}}-1}{\omega}\Big)$ then $x^*=\ln\max\Big(p_{\min},\frac{2^{r_{\min}}-1}{\omega}\Big)$. Now, returning to original variables by using $r_0^*=\log_2(\omega p_0^*+1)$ and $p_0^*=e^{x^*}$ we attain the resource allocation scheme given in Theorem~\ref{the2}. Notice that to obtain $r_0^*$ in the simplified form it is required the use of property $\ln\mathcal{W}(a)=\ln a-\mathcal{W}(a)$ for $a> 0$.
\end{proof}

Notice that, as commented at the end of Subsection~\ref{C}, the solution is feasible only when $\mathbb{P}[\mathcal{O}(r_{\min},p_{\max})]\le \varepsilon$, thus, according to \eqref{eqSCap} it is only necessary to check that $r_{\min}\le  \log_2\Big(\kappa\delta p_{\max}\Big(\big(1-\varepsilon^{1/M}\big)^{-\frac{1}{\kappa}}-1\Big)+1
\Big)$.
\begin{figure}[t!]
	\centering
	\subfigure{\includegraphics[width=0.45\textwidth]{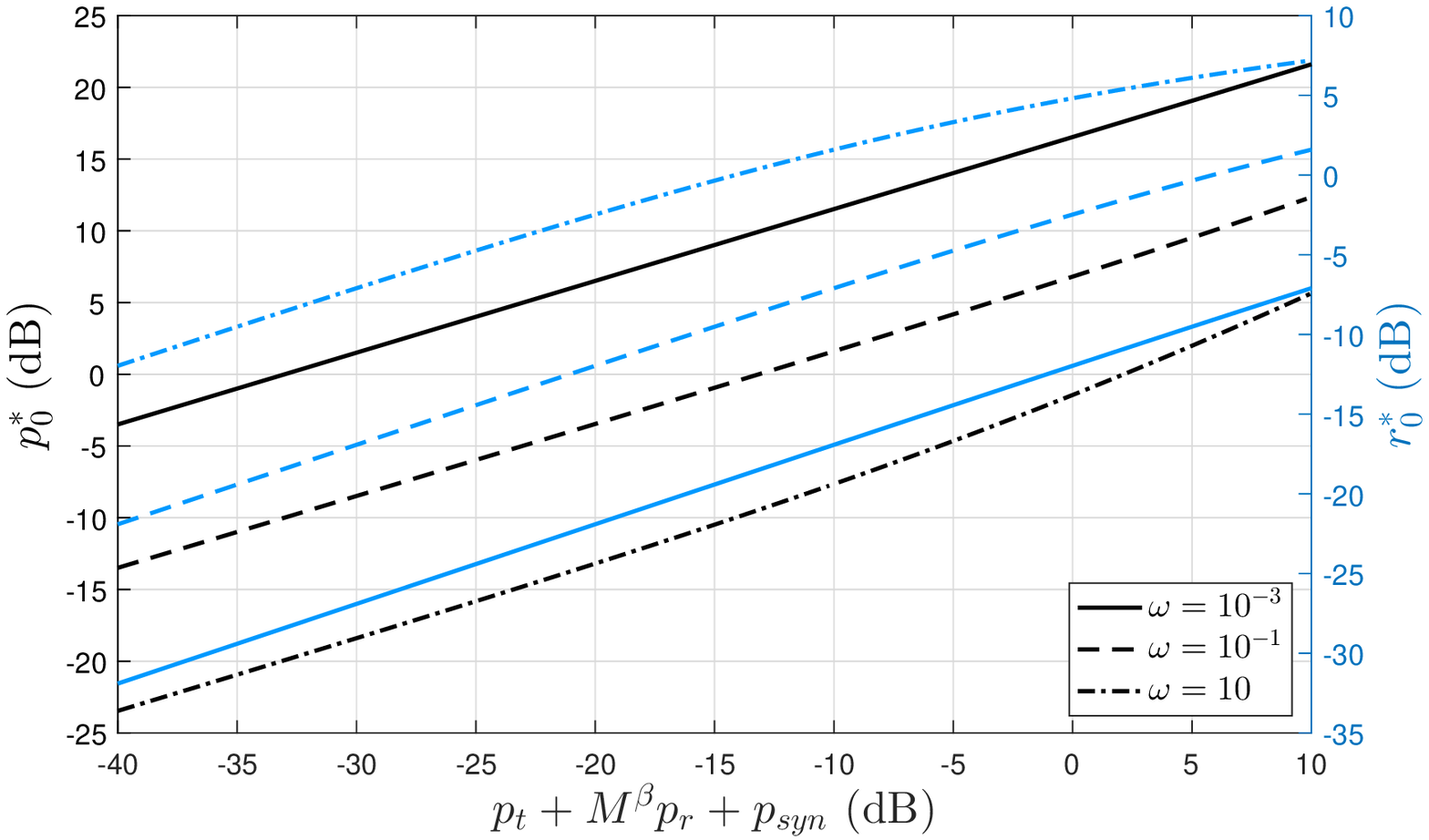}}\\
	\subfigure{\includegraphics[width=0.45\textwidth]{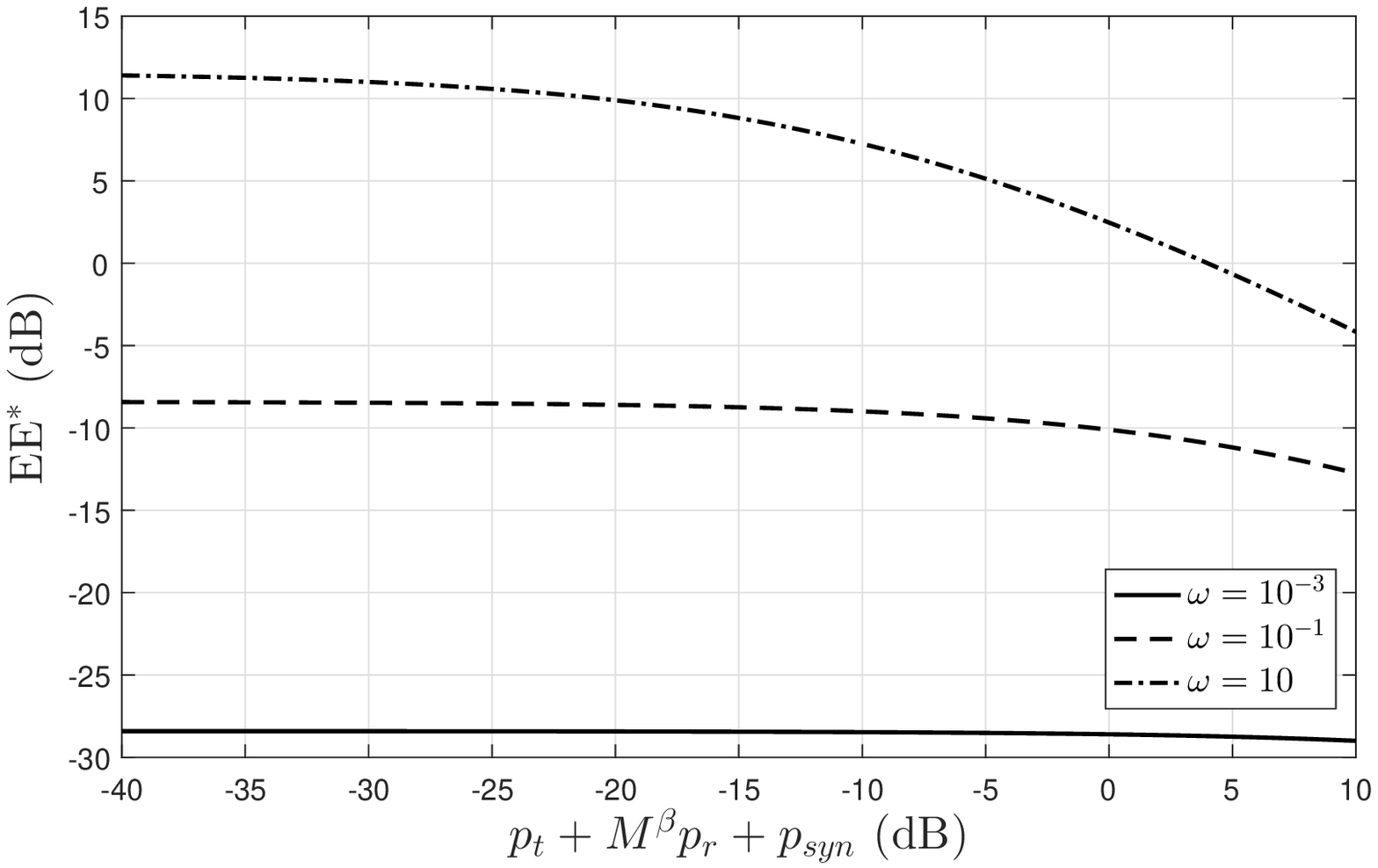}}
	\vspace*{-2mm}
	\caption{Optimum performance as a function of $p_t+M^{\beta}p_r+p_{syn}$ for $\omega\in\{10^{-3},10^{-1},10\}$ and $\eta=1$. $(a)$ $p_0^*$ in black and $r_0^*$ in blue (left), and $(b)$ $\mathrm{EE}^*$ (right).}	
	\vspace*{-2mm}	
	\label{Fig5}
\end{figure}

Fig.~\ref{Fig5} shows the dependence of $p_0^*,\ r_0^*$ and $\mathrm{EE}^*$ on $p_t+M^{\beta}p_r+p_{syn}$ and $\omega$ for unconstrained transmit power and rate, e.g., $r_{\min}=0$, $p_{\min}=0$ and $p_{\max}=\infty$. The greater the fixed power consumption figure, the greater the optimum transmit power and transmit rate, while the optimum energy it is only affected at relatively large $\omega$. Notice that $p_0^*$ decreases with $\omega$, while $r_0^*$ and $\mathrm{EE}^*$ are increasing functions of $\omega$. According to \eqref{omega} $\omega$ is an increasing function of $\varepsilon$, thus, $p_0^*$ increases as the system reliability increases while the optimum rate decreases, as well as the energy efficiency, but at slow pace.
\subsection{SC vs MRC}\label{SubC}
\begin{lemma}\label{cor3}
	Following relations hold
	\begin{align}
	p_{0,\mathrm{mrc}}^* < p_{0,\mathrm{sc}}^*,& &r_{0,\mathrm{mrc}}^*> r_{0,\mathrm{sc}}^*.\label{poro}
	\end{align}
	Thus, $\mathrm{EE}^*_{\mathrm{mrc}}>\mathrm{EE}^*_{\mathrm{sc}}$.
\end{lemma}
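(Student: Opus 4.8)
The plan is to exploit that, once we restrict to the curve $\mathbb{P}[\mathcal{O}(r_0,p_0)]=\varepsilon$, the two schemes are governed by the \emph{same} objective and the \emph{same} cost structure, and differ only through the scalar $\omega$ of \eqref{omega}. Indeed, since $\beta=1$ for both SC and MRC, the term $p_t+M^{\beta}p_r+p_{syn}$ coincides, so Theorem~\ref{the2} applies verbatim to each scheme with a single scheme-dependent parameter $\omega$ and a common constant $c=\eta(p_t+M^{\beta}p_r+p_{syn})>0$. Hence I would reduce all three claims to one comparison $\omega_{\mathrm{mrc}}>\omega_{\mathrm{sc}}$ plus the monotonicity of $p_0^{*}$ and $r_0^{*}$ in $\omega$.

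First I would establish $\omega_{\mathrm{mrc}}>\omega_{\mathrm{sc}}$, which I expect to be the main obstacle. Writing $s=|\ln(1-\varepsilon^{1/M})|>0$ and $u=s/\kappa$, the scale factor is $(1-\varepsilon^{1/M})^{-1/\kappa}=e^{u}$, so the claim $\delta(M!)^{1/M}s>\kappa\delta(e^{u}-1)$ reduces to
\begin{align}
(M!)^{1/M}>\frac{e^{u}-1}{u}.\nonumber
\end{align}
The right-hand side is increasing in $u$ and tends to $1$ as $u\to 0^{+}$, while $(M!)^{1/M}>1$ for every $M\ge 2$. Since the ultra-reliable regime $\varepsilon\ll 1$ (at fixed $M$) forces $s\to 0$, hence $u\to 0$, the inequality holds; moreover the margin $(M!)^{1/M}-1$ grows with $M$, so the comparison is robust. (For $M=1$ the two schemes coincide and the relations degenerate to equalities.)

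Next I would prove the two monotonicities for the interior optimum of Theorem~\ref{the2}. For the rate, $r_0^{*}=\varrho=\tfrac{1}{\ln 2}\big(\mathcal{W}(\tfrac{\omega c-1}{e})+1\big)$; since $\tfrac{\omega c-1}{e}\ge -1/e$ is increasing in $\omega$ and $\mathcal{W}$ is increasing on its main branch, $\varrho$ is strictly increasing in $\omega$. For the power I would reparametrize through $W=\mathcal{W}(\tfrac{\omega c-1}{e})\in(-1,\infty)$, use the defining identity $\omega c=1+We^{W+1}$ to write $p_0^{*}=\rho=c\,\tfrac{e^{W+1}-1}{1+We^{W+1}}$, and show this decreases in $W$ (hence in $\omega$). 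Differentiating, the sign of $\tfrac{d\rho}{dW}$ is that of $2+W-e^{W+1}$, which vanishes at $W=-1$ and has derivative $1-e^{W+1}<0$ for $W>-1$, so it is strictly negative throughout. Combining both monotonicities with $\omega_{\mathrm{mrc}}>\omega_{\mathrm{sc}}$ yields $r_{0,\mathrm{mrc}}^{*}>r_{0,\mathrm{sc}}^{*}$ and $p_{0,\mathrm{mrc}}^{*}<p_{0,\mathrm{sc}}^{*}$, i.e.\ \eqref{poro}.

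Finally, for the energy-efficiency claim I would avoid comparing the two optima pointwise and instead note that on the optimal curve $\mathrm{EE}=\frac{\eta(1-\varepsilon)\log_2(\omega p_0+1)}{p_0+c}$ is, for each fixed $p_0>0$, strictly increasing in $\omega$, while a larger $\omega$ also enlarges the feasible interval \eqref{P2:b}. Hence the optimal value $V(\omega)=\max_{p_0}\mathrm{EE}$ is strictly increasing in $\omega$: evaluating the MRC objective at the SC optimizer already gives a strict gain, and maximizing can only improve it, so $\mathrm{EE}^{*}_{\mathrm{mrc}}=V(\omega_{\mathrm{mrc}})>V(\omega_{\mathrm{sc}})=\mathrm{EE}^{*}_{\mathrm{sc}}$. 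The only care needed is with the boundary cases of Theorem~\ref{the2}: if both optima saturate $p_{\max}$ the power relation becomes an equality, but the rate and EE strict inequalities persist because the denominators match while the numerator is larger for MRC.
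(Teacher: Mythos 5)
Your proposal is correct and follows the same high-level strategy as the paper's proof: reduce everything to the single comparison $\omega_{\mathrm{mrc}}>\omega_{\mathrm{sc}}$ (possible because $\beta=1$ for both schemes, so Theorem~\ref{the2} differs only through $\omega$) and then invoke monotonicity of $p_0^*$ and $r_0^*$ in $\omega$. The differences are in how the two pieces are justified, and your version is the more rigorous one. For the $\omega$ comparison the paper shows both $\omega$'s vanish at $\varepsilon=0$ and that $\tfrac{d}{dz}\omega_{\mathrm{mrc}}>\tfrac{d}{dz}\omega_{\mathrm{sc,ssc}}$ in $z=\varepsilon^{1/M}$ whenever $z<1-(M!)^{-\kappa/M}$; your direct inequality $(M!)^{1/M}>(e^{u}-1)/u$ with $u=|\ln(1-\varepsilon^{1/M})|/\kappa$ is exactly the integrated form of that derivative comparison and is in fact slightly weaker (hence holds on a slightly larger range of $\varepsilon$), since $(e^u-1)/u<e^u$. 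The more substantive improvement is the monotonicity step: the paper justifies "$p_0^*$ decreases and $r_0^*$ increases with $\omega$" only by appeal to the numerical discussion around Fig.~\ref{Fig5}, whereas you prove it analytically -- the $\varrho$ case from monotonicity of $\mathcal{W}$, and the $\rho$ case via the reparametrization $\omega c=1+We^{W+1}$, $\rho=c\,(e^{W+1}-1)/(1+We^{W+1})$, whose derivative has the sign of $2+W-e^{W+1}<0$ for $W>-1$. That computation checks out and closes a genuine gap in the paper's argument. Your treatment of the energy-efficiency claim (pointwise dominance of the objective in $\omega$ plus enlargement of the feasible set) is also cleaner than the paper's bare assertion, and your caveat about the saturated case $p_0^*=p_{\max}$, where the strict power inequality in \eqref{poro} degenerates to equality while the rate and EE inequalities survive, is a legitimate observation the paper silently ignores.
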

\begin{proof}
	According to \eqref{omega} we have that $\omega=0$ when $\varepsilon=0$. Additionally, $\omega$  is an increasing function of $z=\varepsilon^{1/M}$ since
	\begin{subequations}
		\begin{alignat}{2}		
	\frac{d}{d z}\omega_{\mathrm{sc,ssc}}&=\frac{1}{(1-z)^{1/\kappa+1}}>0,\\ 
	\frac{d}{d z}\omega_{mrc}&=\frac{(M!)^{1/M}}{1-z}>0;
	\end{alignat}
	\end{subequations}
	but $\omega_{mrc}$ grows faster than $\omega_{\mathrm{sc,ssc}}$ because
	\begin{align}
	\frac{d}{dz}\omega_{mrc} &>  \frac{d}{d z}\omega_{\mathrm{sc,ssc}}\nonumber \\
	\frac{(M!)^{1/M}}{1-z} &>\frac{1}{(1-z)^{1/\kappa+1}}\nonumber\\
	(M!)^{1/M} &>\frac{1}{(1-z)^{1/\kappa}},
	\end{align}
	where the last condition is always satisfied since $z<1-(M!)^{-\kappa/M}$ for $\kappa\ge 1$ and practical reliability constraints, e.g., $\varepsilon<10^{-1}$. Above implies that $\omega_{\mathrm{mrc}}>\omega_{\mathrm{sc,ssc}}$, thus, according to the discussion related to Fig.~\ref{Fig5}a and that $p_0^*$ and $r_0^*$ share the same dependence on $\omega$ for SC and MRC schemes\footnote{The dependence is shown in Theorem~\ref{the2} and notice that is strictly the same for SC and MRC since $\beta=1$ for both schemes.}, \eqref{poro} holds.
\end{proof}
\begin{figure}[t!]
	\centering
	\subfigure{\includegraphics[width=0.45\textwidth]{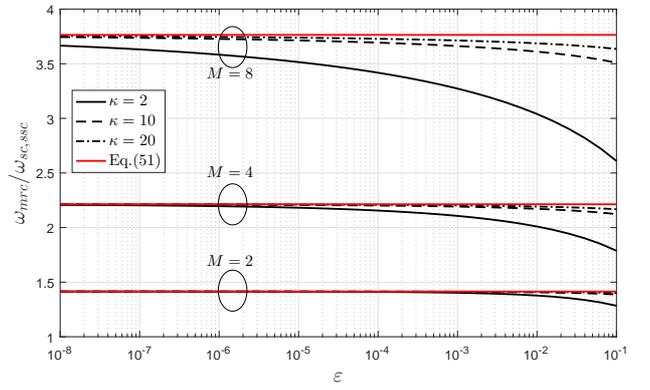}}
	\vspace*{-2mm}
	\caption{Gap between $\omega_{\mathrm{mrc}}$ and $\omega_{\mathrm{sc,ssc}}$ as a function of $\varepsilon$ for $M\in\{2,4,8\}$ and $\kappa\in\{2,10,20\}$.}		
	\vspace*{-2mm}
	\label{Fig6}	
\end{figure}

In the ultra-reliability regime, the asymptotic gap between $\omega_{\mathrm{sc,ssc}}$ and $\omega_{\mathrm{mrc}}$ can be calculated as follows
\begin{align}
\lim\limits_{\varepsilon\rightarrow 0}\frac{\omega_{\mathrm{mrc}}}{\omega_{\mathrm{sc,ssc}}}&\stackrel{(a)}{=}\lim\limits_{\varepsilon\rightarrow 0}\frac{d\omega_{\mathrm{mrc}}/d\varepsilon}{d\omega_{\mathrm{sc,ssc}}/d\varepsilon}\nonumber\\
&\stackrel{(b)}{=}\lim\limits_{\varepsilon\rightarrow 0}\frac{(M!)^{1/M}(1-\varepsilon^{1/M})^{-1}\frac{1}{M}\varepsilon^{1/M-1}}{\frac{1}{M}(1-\varepsilon^{1/M})^{-1/\kappa-1}\varepsilon^{1/M-1}}\nonumber\\
&=\lim\limits_{\varepsilon\rightarrow 0}(M!)^{1/M}(1\!-\!\varepsilon^{1/M})^{1/\kappa}\!=\!(M!)^{1/M},\label{l1}
\end{align}
where $(a)$ comes from using L'H\^opital's rule and $(b)$ follows from taking the derivative of  \eqref{omega} with respect to $\varepsilon$. Thus, the asymptotic gap is only function (an increasing function) of the number of antennas. This is illustrated in Fig.~\ref{Fig6}, where we can also check that the non-asymptotic gap narrows as the reliability constraint relaxes and the number of interfering transmitters increases.

More important than the relation between $\omega_{\mathrm{mrc}}$ and $\omega_{\mathrm{sc}}$, is the relation between $r^*_{0,\mathrm{sc}}$ and $r^*_{0,\mathrm{mrc}}$, and between $p^*_{0,\mathrm{sc}}$ and $p^*_{0,\mathrm{mrc}}$. Therefore, from Theorem~\ref{the2}, when assuming no constraints in the power and the rate, we have
\begin{align}
\lim\limits_{\varepsilon\rightarrow 0}\frac{r^*_{0,\mathrm{mrc}}}{r^*_{0,\mathrm{sc}}}&=\lim\limits_{\varepsilon\rightarrow 0}\frac{\mathcal{W}\Big(\frac{\omega_{\mathrm{mrc}} \eta (p_t+Mp_r+p_{syn})-1}{e}\Big)+1}{\mathcal{W}\Big(\frac{\omega_{\mathrm{sc}} \eta (p_t+Mp_r+p_{syn})-1}{e}\Big)+1}\nonumber\\
&\stackrel{(a)}{=}\lim\limits_{\omega_{\mathrm{sc}}\rightarrow 0}\frac{\mathcal{W}\Big(\frac{(M!)^{1/M}\omega_{\mathrm{sc}} \eta (p_t+Mp_r+p_{syn})-1}{e}\Big)\!+\!1}{\mathcal{W}\Big(\frac{\omega_{\mathrm{sc}} \eta (p_t+Mp_r+p_{syn})-1}{e}\Big)+1}\nonumber\\
&=(M!)^{1/(2M)},\label{l2}
\end{align}
where $(a)$ comes from using $\omega_{\mathrm{mrc}}=(M!)^{1/M}\omega_{\mathrm{sc}}$ which holds when $\varepsilon\rightarrow 0$ according to \eqref{l1}, and from the fact that as $\varepsilon\rightarrow 0$ we have that $\omega_{\mathrm{sc}}\rightarrow 0$. Similarly, when analyzing the asymptotic gap in the optimum transmit power, yields
\begin{align}
\lim\limits_{\varepsilon\rightarrow 0}\frac{p^*_{0,\mathrm{mrc}}}{p^*_{0,\mathrm{sc}}}&=(M!)^{-1/(2M)}.\label{l3}
\end{align}
\subsection{SC vs SSC}\label{B1}
SSC is more energy efficient than SC since it is able of achieving the same reliability performance with reduced power consumption as shown in \eqref{EE}. 
Additionally, only one of the following alternatives holds for guaranteeing $\mathbb{P}\big[\mathcal{O}(r_{\mathrm{0,ssc}},p_{\mathrm{0,ssc}})\big]=\varepsilon$ as discussed in Fig.~\ref{Fig2}:  i) $r_{\mathrm{0,ssc}}^*>r_{\mathrm{0,sc}}^*$, $p_{\mathrm{0,ssc}}^*>p_{\mathrm{0,sc}}^*$, or ii) $r_{\mathrm{0,ssc}}^*<r_{0,\mathrm{sc}}^*$, $p_{0,\mathrm{ssc}}^*<p_{0,\mathrm{sc}}^*$. According to the results and discussions around Fig.\ref{Fig5}, $p_0^*$ and $r_0^*$ increase with the circuitry power consumption, hence case ii) holds. Summarizing:
\begin{align}
p_{0,\mathrm{ssc}}^* < p_{0,\mathrm{sc}}^*,& &r_{0,\mathrm{ssc}}^*< r_{0,\mathrm{sc}}^*, & & \mathrm{EE}_{\mathrm{ssc}}^*>\mathrm{EE}_{\mathrm{sc}}^*.\label{porossc}
\end{align}
Now, let's proceed with an analytical characterization of the performance gap between these two diversity schemes. Since $\omega_{\mathrm{ssc}}=\omega_{\mathrm{sc}}=\omega_{\mathrm{sc,ssc}}$ and focusing on the ultra-reliability regime with no constraints in the power and rate, we have that
\begin{align}
\lim_{\varepsilon\rightarrow 0}\frac{r_{0,\mathrm{ssc}}^*}{r_{0,\mathrm{sc}}^*}&\!=\!\lim_{\varepsilon\rightarrow 0}\frac{\mathcal{W}\Big(\frac{\omega_{\mathrm{sc,ssc}}\eta(p_t\!+\!p_r\!+\!p_{syn})-1}{e}\Big)\!+\!1}{\mathcal{W}\Big(\frac{\omega_{\mathrm{sc,ssc}}\eta(p_t\!+\!Mp_r\!+\!p_{syn})\!-\!1}{e}\Big)\!+\!1}\nonumber\\
=\!\lim_{\omega_{\mathrm{sc,ssc}}\rightarrow 0}&\frac{\mathcal{W}\Big(\frac{\omega_{\mathrm{sc,ssc}}\eta(p_t\!+\!Mp_r\!+\!p_{syn})\frac{p_t\!+\!p_r\!+\!p_{syn}}{p_t\!+\!Mp_r\!+\!p_{syn}}\!-\!1}{e}\Big)\!+\!1}{\mathcal{W}\Big(\frac{\omega_{\mathrm{sc,ssc}}\eta(p_t\!+\!Mp_r\!+\!p_{syn})\!-\!1}{e}\Big)\!+\!1}\nonumber\\
&\!=\!\sqrt{\frac{p_t\!+\!p_r\!+\!p_{syn}}{p_t\!+\!Mp_r\!+\!p_{syn}}},\label{ga}
\end{align}
which is smaller than $1$ for every $M>1,\ p_r>0$. By doing similarly when analyzing the asymptotic gap in the optimum transmit power we attain the same result as in \eqref{ga}
\begin{align}
\lim_{\varepsilon\rightarrow 0}\frac{p_{0,\mathrm{ssc}}^*}{p_{0,\mathrm{sc}}^*}=\sqrt{\frac{p_t+p_r+p_{syn}}{p_t+Mp_r+p_{syn}}}.\label{ga2}
\end{align}
\subsection{SSC vs MRC}\label{B2}
As discussed in Subsection~\ref{SubC} $\omega_{\mathrm{mrc}}$ grows faster than $\omega_{\mathrm{sc,ssc}}$, however $p_0^*$ and $r_0^*$ no longer share the same dependence on $\omega$ for SSC and MRC schemes, hence, the same arguments can not be applied. Intuitively, $\mathrm{EE}_{\mathrm{mrc}}^*>\mathrm{EE}_{\mathrm{ssc}}^*$ should hold for relatively small $M$, however as $M$ increases the situation is reversed since the power consumption soars and for relatively large $M$ $\mathrm{EE}_{\mathrm{mrc}}^*<\mathrm{EE}_{\mathrm{ssc}}^*$ should hold. Characterizing analytically such trade-off is cumbersome, however if we limit our discussion to the ultra-reliability regime where $\varepsilon\rightarrow 0$ we are able to provide valuable insights as we do next.

Notice that
	\begin{align}
	\lim_{\varepsilon\rightarrow 0}\frac{r_{0,\mathrm{mrc}}^*}{r_{0,\mathrm{ssc}}^*}&=\lim_{\varepsilon\rightarrow 0}\frac{r_{0,\mathrm{mrc}}^*/r_{0,\mathrm{sc}}^*}{r_{0,\mathrm{ssc}}^*/r_{0,\mathrm{sc}}^*}=\frac{\lim_{\varepsilon\rightarrow 0}r_{0,\mathrm{mrc}}^*/r_{0,\mathrm{sc}}^*}{\lim_{\varepsilon\rightarrow 0}r_{0,\mathrm{ssc}}^*/r_{0,\mathrm{sc}}^*}\nonumber\\
	&=\sqrt{(M!)^{1/M}\frac{p_t+Mp_r+p_{syn}}{p_t+p_r+p_{syn}}},\label{limr}\\
	\lim_{\varepsilon\rightarrow 0}\frac{p_{0,\mathrm{mrc}}^*}{p_{0,\mathrm{ssc}}^*}&=\lim_{\varepsilon\rightarrow 0}\frac{p_{0,\mathrm{mrc}}^*/p_{0,\mathrm{sc}}^*}{p_{0,\mathrm{ssc}}^*/p_{0,\mathrm{sc}}^*}=\frac{\lim_{\varepsilon\rightarrow 0}p_{0,\mathrm{mrc}}^*/p_{0,\mathrm{sc}}^*}{\lim_{\varepsilon\rightarrow 0}p_{0,\mathrm{ssc}}^*/p_{0,\mathrm{sc}}^*}\nonumber\\
	&=\sqrt{(M!)^{-1/M}\frac{p_t+Mp_r+p_{syn}}{p_t+p_r+p_{syn}}},\label{limp}
	\end{align}
	where the last step in \eqref{limr} and \eqref{limp} comes from using \eqref{l2} and \eqref{ga}, and \eqref{l2} and \eqref{ga2}, respectively.
	\begin{theorem}
		As $\varepsilon\rightarrow 0$ and with no constraints on the power and rate, $p_{0,\mathrm{ssc}}^*\le p_{0,\mathrm{mrc}}^*$ when
		\begin{align}
		M\gtreqless \frac{0.786p_r-0.214(p_t+p_{syn})}{0.626p_r-0.374(p_t+p_{syn})}, \label{theS}
		\end{align}
		for $0.626p_r-0.374(p_t+p_{syn})\gtreqless 0$.
	\end{theorem}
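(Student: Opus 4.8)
The plan is to start from the limiting ratio of optimum transmit powers already derived in \eqref{limp}, namely
\begin{align}
\lim_{\varepsilon\rightarrow 0}\frac{p_{0,\mathrm{mrc}}^*}{p_{0,\mathrm{ssc}}^*}=\sqrt{(M!)^{-1/M}\frac{p_t+Mp_r+p_{syn}}{p_t+p_r+p_{syn}}},\nonumber
\end{align}
and impose the condition that characterizes $p_{0,\mathrm{ssc}}^*\le p_{0,\mathrm{mrc}}^*$, i.e. that this ratio is at least $1$. Squaring (everything is positive) this is equivalent to $(M!)^{-1/M}(p_t+Mp_r+p_{syn})\ge p_t+p_r+p_{syn}$, so the whole problem reduces to comparing $(M!)^{1/M}$ against $(p_t+Mp_r+p_{syn})/(p_t+p_r+p_{syn})$. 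The difficulty is that $(M!)^{1/M}$ is not an elementary closed form in $M$, so I would not try to solve the inequality exactly; instead I would replace $(M!)^{1/M}$ by a tractable surrogate and solve the resulting linear-in-$M$ inequality, which is exactly where the numerical constants $0.786,0.214,0.626,0.374$ in \eqref{theS} must come from.

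The key step, and the main obstacle, is pinning down which approximation of $(M!)^{1/M}$ yields those specific coefficients. A natural candidate is the leading Stirling behaviour $(M!)^{1/M}\approx M/e$, but that is too crude to give the stated constants; a sharper route is to take a first-order (affine) approximation of $(M!)^{1/M}$ in $M$ over the relevant range, writing $(M!)^{1/M}\approx aM+b$ for suitable fitted constants $a,b$. I expect $a\approx 0.626$ and the complementary quantities $1-a=0.374$, $b$-type corrections $0.786$ and $0.214$ to emerge from substituting this affine form. Concretely, after substituting $(M!)^{1/M}\approx aM+b$ the inequality $(M!)^{1/M}(p_t+p_r+p_{syn})\le p_t+Mp_r+p_{syn}$ becomes linear in $M$, of the form $M\big(a(p_t+p_r+p_{syn})-p_r\big)\le (1-b)(p_t+p_r+p_{syn})-p_r$ up to rearrangement, and isolating $M$ produces the ratio in \eqref{theS}. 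The sign of the coefficient of $M$, namely $a(p_t+p_r+p_{syn})-p_r$, controls the direction of the inequality, which is precisely why the statement carries the $\gtreqless$ symbols and the side condition on $0.626p_r-0.374(p_t+p_{syn})$.

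Having reduced the claim to this linear inequality, the remaining work is bookkeeping: I would rearrange $a(p_t+p_r+p_{syn})-p_r = (a-1)p_r+a(p_t+p_{syn})$ and identify it with $0.626p_r-0.374(p_t+p_{syn})$ (so that $a\approx0.626$ and $a-1\approx-0.374$, consistent since $0.626-1=-0.374$), and similarly rearrange the right-hand constant into $0.786p_r-0.214(p_t+p_{syn})$. Dividing these two expressions gives the threshold in \eqref{theS}, and the direction of the inequality flips exactly according to whether the denominator $0.626p_r-0.374(p_t+p_{syn})$ is positive or negative, matching the stated side condition. The only genuinely delicate point is justifying that the affine surrogate for $(M!)^{1/M}$ is accurate enough over the integer values of $M$ of interest that the threshold predicts the true crossover; I would support this, as the paper does elsewhere, by noting the approximation is tight in the regime considered and deferring to the numerical validation.
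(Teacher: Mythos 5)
Your approach is exactly the paper's: start from \eqref{limp}, square the ratio, reduce the claim to $p_t+Mp_r+p_{syn}\gtrless (M!)^{1/M}(p_t+p_r+p_{syn})$, replace $g(M)=(M!)^{1/M}$ by an affine surrogate $aM+b$ (justified by curve-fitting and numerical tightness), and solve the resulting linear inequality for $M$, with the direction controlled by the sign of the coefficient of $M$. The only slip is in identifying the fitted constants: the paper fits $(M!)^{1/M}\approx 0.374M+0.786$ over $1\le M\le 64$, so the slope is $a=0.374$ (not $0.626$) and the intercept is $b=0.786$; the constants $0.626$ and $0.214$ then arise as $1-a$ and $1-b$ once the $p_r$ and $p_t+p_{syn}$ terms are collected, exactly as your linear rearrangement prescribes.
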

\begin{proof}
	We require to solve $\lim_{\varepsilon\rightarrow 0}\frac{p_{0,\mathrm{mrc}}^*}{p_{0,\mathrm{ssc}}^*}>1$ for $M$, thus we proceed from \eqref{limp} as follows
	\begin{align}
	(M!)^{-1/M}\frac{p_t\!+\!Mp_r\!+\!p_{syn}}{p_t\!+\!p_r\!+\!p_{syn}}&\! >\! 1 \nonumber\\
	p_t\!+\!Mp_r\!+\!p_{syn} & \!>\! (M!)^{1/M} (p_t\!+\!p_r\!+\!p_{syn}) . \label{ineL}
	\end{align}
	Notice that isolating $M$ in \eqref{ineL} is analytically intractable mainly because of the the tangled analytical dependence of the function $g(M)=(M!)^{1/M}$ on $M$. However, we find out that $g(M)$	 exhibits a nearly linear relation with $M$ given by
	\begin{align}
	g(M)\approx 0.374M+0.786.\label{gM}
	\end{align}
	The coefficients were obtained by using linear curve-fitting in the interval $1\le M\le 64$ and the accuracy of such approximation is shown in Fig.~\ref{Fig7}, where it is also observed that \eqref{gM} is still accurate for $M>64$. Finally, substituting \eqref{gM} into \eqref{ineL} and solving for $M$ we attain \eqref{theS}.
		\begin{figure}[t!]
		\centering
		\subfigure{\includegraphics[width=0.45\textwidth]{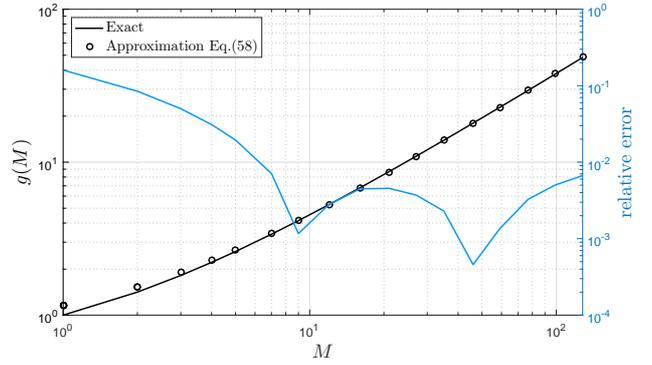}}
		\vspace*{-2mm}
		\caption{Exact and approximate curves for  $g(M)$. Relative error of the approximation.}	
		\vspace*{-2mm}
		\label{Fig7}
	\end{figure}
\end{proof}
Obviously if \eqref{theS} holds, the power consumption under the SSC scheme would be smaller than with MRC. If the overall  power consumption gap is greater than the gap in the optimum transmit rate given in \eqref{limr}, then SSC will be also more energy efficient. However, characterizing analytically such trade-off is cumbersome, hence we resort to numerical methods next.
\section{Numerical Analysis}\label{results}
Numerical results are presented in this section. Unless specified otherwise we set $\delta=10$ dB and on the basis of the power consumption
values found in \cite{Shuguang.2004}, we set $p_t=50$ mW ($-13$ dB), $p_r=60$ mW ($-12$ dB) and $\eta=0.35$; while $p_{syn}=10$ mW ($-20$ dB), which is a reasonable value according to \cite{Krishna.2007}.  Additionally, we consider $\kappa=8$, $p_{\min}=10$ mW ($-20$ dB), $p_{\max}=10$ W ($10$ dB) and $r_{\min}=0.01$ bps/Hz.

Fig.~\ref{Fig8} shows the optimization results as functions of the target outage constraint  for receiving devices with $M\in\{1,2,4,8\}$ antennas. The topology under study consists of $\kappa=8$ interfering nodes causing $2^{-i}\ \mu$W, $i=1,...,\kappa$, of average interference to $R_0$, while the path loss of the typical link is set to $\lambda_0=10^{-5}$, thus, $\delta\approx 10$ ($10$ dB). We compare our analytical results with a Monte Carlo approach using $10^7$ SIR realizations and a brute force technique for finding the solution of \textbf{P1} for each of the diversity schemes. Since we use a 2-dimensional search, the method is extremely time consuming and only sufficiently accurate for $\varepsilon\ge 10^{-5}$, thus, the simulation was carried out only in that region. 
We can notice that
\begin{itemize}
		\item Monte Carlo simulation results match accurately with our analytical results, hence validate our work;
	\item operating with only one antenna is practically unfeasible for the region where $\varepsilon<10^{-4}$ is required, while as the number of antennas increases we can operate with extremely high reliability with even relatively large data rates and reduced power consumption, thus, greater energy efficiency;
	\item the more stringent the reliability constraints,  more transmit power requires to be allocated while reducing the transmit rate and the optimum energy efficiency of the system. However notice that the curve slopes tend to 0 as the number of antennas increases. For instance, for $M=1$ yields 
	\begin{align}
	p_0^*\big|_{\varepsilon=10^{-1}}-p_0^*\big|_{\varepsilon=10^{-4}}&=-15\ \mathrm{dB}\nonumber\\
	r_0^*\big|_{\varepsilon=10^{-1}}/r_0^*\big|_{\varepsilon=10^{-4}}&=31\ (14.9\ \mathrm{dB}) \nonumber\\
	\mathrm{EE}^*\big|_{\varepsilon=10^{-1}}-\mathrm{EE}^*\big|_{\varepsilon=10^{-4}}&=29\ \mathrm{dB}\nonumber
	\end{align} 
	while already for $M=8$ and SC the variations are not very considerable since  
	\begin{align}
	p_0^*\big|_{\varepsilon=10^{-1}}-p_0^*\big|_{\varepsilon=10^{-8}}&=-4.5\ \mathrm{dB}\nonumber\\
	r_0^*\big|_{\varepsilon=10^{-1}}/r_0^*\big|_{\varepsilon=10^{-8}}&=2.5\ (4\ \mathrm{dB}) \nonumber\\
	\mathrm{EE}^*\big|_{\varepsilon=10^{-1}}-\mathrm{EE}^*\big|_{\varepsilon=10^{-8}}&=7.5\ \mathrm{dB}\nonumber
	\end{align}
\begin{figure}[t!]
	\centering
	\subfigure{\includegraphics[width=0.45\textwidth]{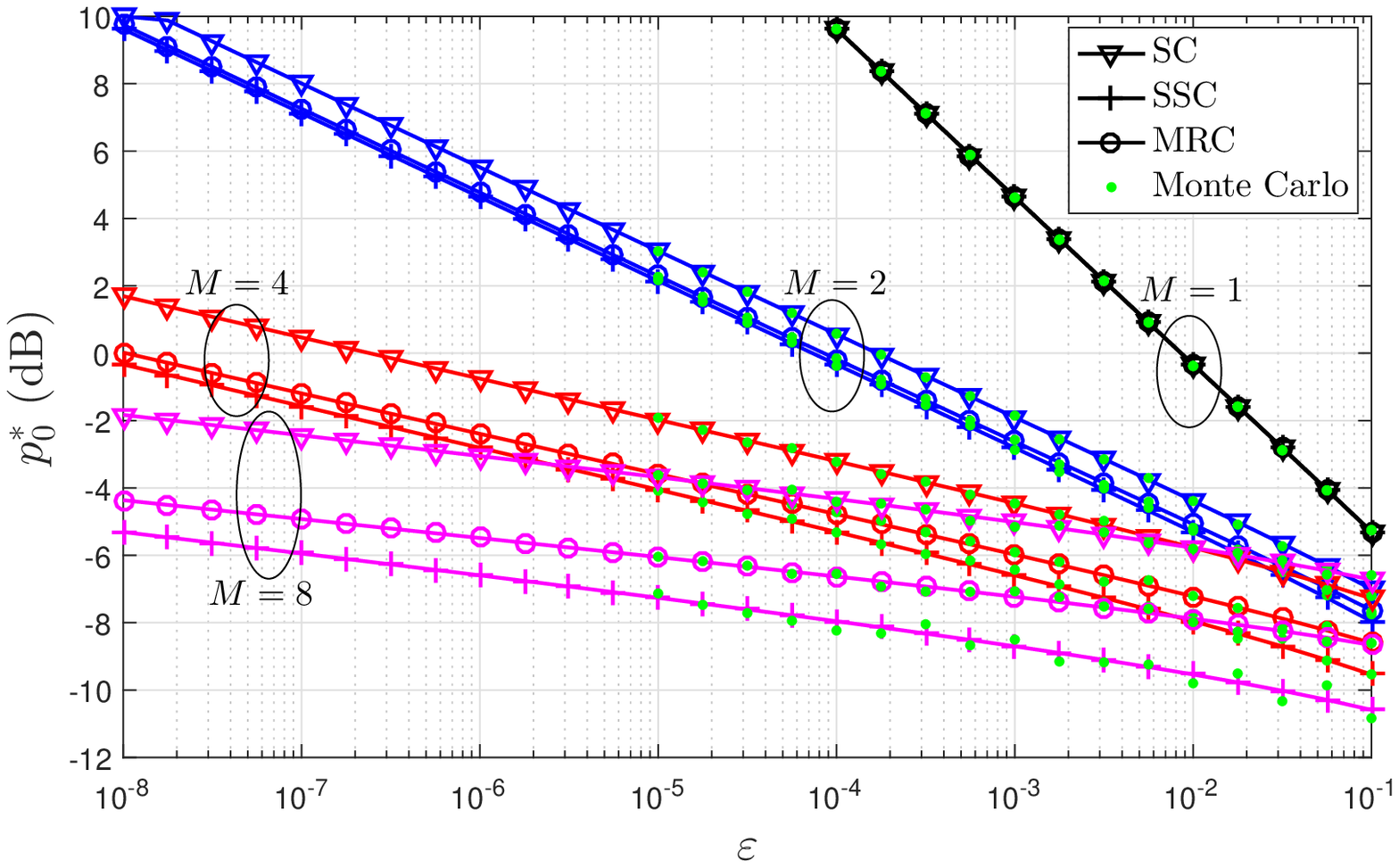}}\\ 
	\subfigure{\includegraphics[width=0.45\textwidth]{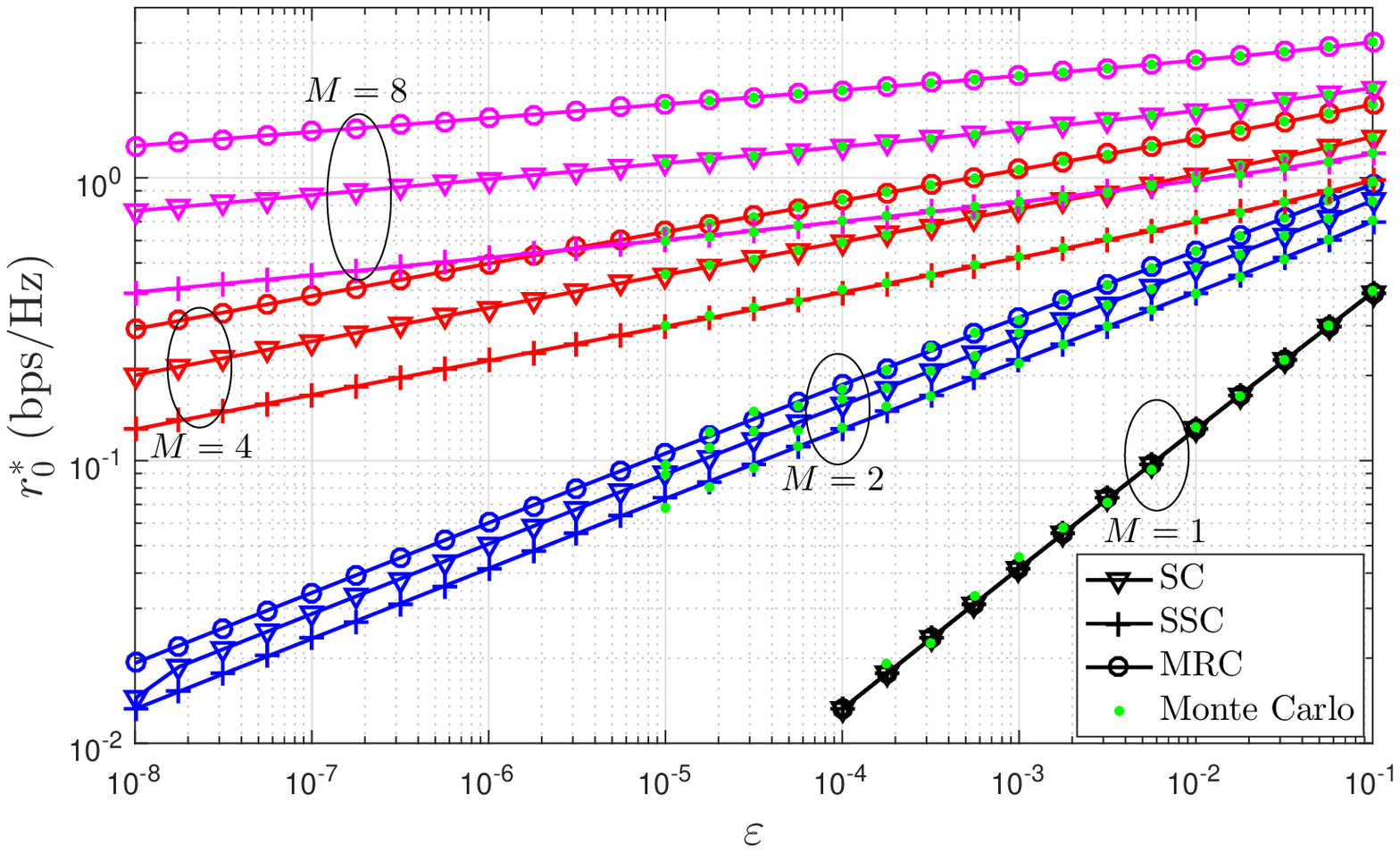}}\\
	\subfigure{\includegraphics[width=0.45\textwidth]{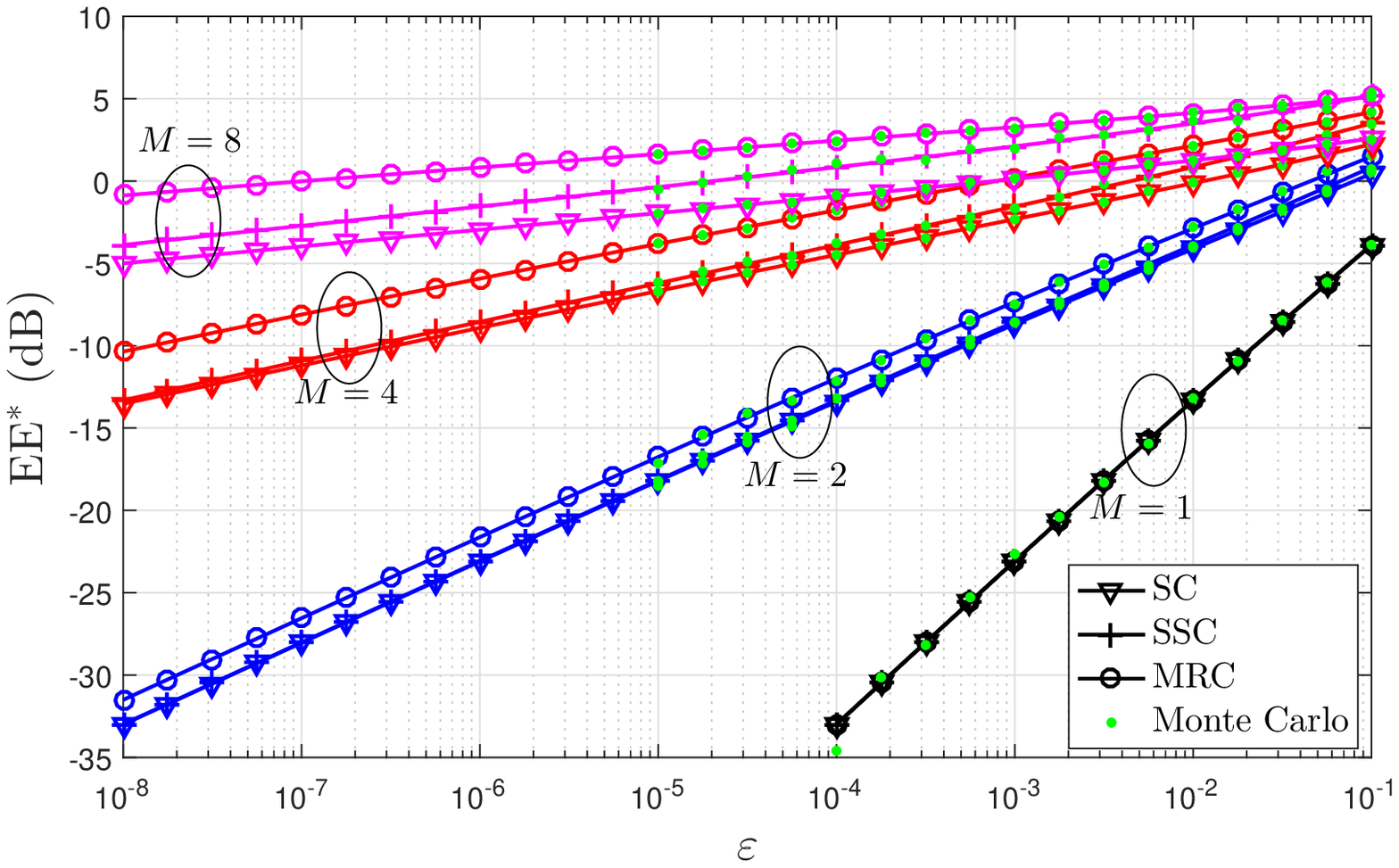}}
	\vspace*{-2mm}
	\caption{Optimization results $(a)$ $p_0^*$ (left-top), $(b)$ $r_0^*$ (right-top) and $(c)$ $\mathrm{EE}^*$ (bottom) as functions of $\varepsilon$ for $M\in\{1,2,4,8\}$ and SC, MRC diversity schemes. We set $\lambda_0=10^{-5}$ and $p_i\lambda_i=2^{-i}\ \mu$W for $i=1,\cdots,\kappa$.}	
	\vspace*{-2mm}	
	\label{Fig8}
\end{figure}
	\item as previously discussed in Subsection~\ref{SubC}, MRC is more energy efficient than the SC scheme since it requires less power while providing greater data rates to meet the same reliability constraint. Additionally, as the number of antennas increases, the gap in the performance metrics increases as predicted by the results for the asymptotic ultra-reliability regime. In fact, those results predict that
	\begin{itemize}
		\item for $M=4$, MRC allows operating with an optimum transmit rate/transmit power $1.4877$ (1.73 dB) times greater/smaller than what SC allows;
		\item for $M=8$, MRC allows operating with an optimum transmit rate/transmit power $1.94$ (2.88 dB) times greater/smaller than what SC allows;
	\end{itemize}
	which can be easily corroborated in Fig.~\ref{Fig8}a and Fig.~\ref{Fig8}b for $\varepsilon=10^{-8}$. Interestingly, that gap in the performance remains similar even for less stringent values of $\varepsilon$. 
	\item as previously discussed in Subsection~\ref{B1}, SSC is also more energy efficient than the SC scheme since it is able of achieving the same reliability performance using the same pair $(r_0,p_0)$ but with reduced power consumption. It turns out that the optimum transmit rate and power for SSC is smaller than for SC, and tends to $\sqrt{\frac{2}{M+1}}p_\mathrm{0,sc}^*$ as $\varepsilon\rightarrow 0$ according to \eqref{ga} and \eqref{ga2} for the chosen values of system parameters. As in the previous case, this gap characterization is analytically accurate as corroborated in Fig.~\ref{Fig8}a and Fig.~\ref{Fig8}b and even remains valid for not so stringent values of $\varepsilon$;
	\item results in Fig.~\ref{Fig8}c show MRC as more energy efficient than SSC, and the performance gap increases as the number of antennas increases and/or the reliability requirement becomes more stringent. According to  \eqref{theS}, as $\varepsilon\rightarrow 0$ SSC consumes less energy for transmitting than MRC for $M>2$, thus much smaller overall power consumption as $M$ increases as well. However, under MRC $T_0$ is able of transmitting with a larger data rate that overcompensates the loss in power consumption and consequently making the system more energy efficient.
\end{itemize}
\begin{figure}[t!]
	\centering
	\subfigure{\includegraphics[width=0.45\textwidth]{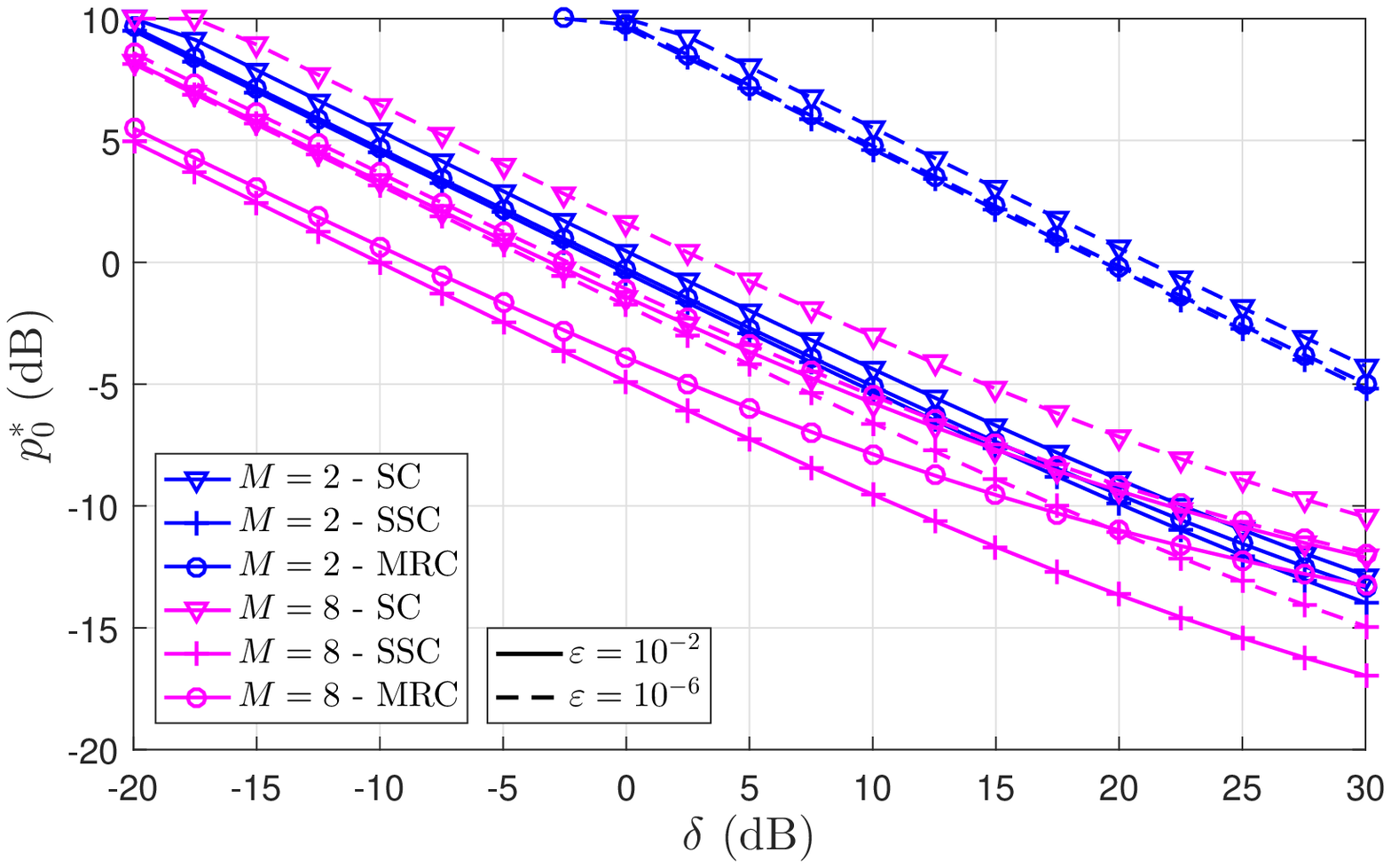}}\\
	\subfigure{\includegraphics[width=0.45\textwidth]{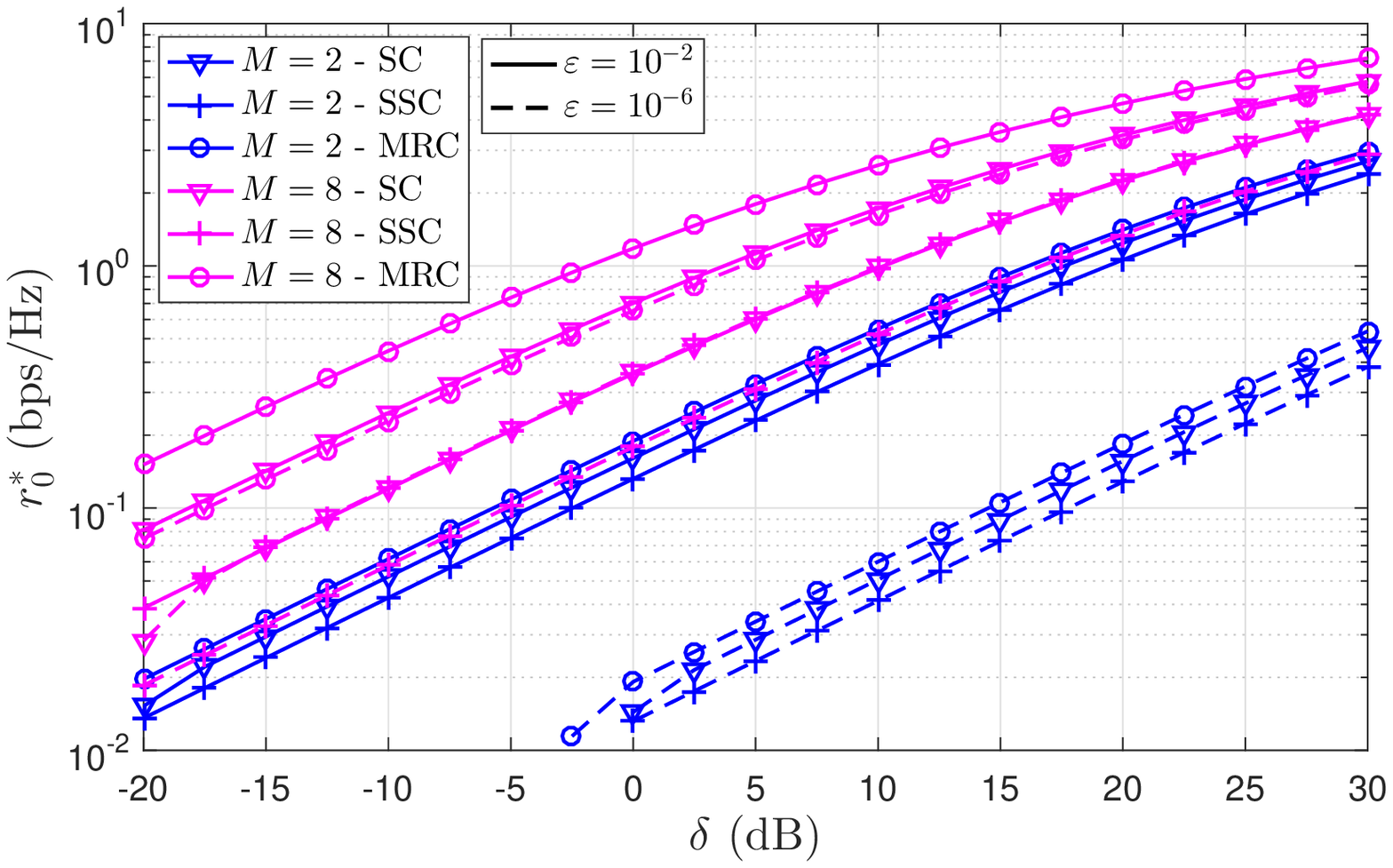}}\\
	\subfigure{\includegraphics[width=0.45\textwidth]{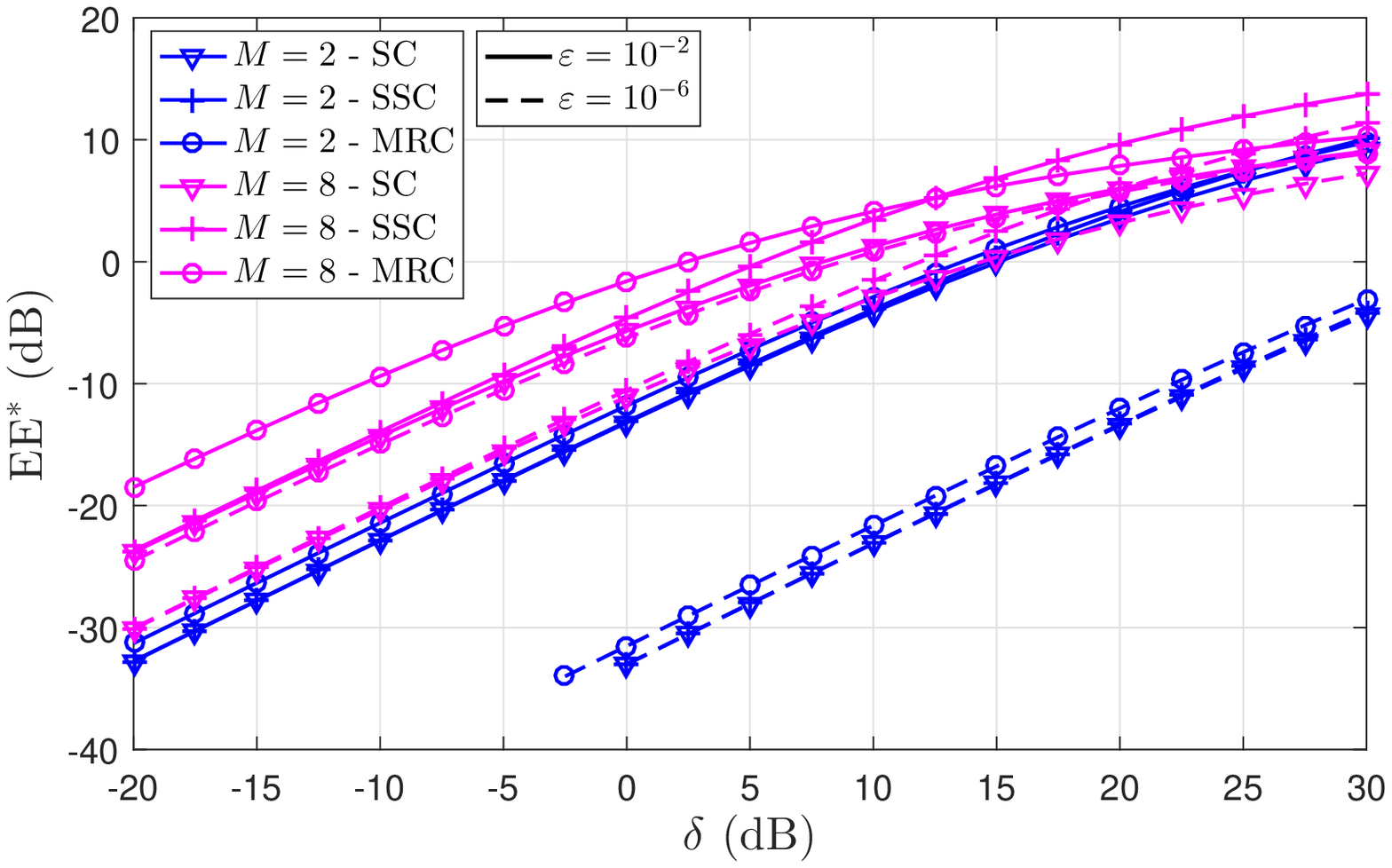}}
	\vspace*{-2mm}
	\caption{Optimization results $(a)$ $p_0^*$ (left-top), $(b)$ $r_0^*$ (right-top) and $(c)$ $\mathrm{EE}^*$ (bottom) as functions of $\delta$ for $M\in\{2,8\}$, $\varepsilon\in\{10^{-2},10^{-6}\}$ and SC, SSC, MRC diversity schemes.}	
	\vspace*{-2mm}	
	\label{Fig9}
\end{figure}

All the other remaining figures focus only on the results coming from evaluating the provided analytical expressions, therefore, they rely entirely on parameter $\delta$. In fact, Fig.~\ref{Fig9} shows the performance as a function of $\delta$. As $\delta$ increases, the optimum power decreases, while the optimum transmit rate and energy efficiency increases.  This is because an increment on $\delta$ is due to a smaller path loss in the typical link and/or smaller average perceived interference, thus, satisfying the reliability constraints more easily. From an analytical point of view, the greater $\delta$, the greater $\omega$ as shown in \eqref{omega}, thus, according to the discussion around Fig.~\ref{Fig5}, $p_0^*$ decreases, while $r_0^*$ and $\mathrm{EE}^*$ increase. Once again we can notice that the multi-antenna configuration enables the ultra-reliability operation even for very small values of $\delta$. Notice that the superiority of the MRC and SSC schemes over SC is evidenced again, while SSC is more energy efficient than MRC for large $\delta$ and even more as $M$ increases. This is because satisfying the reliability constraints becomes easier and the extra power consumption that would come from utilizing the entire set of antennas, as it is the case when using the SC or MRC schemes, does not bring great benefits.
\begin{figure}[t!]
	\centering
	\subfigure{\includegraphics[width=0.45\textwidth]{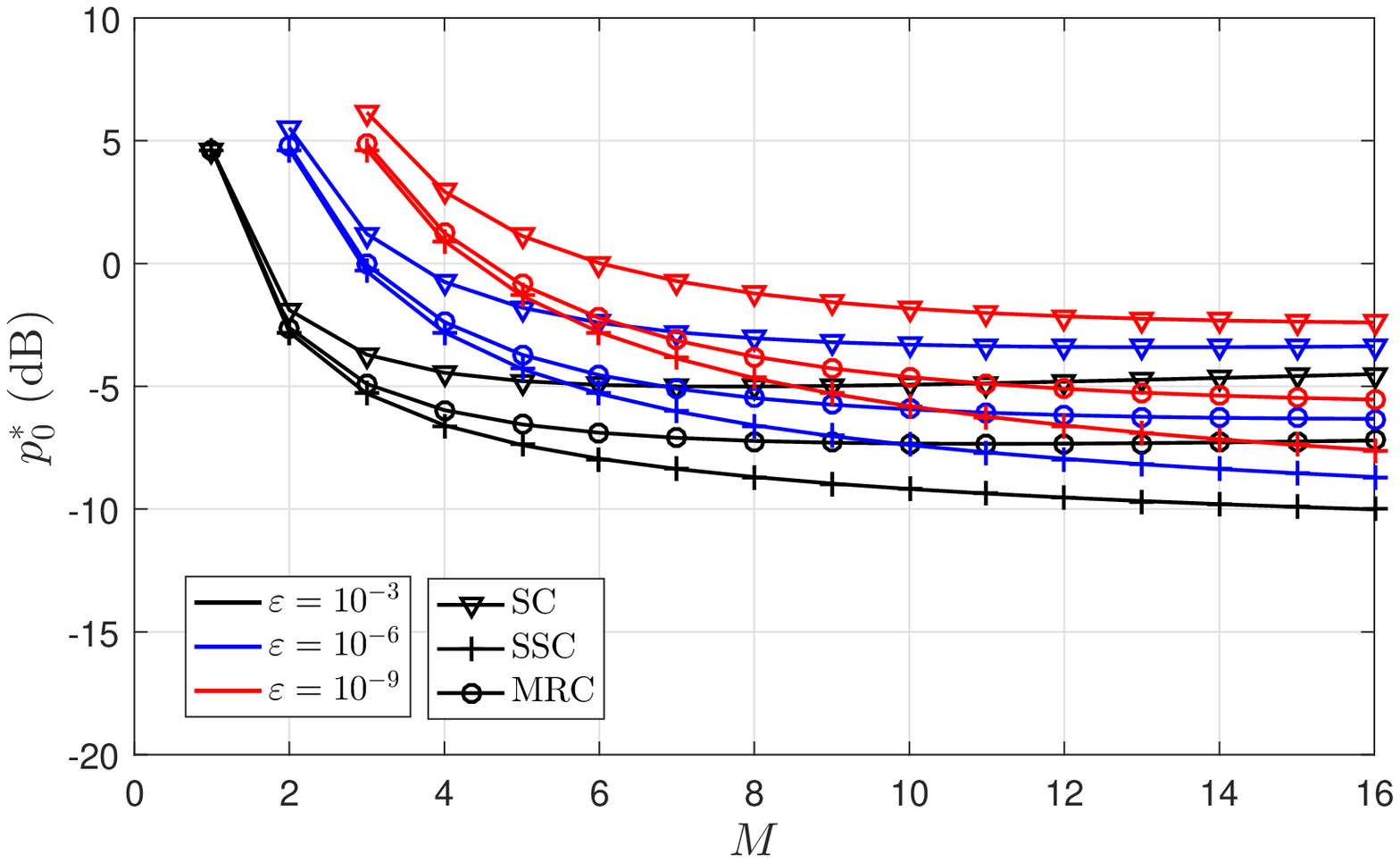}}\\
	\subfigure{\includegraphics[width=0.45\textwidth]{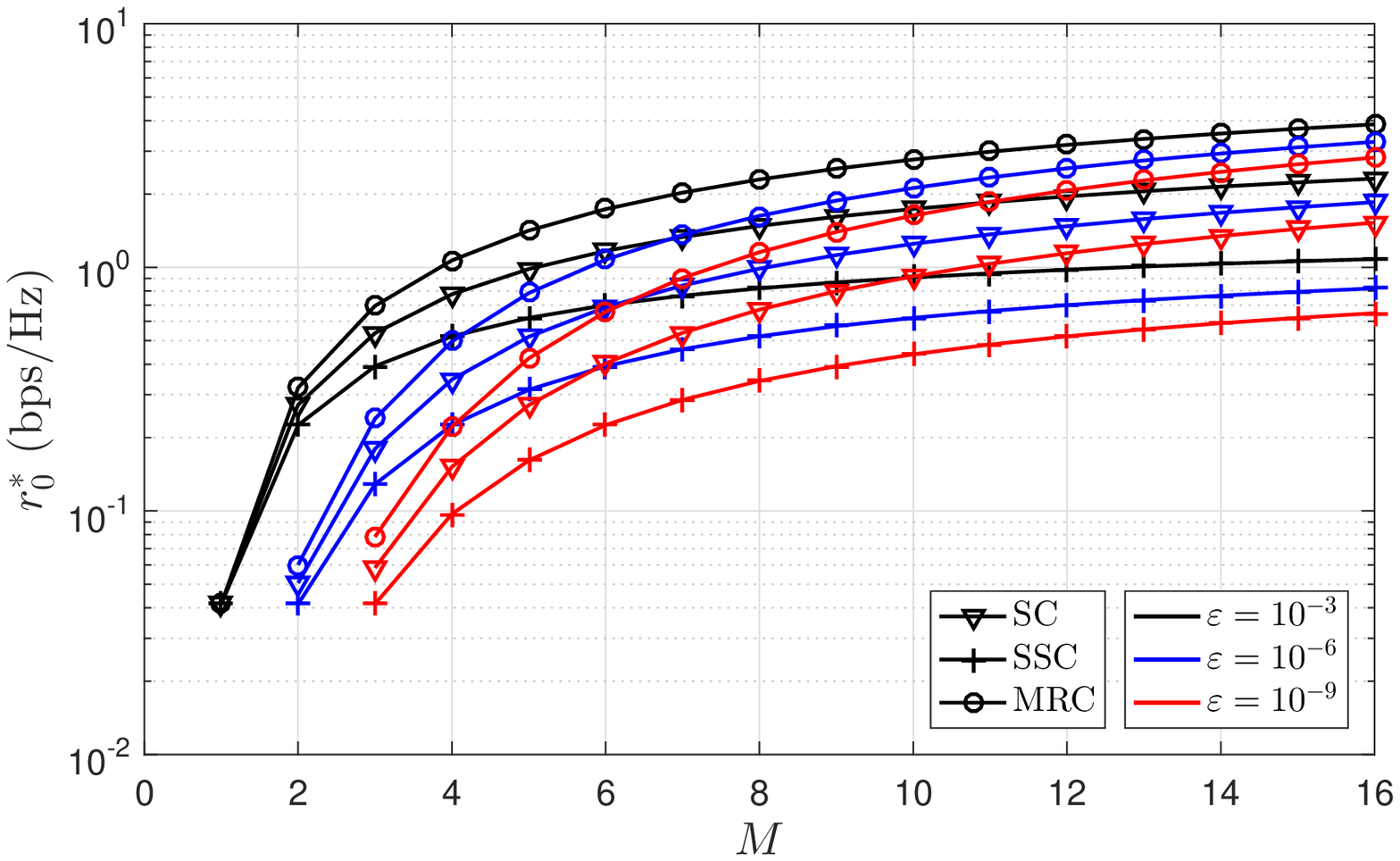}}\\
	\subfigure{\includegraphics[width=0.45\textwidth]{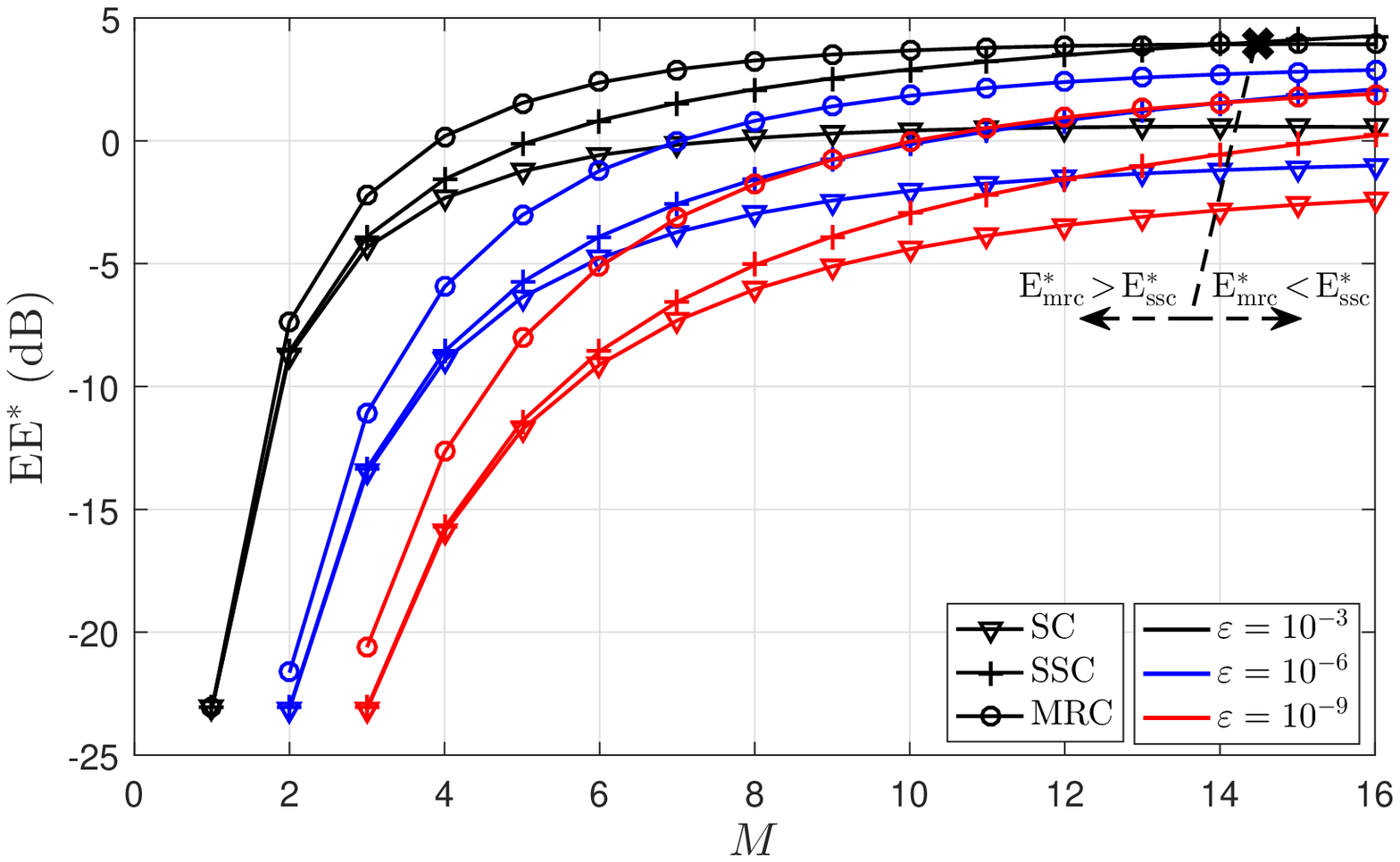}}
	\vspace*{-2mm}
	\caption{Optimization results $(a)$ $p_0^*$ (left-top), $(b)$ $r_0^*$ (right-top) and $(c)$ $\mathrm{EE}^*$ (bottom) as functions of $M$ for $\varepsilon\in\{10^{-3},10^{-6},10^{-9}\}$, and SC, SSC, MRC diversity schemes.}
	\vspace*{-2mm}
	\label{Fig10}
\end{figure}
Meanwhile, the impact of the multiple antennas at $R_0$ is shown with details in Fig.~\ref{Fig10} for $\varepsilon\in\{10^{-3},10^{-6},10^{-9}\}$. These results validate the discussion carried out for Fig.~\ref{Fig8} when analyzing the impact of $M$. This is i) the curve slopes tend to 0 as the number of antennas increases; ii) under SSC $T_0$ transmits with the smallest rate and power and the gap tends to widen as $M$ increases, e.g., according to $\sqrt{(M+1)/2}$ when comparing to SC, and $\sqrt{g(M)(M+1)/2}$, $\sqrt{(M+1)/(2g(M))}$ when comparing to MRC, respectively\footnote{Notice that these gaps are obtained for $\varepsilon\rightarrow 0$ and fixed $M$; however, as $M$ increases, $\varepsilon^{1/M}$ increases as well and according to \eqref{omega} $\omega$ decreases slower on $\varepsilon$; therefore, the larger $M$ the smaller $\varepsilon$, hence those gaps are accurate.}; iii) MRC is more energy efficient than SC and the performance gap between these schemes increases as $M$ increases. While in Fig.~\ref{Fig8} we showed that MRC overcame SSC as well, Fig.~\ref{Fig10} illustrates that the energy efficiency performance gap decreases with $M$ but also SSC could become even more energy efficient, specially when operating with not so stringent reliability requirements, e.g., $\mathrm{E_{ssc}}^*>\mathrm{E_{mrc}}^*$ for $M\ge 15$ and $\varepsilon=10^{-3}$.
\begin{figure}[t!]
	\centering
	\subfigure{\includegraphics[width=0.45\textwidth]{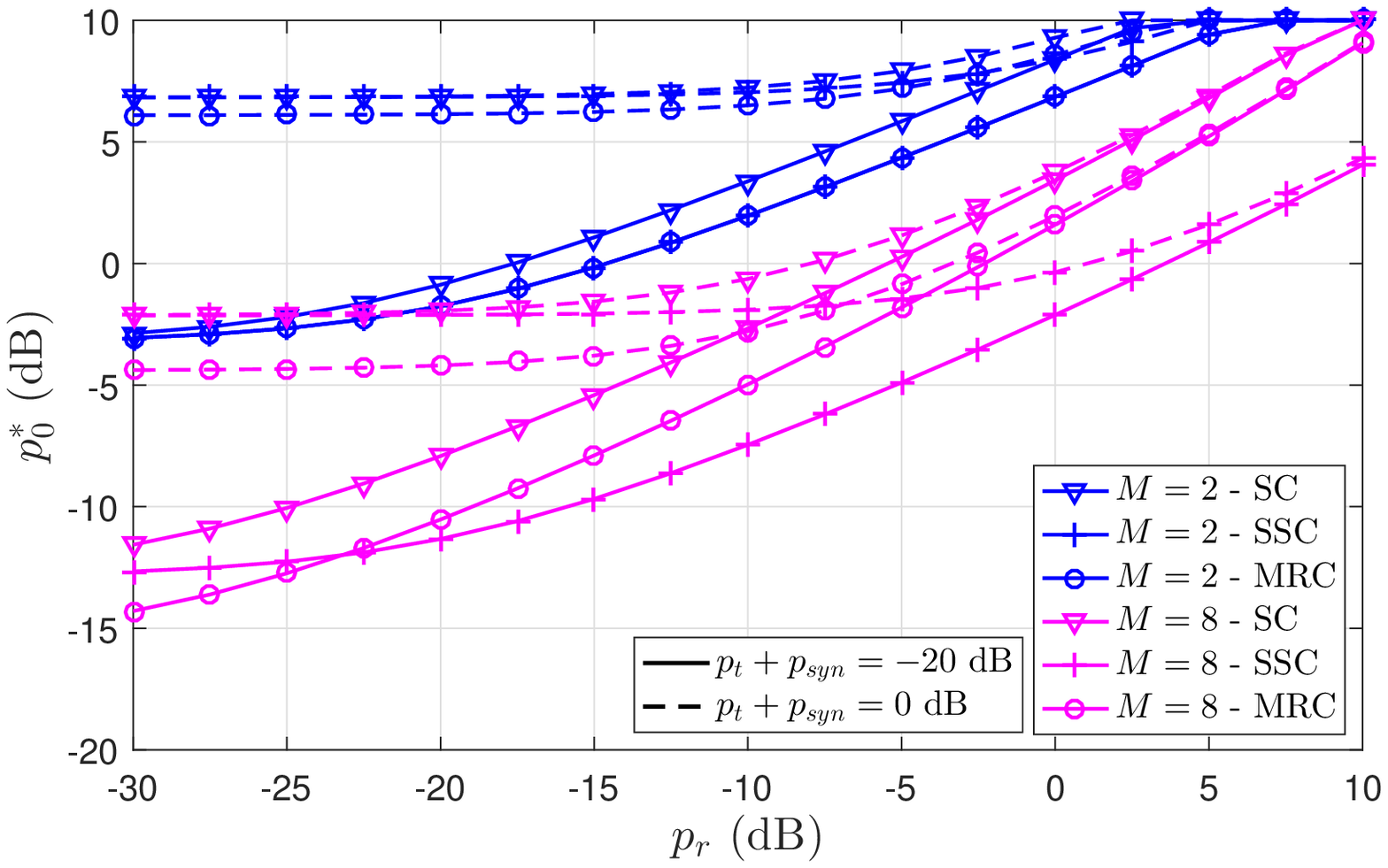}}\\
	\subfigure{\includegraphics[width=0.45\textwidth]{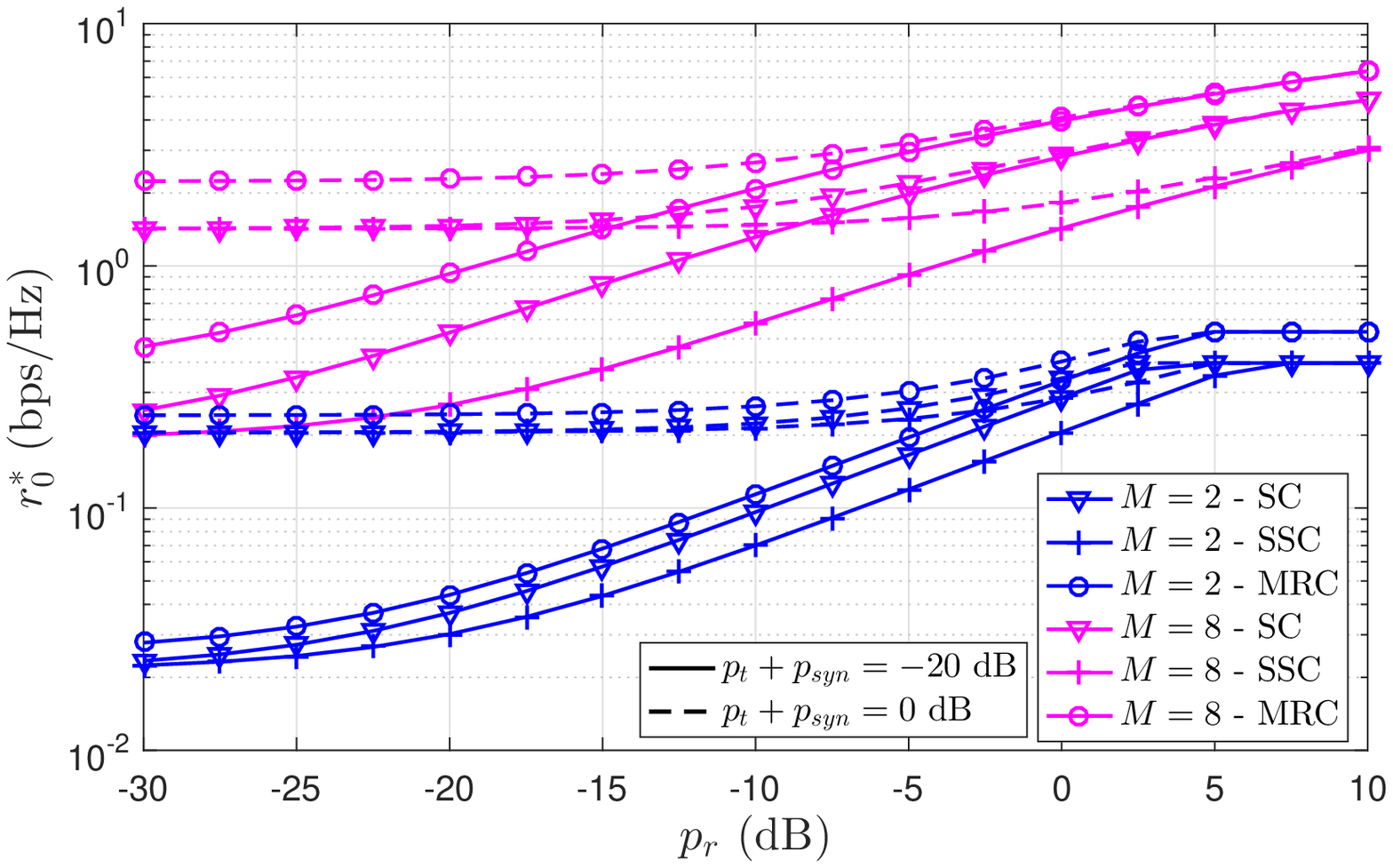}}\\
	\subfigure{\includegraphics[width=0.45\textwidth]{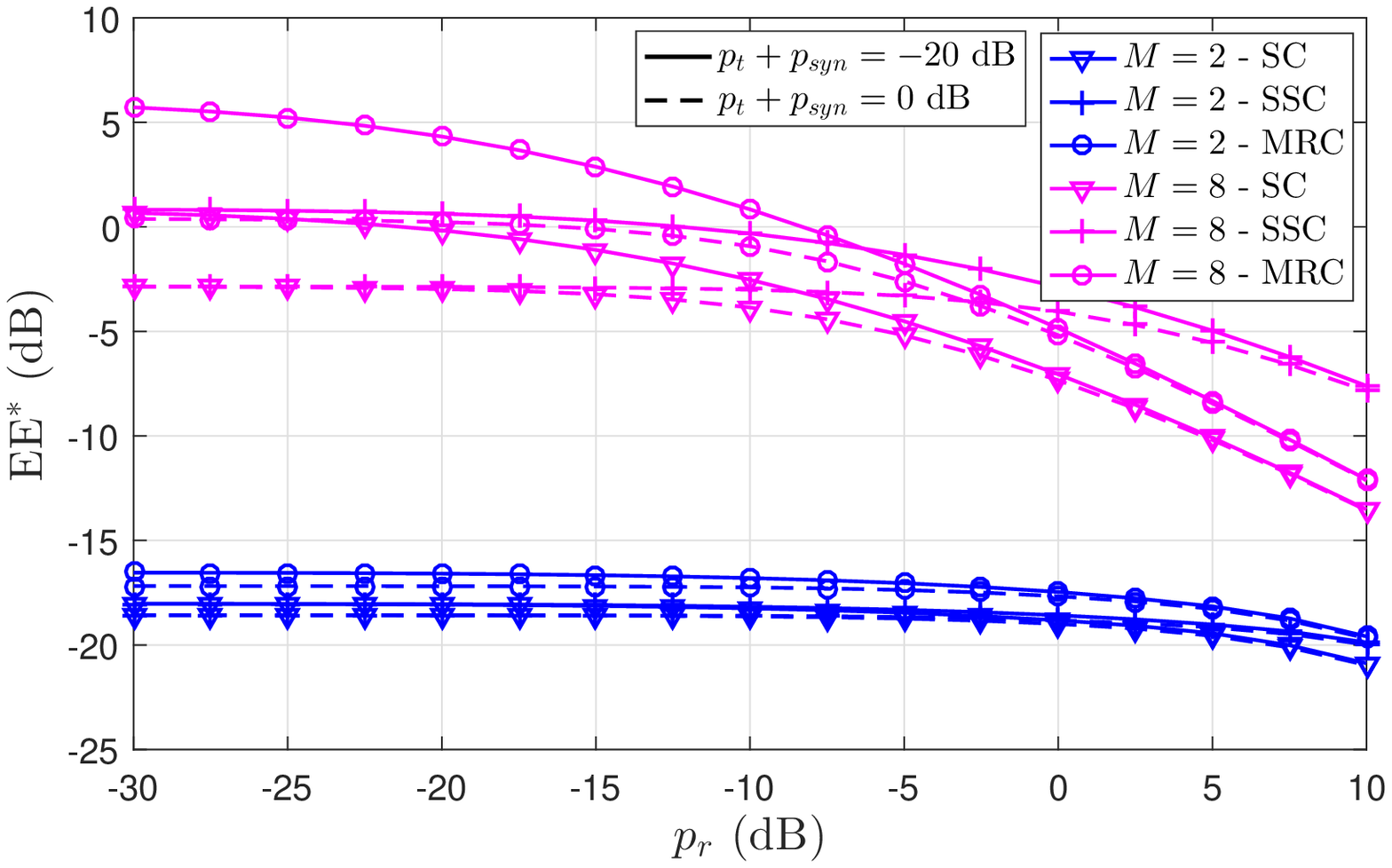}}
	\vspace*{-2mm}
	\caption{Optimization results $(a)$ $p_0^*$ (left-top), $(b)$ $r_0^*$ (right-top) and $(c)$ $\mathrm{EE}^*$ (bottom) as functions of $p_r$ for $M\in\{2,8\}$, $p_t+p_{syn}\in\{-20,0\}$ dB, $\varepsilon=10^{-5}$, and SC, SSC, MRC diversity schemes.}	
	\vspace*{-2mm}	
	\label{Fig11}
\end{figure}
\begin{figure}[t!]
	\centering
	\subfigure{\includegraphics[width=0.48\textwidth]{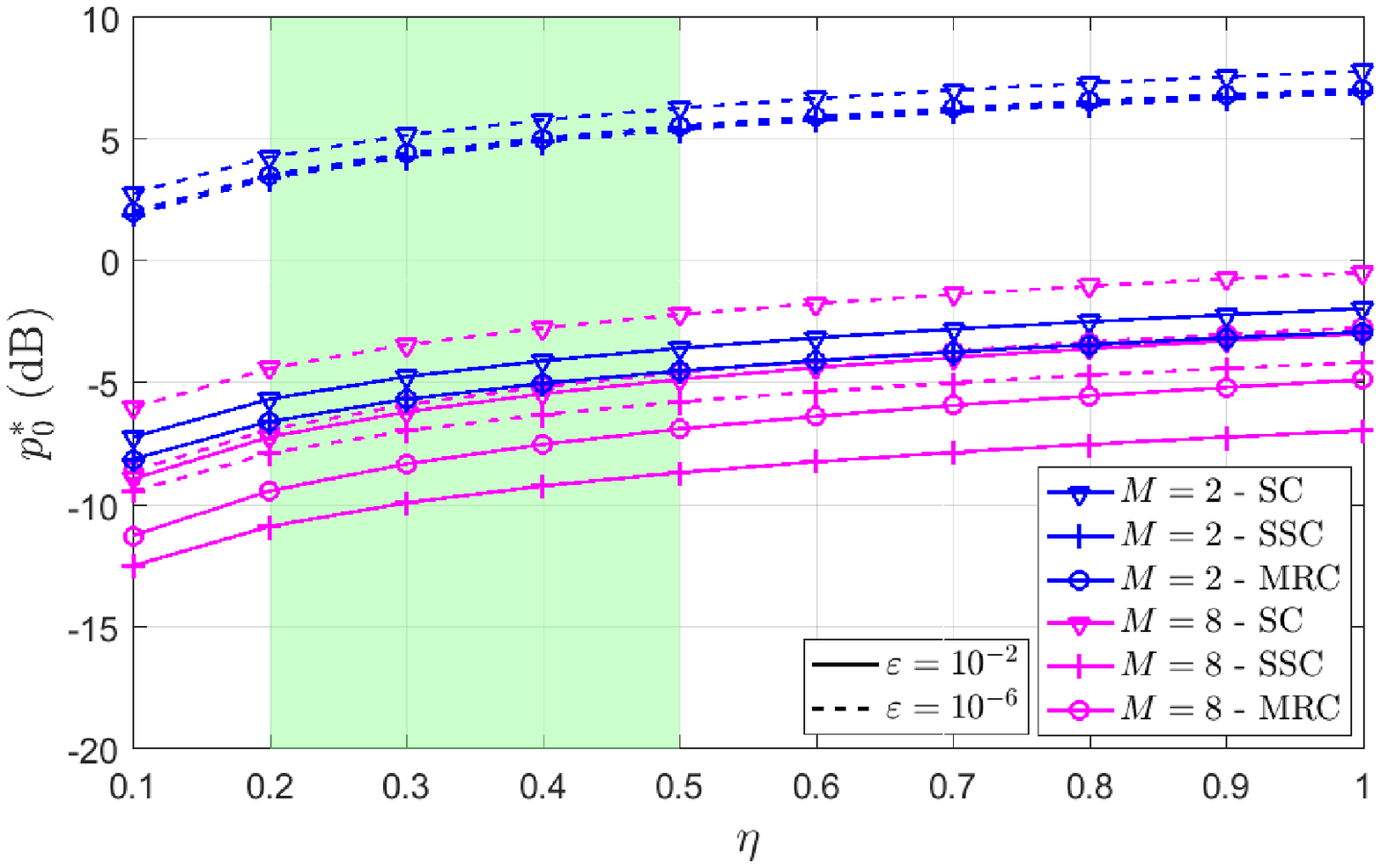}}\\
	\subfigure{\includegraphics[width=0.48\textwidth]{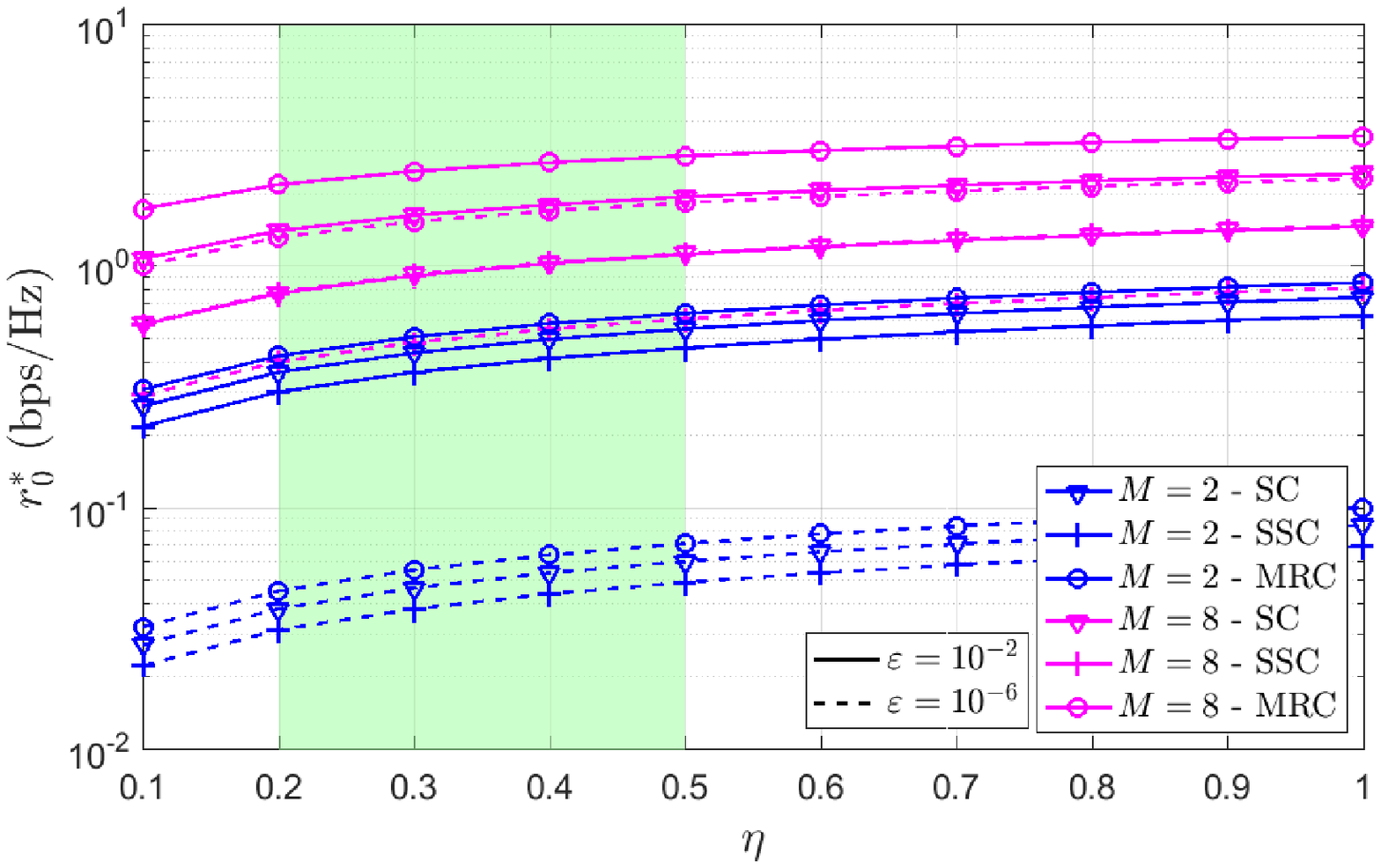}}\\
	\subfigure{\includegraphics[width=0.48\textwidth]{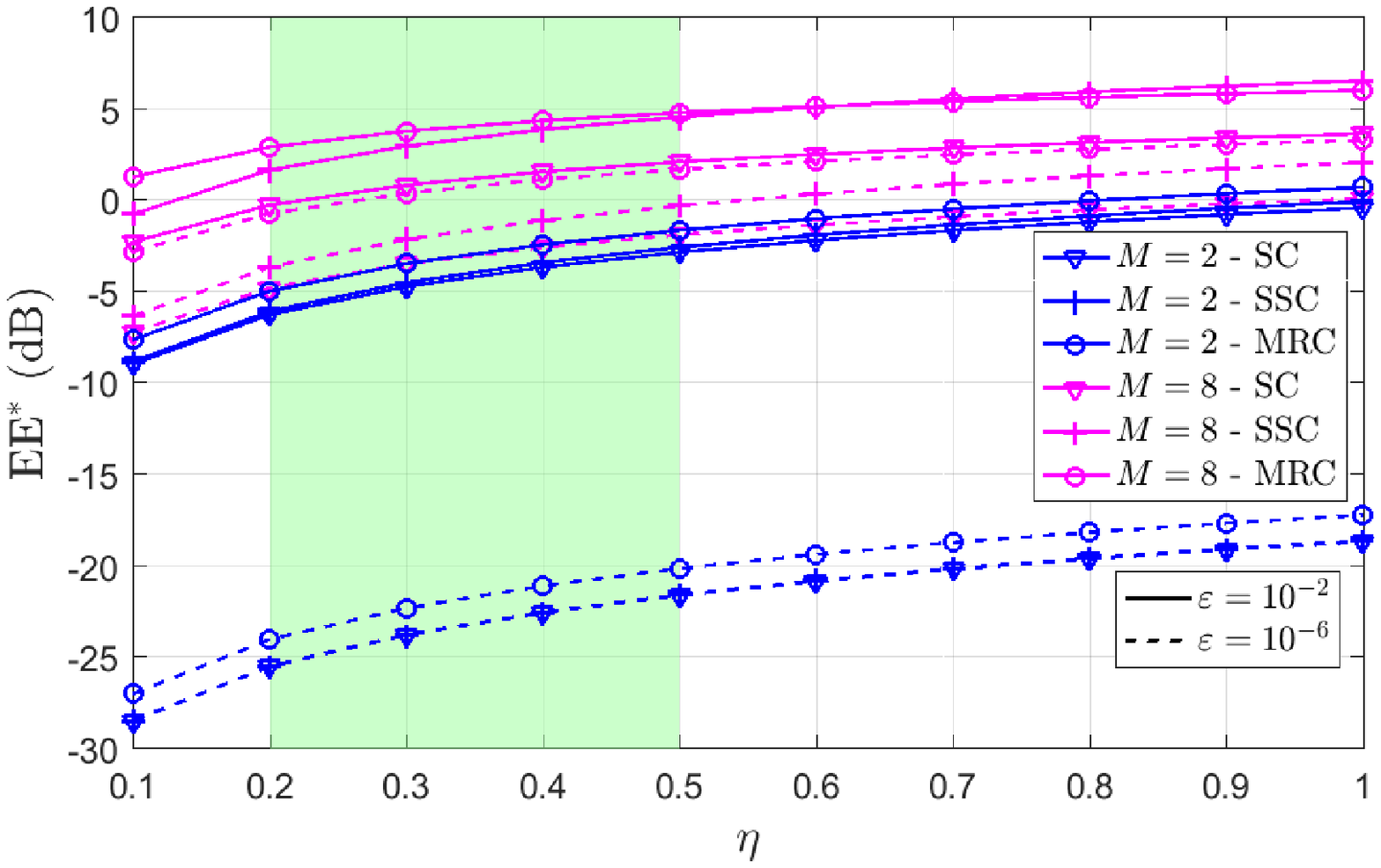}}
	\vspace*{-2mm}
	\caption{Optimization results $(a)$ $p_0^*$ (left-top), $(b)$ $r_0^*$ (right-top) and $(c)$ $\mathrm{EE}^*$ (bottom) as functions of $\eta$ for $M\in\{2,8\}$, $\varepsilon\in\{10^{-2},10^{-6}\}$ and SC, SSC, MRC diversity schemes.}
	\vspace*{-2mm}		
	\label{Fig12} 
\end{figure}

Fig.~\ref{Fig11} illustrates the performance as a function of the power consumption parameters. Since we now consider $p_t+p_{syn}$ and $p_r$ separately, the analysis here complements our previous discussions around Fig.~\ref{Fig5} where the overall circuitry power consumption was considered as a whole and intrinsically included the effect of $M$. As shown in Fig.~\ref{Fig11}, as $p_r$ increases the system energy efficiency is increasingly affected, specially when operating with large $M$ since for fixed $p_t+p_{syn}$ the circuitry consumption increases linearly with $p_r$ and $M$. As expected, as $p_r$ increases the SSC scheme becomes the most energy efficient. 
Finally, according to Fig.~\ref{Fig12}, the greater the drain efficiency of the amplifier at $T_0$, the greater the optimum transmit power, data rate and energy efficiency. This result is very interesting since so far an increment in the optimum transmit power conduced to a decrease in the optimum transmit rate and optimum energy efficiency, while variations in $\eta$ affect the three parameters, $p_0^*,\ r_0^*$ and $\mathrm{EE}^*$, in similar way. Notice that practical power amplifiers usually operate in the region $0.2 \lesssim\eta\lesssim 0.5$ \cite{Joung.2015}, and according to Fig.~\ref{Fig12}c the energy efficiency performance in those limits differs in around $2-4$ dB, which is substantial and it raises the need of efficient transmit hardware.
\section{Conclusion}\label{conclusions}
In this paper, we proposed a joint power control and rate allocation scheme that meets the stringent reliability constraints of the system while maximizing its energy efficiency.
The allocated resources depend on easy to obtain information such as i) $\delta p_0$, which is the ratio between the average signal and the average interference power at the receiver; ii) $\kappa$, the number of interfering transmitters; iii) $\varepsilon$, the reliability constraint; and iv) $M$, the number of antennas that are available at the receiver side as well as the diversity scheme, SC, SSC or MRC. 
We show the superiority of the MRC scheme with respect to SC in terms of energy efficiency 
since it allows operating with greater/smaller transmit rate/power. In that sense, we have proved that the gap between the optimum allocated resources for SC and MRC converges to $(M!)^{1/(2M)}$ under ultra-reliability constraints. Additionally, the optimum transmit rate and power are smaller when operating with SSC than with SC, and the ratio gap tends to be inversely proportional to the square root of a linear function of $M$; however, such allocation provides positive gains in the energy efficiency performance. Meanwhile, in most cases MRC was also shown to be more energy efficient than SSC, although this does not hold only when operating with extremely large $M$, $\delta$ and/or highly power consuming receiving circuitry. Numerical  results show the feasibility of the ultra-reliable operation when the number of antennas increases, while the greater the fixed power consumption and/or drain efficiency of the transmit amplifier, the greater the optimum transmit power and rate.

\bibliographystyle{IEEEtran}
\bibliography{IEEEabrv,references}

\begin{thebibliography}{10}
\providecommand{\url}[1]{#1}
\csname url@samestyle\endcsname
\providecommand{\newblock}{\relax}
\providecommand{\bibinfo}[2]{#2}
\providecommand{\BIBentrySTDinterwordspacing}{\spaceskip=0pt\relax}
\providecommand{\BIBentryALTinterwordstretchfactor}{4}
\providecommand{\BIBentryALTinterwordspacing}{\spaceskip=\fontdimen2\font plus
\BIBentryALTinterwordstretchfactor\fontdimen3\font minus
  \fontdimen4\font\relax}
\providecommand{\BIBforeignlanguage}[2]{{%
\expandafter\ifx\csname l@#1\endcsname\relax
\typeout{** WARNING: IEEEtran.bst: No hyphenation pattern has been}%
\typeout{** loaded for the language `#1'. Using the pattern for}%
\typeout{** the default language instead.}%
\else
\language=\csname l@#1\endcsname
\fi
#2}}
\providecommand{\BIBdecl}{\relax}
\BIBdecl

\bibitem{Popovski.2014}
P.~Popovski, ``Ultra-reliable communication in {5G} wireless systems,'' in
  \emph{1st Int. Conf. on 5G for Ubiquitous Connectivity}, Nov 2014, pp.
  146--151.

\bibitem{Schulz.2017}
P.~Schulz and \textit{et. al.}, ``Latency critical {IoT} applications in {5G:
  Perspective} on the design of radio interface and network architecture,''
  \emph{IEEE Commun. Mag.}, vol.~55, no.~2, pp. 70--78, Feb 2017.

\bibitem{Popovski.2017}
P.~{Popovski}, J.~J. {Nielsen}, C.~{Stefanovic}, E.~d.~{Carvalho}, E.~{Strom},
  K.~F. {Trillingsgaard}, A.~{Bana}, D.~M. {Kim}, R.~{Kotaba}, J.~{Park}, and
  R.~B. {Sorensen}, ``Wireless access for ultra-reliable low-latency
  communication: {Principles} and building blocks,'' \emph{IEEE Network},
  vol.~32, no.~2, pp. 16--23, March 2018.

\bibitem{Singh.2017}
B.~Singh, Z.~Li, and M.~A. Uusitalo, ``Flexible resource allocation for
  device-to-device communication in {FDD} system for ultra-reliable and low
  latency communications,'' in \emph{Adv. Wireless and Opt. Commun. (RTUWO)},
  Nov 2017, pp. 186--191.

\bibitem{Hyoungju.2017}
H.~{Ji}, S.~{Park}, J.~{Yeo}, Y.~{Kim}, J.~{Lee}, and B.~{Shim},
  ``Ultra-reliable and low-latency communications in {5G} downlink: {Physical}
  layer aspects,'' \emph{IEEE Wireless Commun.}, vol.~25, no.~3, pp. 124--130,
  Jun 2018.

\bibitem{Simsek.2016}
M.~Simsek, A.~Aijaz, M.~Dohler, J.~Sachs, and G.~Fettweis, ``{5G-Enabled
  Tactile Internet},'' \emph{IEEE J. Sel. Areas Commun.}, vol.~34, no.~3, pp.
  460--473, Mar 2016.

\bibitem{Aijaz.2017}
A.~Aijaz, M.~Dohler, A.~H. Aghvami, V.~Friderikos, and M.~Frodigh, ``Realizing
  the tactile internet: {Haptic} communications over next generation {5G}
  cellular networks,'' \emph{IEEE Wireless Commun.}, vol.~24, no.~2, pp.
  82--89, Apr 2017.

\bibitem{Durisi.2016}
G.~Durisi, T.~Koch, J.~\"{O}stman, Y.~Polyanskiy, and W.~Yang, ``Short-packet
  communications over multiple-antenna {Rayleigh}-fading channels,'' \emph{IEEE
  Trans. Commun.}, vol.~64, no.~2, pp. 618--629, Feb 2016.

\bibitem{Kotaba.2018}
R.~{Kotaba}, C.~{Navarro Manch\'on}, T.~{Balercia}, and P.~{Popovski}, ``Uplink
  transmissions in {URLLC} systems with shared diversity resources,''
  \emph{IEEE Wireless Communications Letters}, vol.~7, no.~4, pp. 590--593, Aug
  2018.

\bibitem{Nouri.2017}
P.~Nouri, H.~Alves, R.~D. Souza, and M.~Latva-aho, ``Ultra-reliable short
  message cooperative relaying protocols under {Nakagami}-m fading,'' in
  \emph{ISWCS}, Aug 2017.

\bibitem{Lopez.2017}
O.~L.~A. L\'{o}pez, R.~D. Souza, H.~Alves, and E.~M.~G. Fern\'{a}ndez, ``Ultra
  reliable short message relaying with wireless power transfer,'' in
  \emph{ICC}, May 2017, pp. 1--6.

\bibitem{Lopez2.2017}
O.~L.~A. L\'{o}pez, E.~M.~G. Fern\'{a}ndez, R.~D. Souza, and H.~Alves,
  ``Ultra-reliable cooperative short-packet communications with wireless energy
  transfer,'' \emph{IEEE Sensors J.}, vol.~18, no.~5, pp. 2161--2177, 2018.

\bibitem{Lopez3.2017}
O.~L.~A. L\'{o}pez, H.~Alves, R.~D. Souza, and E.~M.~G. Fern\'{a}ndez,
  ``Ultrareliable short-packet communications with wireless energy transfer,''
  \emph{IEEE Signal Process. Lett.}, vol.~24, no.~4, pp. 387--391, Apr 2017.

\bibitem{She.2017}
C.~She, C.~Yang, and T.~Q.~S. Quek, ``Radio resource management for
  ultra-reliable and low-latency communications,'' \emph{IEEE Commun. Mag.},
  vol.~55, no.~6, pp. 72--78, 2017.

\bibitem{Xiao.2018}
Y.~Xiao and M.~Krunz, ``Distributed optimization for energy-efficient fog
  computing in the tactile internet,'' \emph{IEEE J. Sel. Areas Commun.},
  vol.~36, no.~11, pp. 2390--2400, Nov 2018.

\bibitem{Hou.2018}
Z.~Hou, C.~She, Y.~Li, T.~Q.~S. Quek, and B.~Vucetic, ``Burstiness-aware
  bandwidth reservation for ultra-reliable and low-latency communications in
  tactile internet,'' \emph{IEEE J. Sel. Areas Commun.}, vol.~36, no.~11, pp.
  2401--2410, Nov 2018.

\bibitem{Lopez3.2018}
O.~L.~A. L\'opez, E.~M.~G. Fern\'andez, R.~D. Souza, and H.~Alves, ``Wireless
  powered communications with finite battery and finite blocklength,''
  \emph{IEEE Trans. Commun.}, vol.~66, no.~4, pp. 1803--1816, Apr 2018.

\bibitem{Yang.2017}
W.~Yang, G.~Durisi, and Y.~Polyanskiy, ``Minimum energy to send $k$ bits over
  multiple-antenna fading channels,'' \emph{IEEE Trans. Inf. Theory}, vol.~62,
  no.~12, pp. 6831--6853, Dec 2016.

\bibitem{Shariatmadari.2016}
H.~Shariatmadari, Z.~Li, M.~A. Uusitalo, S.~Iraji, and R.~J\"antti, ``Link
  adaptation design for ultra-reliable communications,'' in \emph{ICC}, May
  2016, pp. 1--5.

\bibitem{Dosti.2017}
E.~Dosti, M.~Shehab, H.~Alves, and M.~Latva-aho, ``Ultra reliable communication
  via {CC-HARQ} in finite block-length,'' in \emph{EuCNC}, Jun 2017, pp. 1--5.

\bibitem{Avranas.2018}
A.~{Avranas}, M.~{Kountouris}, and P.~{Ciblat}, ``Energy-latency tradeoff in
  ultra-reliable low-latency communication with retransmissions,'' \emph{IEEE
  J. Sel. Areas Commun.}, vol.~36, no.~11, pp. 2475--2485, Nov 2018.

\bibitem{Shehab.2017}
M.~Shehab, E.~Dosti, H.~Alves, and M.~Latva-aho, ``On the effective energy
  efficiency of ultra-reliable networks in the finite blocklength regime,'' in
  \emph{ISWCS}, Aug 2017, pp. 275--280.

\bibitem{Farayev.2015}
B.~Farayev, Y.~Sadi, and S.~C. Ergen, ``Optimal power control and rate
  adaptation for ultra-reliable {M2M} control applications,'' in \emph{IEEE
  Globecom Workshops}, Dec 2015, pp. 1--6.

\bibitem{Lopez.2018}
\BIBentryALTinterwordspacing
O.~L.~A. L\'opez, H.~Alves, and M.~Latva-aho, ``Rate control under finite
  blocklength for downlink cellular networks with reliability constraints,'' in
  \emph{ISWCS}, Jun 2018, pp. 1--6. [Online]. Available:
  \url{https://arxiv.org/abs/1806.04386}
\BIBentrySTDinterwordspacing

\bibitem{Lopez2.2018}
------, ``Distributed rate control in downlink {NOMA} networks with reliability
  constraints,'' \emph{IEEE Trans. Wireless Commun. (submitted)}, 2018.

\bibitem{She.2018}
C.~She, C.~Yang, and T.~Q.~S. Quek, ``Joint uplink and downlink resource
  configuration for ultra-reliable and low-latency communications,'' \emph{IEEE
  Trans. Commun.}, vol.~66, no.~5, pp. 2266--2280, May 2018.

\bibitem{Corless.1996}
R.~M. Corless, G.~H. Gonnet, D.~E. Hare, D.~J. Jeffrey, and D.~E. Knuth, ``On
  the {LambertW} function,'' \emph{Adv. Comp. Math.}, vol.~5, no.~1, pp.
  329--359, 1996.

\bibitem{Goldsmith.2005}
A.~Goldsmith, \emph{Wireless communications}.\hskip 1em plus 0.5em minus
  0.4em\relax Cambridge university press, 2005.

\bibitem{Tanbourgi.2014}
R.~Tanbourgi, H.~S. Dhillon, J.~G. Andrews, and F.~K. Jondral, ``Effect of
  spatial interference correlation on the performance of maximum ratio
  combining,'' \emph{IEEE Trans. Wireless Commun.}, vol.~13, no.~6, pp.
  3307--3316, Jun 2014.

\bibitem{Simon.2005}
M.~K. Simon and M.-S. Alouini, \emph{Digital communication over fading
  channels}.\hskip 1em plus 0.5em minus 0.4em\relax John Wiley \& Sons, 2005,
  vol.~95.

\bibitem{Bennis.2018}
M.~{Bennis}, M.~{Debbah}, and H.~V. {Poor}, ``Ultrareliable and low-latency
  wireless communication: {Tail}, risk, and scale,'' \emph{IEEE Proc.}, vol.
  106, no.~10, pp. 1834--1853, Oct 2018.

\bibitem{3GPP.2017}
3GPP, ``Service requirements for the {5G} system,'' 3rd Generation Partnership
  Project (3GPP), Tech. Rep. TS 22.261 v16.0.0, Jun 2017.

\bibitem{Polyanskiy.2010}
Y.~Polyanskiy, H.~V. Poor, and S.~Verdu, ``Channel coding rate in the finite
  blocklength regime,'' \emph{IEEE Trans. Inf. Theory}, vol.~56, no.~5, pp.
  2307--2359, May 2010.

\bibitem{Yang.2013}
W.~Yang, G.~Durisi, T.~Koch, and Y.~Polyanskiy, ``Quasi-static {SIMO} fading
  channels at finite blocklength,'' in \emph{ISIT}, Jul 2013, pp. 1531--1535.

\bibitem{Mary.2016}
P.~{Mary}, J.~{Gorce}, A.~{Unsal}, and H.~V. {Poor}, ``Finite blocklength
  information theory: {What} is the practical impact on wireless
  communications?'' in \emph{2016 IEEE Globecom Workshops (GC Wkshps)}, Dec
  2016, pp. 1--6.

\bibitem{Shuguang.2004}
S.~Cui, A.~J. Goldsmith, and A.~Bahai, ``Energy-efficiency of {MIMO} and
  cooperative {MIMO} techniques in sensor networks,'' \emph{IEEE J. Sel. Areas
  Commun.}, vol.~22, no.~6, pp. 1089--1098, Aug 2004.

\bibitem{Zhang2.2017}
Y.~Zhang, ``Physical layer security performance study for wireless networks
  with cooperative jamming,'' 2017.

\bibitem{Thompson.2011}
I.~Thompson, ``{NIST Handbook} of mathematical functions, edited by {Frank WJ
  Olver, \textit{et al.}}'' 2011.

\bibitem{Krishna.2007}
M.~V. Krishna, J.~Xie, W.~M. Lim, M.~A. Do, K.~S. Yeo, and C.~C. Boon, ``A low
  power fully programmable {1MHz} resolution {2.4GHz CMOS PLL} frequency
  synthesizer,'' in \emph{IEEE Biomed. Circuits Syst. Conf.}, Nov 2007, pp.
  187--190.

\bibitem{Joung.2015}
J.~Joung, C.~K. Ho, K.~Adachi, and S.~Sun, ``A survey on
  power-amplifier-centric techniques for spectrum- and energy-efficient
  wireless communications,'' \emph{IEEE Commun. Surveys Tuts.}, vol.~17, no.~1,
  pp. 315--333, Firstquarter 2015.

\end{thebibliography}
\end{document}